\begin{document}

\arraycolsep = 0.3\arraycolsep
\def\R{\mathbb R}
\def\C{\mathbb C}
\def\N{\mathbb N}
\def\Z{\mathbb Z}
\newcommand\J{\mathscr J}
\def\A{\mathscr{A}}
\def\p{\left\langle v\right\rangle}
\def\tp{\left\langle \tilde v\right\rangle}
\def\u{\left\langle u\right\rangle}
\def\T{\mathbb{T}}
\def\energy{\mathcal{E}}
\def\D{\mathcal{D}}
\def\be{\begin{equation}}
\def\ee{\end{equation}}
\def\bea{\begin{eqnarray}}
\def\eea{\end{eqnarray}}
\def\beas{\begin{eqnarray*}}
\def\eeas{\end{eqnarray*}}
\def\supp{\mathrm{supp}\,}
\def\g{\partial}
\def\t{\bar{\partial}}
\def\a{{\bf a}}
\def\K{\mathbb{S}}
\def\l{\lambda}
\def\pa{\partial }
\def\energynergy{{\overline{\mathcal E}^\varphi}}
\def\w{w_{\delta}}
\def\S{\mathcal S}
\def\der{\pa_r^a\slashed\nabla^\beta}
\def\car{\pa^\nu}
\def\d{\text{div}}
\def\sn{\slashed\nabla}
\def\M{\mathscr M}
\def\N{\mathscr N}
\def\drho{\delta\rho}
\def\dl{\delta\lambda}
\def\bcr{\begin{color}{red}}
\def\ec{\end{color}}
\def\lv{\left\vert}
\def\rv{\right\vert}

\def\norm{\mathcal S^N}
\def\vortnorm{\mathcal B^N}
\def\energy{\mathcal E^N}
\def\dissipation{\mathcal D^N}
\def\pa{\partial}
\newcommand{\prfe}{\hspace*{\fill} $\Box$

\smallskip \noindent}

\sloppy
\newtheorem{maintheorem}{Theorem}
\newtheorem{theorem}{Theorem}[section]
\newtheorem{definition}[theorem]{Definition}
\newtheorem{proposition}[theorem]{Proposition}
\newtheorem{example}[theorem]{Example}
\newtheorem{corollary}[theorem]{Corollary}
\newtheorem{lemma}[theorem]{Lemma}
\newtheorem{remark}[theorem]{Remark}

\renewcommand{\theequation}{\arabic{section}.\arabic{equation}}

\title{Turning point principle for relativistic stars}
\author{Mahir Had\v zi\'c\thanks{Department of Mathematics,
        University College London, London, UK} , 
        Zhiwu Lin\thanks{Department of Mathematics,
        Georgia Institute of Technology, Atlanta, USA}}
        
\maketitle

\begin{abstract}
Upon specifying an equation of state, spherically symmetric steady states of the Einstein-Euler system are embedded in 1-parameter families of solutions, characterized by the
value of their central redshift. In the 1960's Zel'dovich~\cite{Ze1963} and Wheeler~\cite{HaThWaWh} formulated a turning point principle which states that the spectral stability can be exchanged to instability and vice versa only at the extrema of mass along the mass-radius curve.  Moreover the bending orientation at the extrema determines whether a growing mode is gained or lost. We prove the turning point principle and provide a detailed description of the linearized dynamics. One of the corollaries of our result is that the number of growing modes grows to infinity as the central redshift increases to infinity.
\end{abstract}

\tableofcontents

\section{Introduction}

In this work we rigorously establish the turning point principle for  radial relativistic stars along the so-called mass-radius curve of 1-parameter family of stationary solutions, see Theorem~\ref{T:TPP}. This principle was formulated by Zel'dovich~\cite{Ze1963} and Wheeler, see~\cite{HaThWaWh} (pages 60--66), and it is also referred to as the {\em M(R)-method}. In the radial setting this is a powerful tool predicting the exact number of unstable eigenmodes for the linearized radial Einstein-Euler system around its dynamic equilibria, based solely on the the location of the equilibrium on the mass-radius curve. 

In our previous work~\cite{HaLiRe} jointly with Rein, among other things we introduced the so-called separable Hamiltonian formulation of the linearized Einstein-Euler system, which highlights the symplectic structure in the problem. This proved crucial to a refined understanding of the linearized flow and its decomposition into invariant subspaces, where we used a general framework developed recently by Lin and Zeng~\cite{LinZeng2020}. The second main result of this paper is a precise index formula which expresses the number of unstable modes as the difference of the negative Morse index of a certain Schr\"odinger type operator~\eqref{E:SIGMAKAPPADEF} and a quantity we call the {\em winding index} which reflects the winding properties of the mass-radius curve, see Definition~\ref{D:INDEX} and Theorem~\ref{T:NEGATIVEMODES}. This result completes a related result from~\cite{HaLiRe} by including equilibria with certain exceptional values of the central redshift parameter.

The unknowns are the 4-dimensional spacetime $M$ and a Lorentzian metric $g$ with signature $(-,+,+,+)$, while the fluid unknowns are the {\em density} $\rho$, {\em pressure} $p$, and the {\em 4-velocity} $u^\mu$, $\mu=0,1,2,3$, which is normalized to be a future pointing unit timelike vector 
\be\label{E:VELOCITYNORM}
g^{\mu\nu}u_\mu u_\nu = -1.
\ee

The unknwons are dynamically coupled  through the Einstein field equations
\begin{align}
G_{\mu\nu} = 8\pi T_{\mu\nu}, \ \ \mu,\nu=0,1,2,3,
\end{align}
where $G_{\mu\nu}$ is the Einstein tensor and $T_{\mu\nu}$ the energy-momentum tensor given by 
\begin{align}
T_{\mu\nu} = (\rho+p)g_{\mu\nu} + p u_{\mu}u_\nu, \ \ \mu,\nu=0,1,2,3.
\end{align}

We shall work in radial symmetry and use the Schwarzschild coordinates where the metric takes the form
\be \label{E:METRIC}
ds^2=-e^{2\mu(t,r)}dt^2+e^{2\lambda(t,r)}dr^2+
r^2(d\theta^2+\sin^2\theta\,d\varphi^2).
\ee
and the $4$-velocity takes the form $u=(u^0,u,0,0)$. By~\eqref{E:VELOCITYNORM} and~\eqref{E:METRIC} we get 
\begin{align}
u^0= e^{-\mu}\sqrt{1+e^{2\lambda} u^2} =: e^{-\mu} \u.
\end{align}
The field equations become~\cite{HaLiRe}
\be
e^{-2\lambda} (2 r \lambda' -1) +1 =
8\pi r^2 \left(\rho + e^{2\lambda}(\rho + p) u^2\right),\label{E:LAMBDAEQN}
\ee
\be
e^{-2\lambda} (2 r \mu' +1) -1 =
8\pi r^2 \left(p + e^{2\lambda}(\rho + p) u^2\right),\label{E:MUEQN}
\ee
\be
\dot \lambda =
- 4 \pi  r e^{\mu + 2\lambda}  \u \, u\, (\rho + p)
\label{E:LAMBDADOTEQN},
\ee
\be
e^{- 2 \lambda} \left(\mu'' + (\mu' - \lambda')(\mu' + \frac{1}{r})\right)
- e^{-2\mu}\left(\ddot \lambda + \dot\lambda (\dot \lambda - \dot \mu)\right)
= 8 \pi p. \label{ee2ndo} 
\ee
The Euler equations become
\bea 
\dot \rho + e^{\mu} \frac{u}{\u} \rho' 
+ \left(\rho + p\right)
\biggl[\dot \lambda + e^\mu \frac{u}{\u}
\left(\lambda' + \mu'+ \frac{2}{r}\right)
+ e^\mu \frac{u'}{\u}
+ e^{2\lambda} \frac{u}{\u}
\frac{\dot \lambda u + \dot u}{\u}\biggr]
&=&
0,\qquad \label{rhod}\\
(\rho +p) \left[e^{2\lambda} \left( \dot u + 2 \dot \lambda u\right)
  + e^{\mu} \u \mu'
+ e^{\mu+2\lambda} \frac{u}{\u}
\left(u' + \lambda' u\right)\right]
+ e^{\mu}\u p' + e^{2\lambda} u \,\dot p
&=&
0.\qquad \label{ud}
\eea

There are however too many fluid unknowns, so to close 
the system we impose a {\em barotropic} equation of state between the pressure and the density. We assume that 
\[
p = P(\rho)
\]
where the state function $P$ satisfies the following assumptions:
\begin{enumerate}
\item[(P1)]
$P \in C^1([0,\infty[)$ with $P'(\rho) >0$ for $\rho>0$,
$P(0)=0$, 
\item[(P2)]
For some $\zeta>0$ there exists a $C^1$-function $f:[0,\zeta]\to\mathbb R$ such that $f(0)=0$ and
\begin{align}
P(\rho) = k \rho^\gamma (1+ f(\rho))
\end{align}
for some $\frac43<\gamma<2$.
This in particular implies that there exists a constant $c_1$ such that $|f(\rho)|\le c_1\rho$ on $[0,\zeta]$ and therefore
\begin{align}\label{E:P2}
P(\rho) = k \rho^\gamma + O_{\rho\to0+}(\rho^{\gamma+1}).
\end{align} 

\item[(P3)] 
There exist the inverse of $P$ on $[0,\infty)$ and  constants $0<c_s^2\le 1$, $c_2>0$ such that   
\be \label{Pass2}
|p - c_s^2\rho| \leq c_2 p^{1/2}\ \mbox{for all}\ p>0.
\ee
\item[(P4)]
For any $\rho>0$ we have 
\[
0<\frac{dP}{d\rho}\le 1.
\]
This is a causality assumption and states that the speed of sound inside the star never exceeds the speed of light.
\end{enumerate}

Assumptions (P1)--(P4), or some qualitatively similar version of those, are quite commonly used in the description of gaseous stars in relativistic astrophysics, see~\cite{He,Ma1998, HaLiRe} and references therein. For a detailed study of the equations of states for neutron stars see~\cite{HaPoYa2007}. Assumption (P2) states that in the region close to vacuum ($0<\rho\ll1$) the equation of state is effectively described by the classical polytropic power law $P(\rho)=k\rho^\gamma$. On the other hand, in the regime where the density is very large ($\rho\gg1$) assumption (P3) states that to the leading order $P(\rho)=c_s^2\rho$. Here $0<c_s\le1$, which is also a consequence of the causality assumption stated in assumption (P4). We also observe that assumptions (P1) and (P4) imply that $P(\rho)\le \rho$.

We shall refer to the system of equations~\eqref{E:LAMBDAEQN}--\eqref{ud} together with assumptions (P1)--(P4) as the spherically symmetric Euler-Einstein system and use the abbreviation EE-system.

There are two basic conserved quantities - the ADM mass
\begin{align}
M(\rho) = 4\pi \int_0^\infty \rho(r) \,r^2dr
\end{align}
and the total particle (baryon) number 
\begin{align}
N(\rho) = 4\pi  \int_0^\infty e^\l r^2 n(\rho)\,r^2dr, \ \ n(\rho) := \exp\left(\int_1^\rho \frac{ds}{s+P(s)}\right).
\end{align}

We look for compactly supported steady states of the EE-system~\eqref{E:LAMBDAEQN}--\eqref{ud} satisfying $u=0$. Equation~\eqref{rhod} is then automatically satisfied and equation~\eqref{ud} reduces to the famous Tolman-Oppenheimer-Volkov relation:
\be\label{E:TOV}
(\rho+p)\mu' + p' = 0.
\ee
We define 
\be\label{E:QDEF}
Q(\rho) := \int_0^\rho \frac{P'(s)}{s+P(s)} ds,\ \rho\geq 0, 
\ee
so that~\eqref{E:TOV} immediately implies
\[
Q(\rho(r))+\mu(r) = \text{const.}
\]
We introduce the unknown $y(r)=\text{const.}-\mu(r)$, so that $\rho$ can now be expressed through
\be \label{E:GDEF}
\rho = g(y) := \left\{
\begin{array}{ccl}
  Q^{-1}(y)&,&y>0,\\
  0&,&y\leq 0.
\end{array}
\right.
\ee
The field equation~\eqref{E:LAMBDAEQN} with $u=0$ can be rewritten in the form $\pa_r\left(r-e^{-2\l}r\right)=8\pi r^2\rho$, which immediately yields
\begin{align}\label{E:LAMBDARELATION}
e^{-2\l(r)} = 1 - \frac{2m(r)}{r}, \ \ m(r) = \int_0^r 4\pi s^2\rho(s)\,ds.
\end{align}
Plugging the above into the field equation~\eqref{E:MUEQN} with $u=0$, we finally obtain 
the fundamental steady state equation satisfied by $y$:
\be \label{E:YEQN}
y'(r)= - \frac{1}{1-2 m(r)/r} \left(\frac{m(r)}{r^2} + 4 \pi r p(r)\right).
\ee
Here $p$ is given in terms of $y$ by 
the relations
\begin{align}
p(r) & = h(y(r))= P(g(y(r))).
\end{align}


The existence of compactly supported steady states follows for example from the work of Ramming and Rein~\cite{RaRe}, which we state in the following 
proposition for readers' convenience.
\begin{proposition}[\cite{RaRe}]\label{P:EXISTENCE}
Under the assumptions (P1)--(P4) on the equation of state
for any central value
\be\label{E:YINITIAL}
y(0)=\kappa>0
\ee
there exists a unique smooth solution $y=y_\kappa$
to \eqref{E:YEQN}, which is defined on $[0,\infty)$
and has a unique zero at some radius $R_\kappa>0$. The value $R_\kappa$ is the radius of the star.
\end{proposition}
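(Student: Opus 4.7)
The plan is to proceed in three steps: local existence at $r=0$, extension up to the first zero of $y$, and a compactness-of-support argument.

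\emph{Local existence.} At $r=0$, since $g(\kappa)>0$ the mass integral satisfies $m(r)=\int_0^r 4\pi s^2 g(y(s))\,ds = O(r^3)$, so the apparent singularity in~\eqref{E:YEQN} is removable: the right-hand side extends to a smooth function of $(r,y)$ near $(0,\kappa)$ which vanishes at $r=0$. Recasting~\eqref{E:YEQN} as the integral equation $y(r)=\kappa-\int_0^r F(s,y(s))\,ds$ with $F$ continuous and locally Lipschitz in $y$, one obtains a unique local $C^1$-solution by a standard contraction-mapping argument; higher regularity follows by bootstrap.

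\emph{Extension.} Define $R_\kappa := \sup\{r>0:y(s)>0\text{ for all }s\in[0,r)\}\in(0,\infty]$. On $[0,R_\kappa)$ the density and pressure are positive, so the right-hand side of~\eqref{E:YEQN} is negative and $y$ is strictly decreasing. The remaining concern is the metric denominator $1-2m(r)/r$: starting from $1$ at $r=0$, were $2m/r$ to reach $1$ before $y$ reached $0$, the divergence of~\eqref{E:YEQN} would drive $y$ through zero first, a contradiction. Hence $1-2m/r>0$ throughout $[0,R_\kappa)$ and the solution extends smoothly to this entire interval.

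\emph{Finiteness of $R_\kappa$.} Suppose for contradiction $R_\kappa=\infty$. Since $y$ decreases and is bounded below by $0$, $y(r)\to y_\infty\geq 0$. If $y_\infty>0$ then $\rho(r)\geq g(y_\infty)>0$, so $m(r)\geq\frac{4\pi}{3}g(y_\infty)r^3\to\infty$ and $2m/r\to\infty$, contradicting the bound $2m/r<1$. If $y_\infty=0$, the polytropic tail~(P2) gives $g(y)\sim c_\gamma y^{1/(\gamma-1)}$ as $y\to 0^+$; inserting this into~\eqref{E:YEQN}, the dominant term $m(r)/r^2$ forces $y$ to satisfy a lower bound of the form $y(r)\gtrsim m_\ast/r$ for large $r$ at which $m(r)\geq m_\ast>0$, whence $\rho(r)\gtrsim r^{-1/(\gamma-1)}$. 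Because $\gamma>4/3$ implies $1/(\gamma-1)<3$, the weight $r^2\rho$ is not integrable at infinity, so $m(r)\to\infty$, returning to the contradiction of the previous case.

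The main obstacle is precisely this last asymptotic analysis, which is where the polytropic exponent $\gamma>4/3$ from~(P2) enters critically, the same Lane-Emden-type threshold familiar from the Newtonian theory of compactly supported polytropes. Uniqueness follows from standard ODE theory on $[0,R_\kappa)$, and the single zero at $R_\kappa$ is simple because $m(R_\kappa)>0$ gives $y'(R_\kappa)<0$.
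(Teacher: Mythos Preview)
The paper does not prove Proposition~\ref{P:EXISTENCE}; it is quoted from~\cite{RaRe} and treated as a known input (see also the remark that follows). So there is no in-paper argument to compare against, and your sketch is an attempt to supply one.

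The local existence and monotonicity steps are fine, but two points need work.

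The extension argument --- ``if $2m/r$ reached $1$ the divergence of $y'$ would drive $y$ through zero first'' --- is only heuristic; you have not excluded that $1-2m/r$ vanishes to higher order while $y$ stays positive. The clean fix is the Buchdahl bound $2m/r\le 8/9$, valid along any TOV solution, which the paper itself invokes later (proof of Lemma~\ref{L:SMALLKAPPA}).

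More seriously, the finite-extent argument in the sub-case $y_\infty=0$ does not close for the full range of~$\gamma$. From $y(r)\gtrsim m_\ast/r$ you get $\rho\gtrsim r^{-\alpha}$ with $\alpha=1/(\gamma-1)<3$, hence $m(r)\gtrsim r^{3-\alpha}\to\infty$. But ``the contradiction of the previous case'' is $2m/r>1$, and $m\to\infty$ alone does not give this: you need $m/r\to\infty$, which from your bound follows only when $3-\alpha>1$, i.e.\ $\gamma>3/2$. For $\gamma\in(4/3,3/2]$ one must iterate: integrating~\eqref{E:YEQN} from $r$ to $\infty$ actually yields the sharper inequality $y(r)\ge m(r)/r$, whence $m'\ge C\,r^{2-\alpha}m^\alpha$ with $\alpha>1$ and $\int^\infty r^{2-\alpha}\,dr=\infty$, forcing $m$ to blow up at a finite radius --- the contradiction you want.
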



\begin{remark}
The existence of compactly supported radial steady star solutions to the Einstein-Euler system is well-known, see~\cite{He,NiUg,Si,RaRe} and references therein.  The assumptions on the equation of state, in particular the lower bound on $\gamma$ in {\em (P1)} can be relaxed, and the finite extent property can also be shown in different ways~\cite{NiUg,He,RaRe}.
\end{remark}

Given $y_\kappa$, we define $\rho_\kappa$ and $\l_\kappa$ via~\eqref{E:GDEF} and~\eqref{E:LAMBDARELATION} respectively. The metric coefficient $\mu_\kappa$ is then 
obtained through the formula
\begin{align}
\mu_\kappa(r) = \mu_\kappa(R_\kappa)- y_\kappa(r), \ \ \mu_\kappa(R_\kappa) = \lim_{r\to\infty} y_\kappa(r).
\end{align}
For any $\kappa>0$ we refer to the triple $(\rho_\kappa,\mu_\kappa,\l_\kappa)$ as the steady state of the Euler-Einstein system.

\begin{remark}[Central redshift]
The central redshift $z$ of the star $(\rho_\kappa,\mu_\kappa,\l_\kappa)$ measures the redshift of a photon emitted at the center
of the star and received at its boundary. It is given by the formula
\begin{align}
z = \frac{e^{\mu_\kappa(R_\kappa)}}{e^{\mu_\kappa(0)}} - 1 = \frac{e^{y_\kappa(0)}}{e^{y_\kappa(R_\kappa)}} - 1 = e^\kappa -1.
\end{align}
Therefore $\kappa$ and $z$ are in a 1-1 relationship and, by slight abuse of terminology, we continue to call $\kappa$ the {\em central redshift} parameter.
\end{remark}

At the heart of our analysis is the formulation of the linearized flow as a separable Hamiltonian system derived in~\cite{HaLiRe}.
The natural function spaces contain weights that for each $\kappa>0$ depend on the solution $(\rho_\kappa,\mu_\kappa,\l_\kappa)$. 
An important role is played by the quantity
\begin{align}
\Psi_\kappa :=  e^{-\mu_\kappa} \frac{P'(\rho_\kappa)}{\rho_\kappa+p_\kappa}, \ \ r\in[0,R_\kappa).
\end{align}
It is easy to check using property (P1) that the function 
\begin{align}\label{E:PSIKAPPAINVERSE}
\Psi_\kappa^{-1} : = \begin{cases} e^{\mu_\kappa} \frac{\rho_\kappa+p_\kappa}{P'(\rho_\kappa)}, & r\in[0,R_\kappa] \\
0, & r> R_\kappa
\end{cases}
\end{align}
is $C^0$ on $[0,\infty)$. It is in fact slightly better - a simple consequence of the Hopf lemma is that close to the star boundary $\rho \sim (R_\kappa-r)^{\frac1{\gamma-1}}$ and therefore
\be\label{E:PSIKAPPAINVASYMP}
\Psi_\kappa^{-1}  \sim_{r\to R_\kappa} (R_\kappa-r)^{\frac{2-\gamma}{\gamma-1}},
\ee
where we have used~\eqref{E:PSIKAPPAINVERSE} and the property (P2).
Note that we abused the notation slightly by denoting $\Psi_\kappa^{-1}$ the extension of the reciprocal of $\Psi_\kappa$ on $[0,R_\kappa)$ to $[0,\infty)$.

\begin{definition}[Function spaces]\label{D:XDEF_EE}
Let the equation of state $\rho\to P(\rho)$ satisfy assumptions (P1)--(P4) and let $(\rho_\kappa,\mu_\kappa,\l_\kappa)$ be the 1-parameter family of steady states given 
by Proposition~\ref{P:EXISTENCE}.
\begin{itemize}
\item[(a)]
  The Hilbert space $X_\kappa$ is the space of all spherically symmetric
  functions in the weighted $L^2$ space on the set
  $B_\kappa=B_{R_\kappa}$ (the ball with radius $R_\kappa$ which is
  the support of $\rho _{\kappa}$) with weight
  $e^{2\mu_\kappa +\lambda_\kappa}\Psi_{\kappa }$ and the corresponding inner product,
  $Y_\kappa$ is the space of
  radial functions in $L^2 \left( B_\kappa\right)$, and the phase space for the
  linearized Einstein-Euler system is
  $ X_{\kappa}\times Y_\kappa.$  
\item[(b)]
  For $\rho\in X_\kappa$ the {\em induced modified potential $\bar \mu$}
  is defined as
  \begin{align}\label{E:MUBARDEF}
    \bar\mu(r) = \bar\mu_\rho(r)
    : =- e^{-\mu_\kappa-\l_\kappa}\int_r^\infty \frac1s\,
    e^{\mu_\kappa(s)+\l_\kappa(s)}(2s\mu_\kappa'(s)+1)\,\l(s)\,ds,
  \end{align}
  where
  \be\label{E:LAMBDAOFRHODEF}
  \l(r) = 4\pi \frac{e^{2\l_\kappa}}r \int_0^r s^2 \rho(s)\,ds, \ \ r\ge0
  \ee
  where $\rho$ is extended by $0$ to the region $r> R_\kappa$.
\item[(c)]
  The operators
  $L_\kappa \colon X_\kappa
  \to X_\kappa^\ast$ and $\mathbb L_\kappa \colon X_\kappa \to X_\kappa$
  are defined by
  \be\label{E:TILDELKAPPADEF}
  L_\kappa \rho : =
  e^{2\mu_\kappa+\lambda_\kappa}\Psi_{\kappa }\rho +e^{\mu_\kappa +\lambda_\kappa}\bar{\mu}_\rho.
  \ee
  \be\label{E:MATHCALLKAPPADEF}
  \mathbb L_\kappa \rho : = e^{-2\mu_\kappa-\lambda_\kappa}\Psi^{-1}_{\kappa}
L_\kappa \rho
  = \rho + e^{-\mu_\kappa}\Psi_\kappa^{-1}\bar{\mu}_\rho
  \ee
\end{itemize}
\end{definition}
Here the dual pairing is realized through the $L^2$-inner product, so that 
\[
\langle L_\kappa \rho, \bar\rho\rangle
= \left(\mathbb L_\kappa \rho, \bar\rho\right)_{X_\kappa},
\quad \rho,\bar\rho\in X_\kappa.
\]

As proved in~\cite{HaLiRe} (Section 5.2)
the formal linearization of the spherically symmetric Einstein-Euler system
  takes the separable Hamiltonian form 
  \begin{equation}\label{E:FIRSTORDERFLOW}
    \frac{d}{dt}
    \begin{pmatrix} \rho \\ v \end{pmatrix}
    =J^{\kappa }\mathcal L^{\kappa }
    \begin{pmatrix} \rho \\ v \end{pmatrix},    
  \end{equation}
where $\left( \rho ,v\right) \in X$, and 
\begin{equation}
  J^{\kappa } :=
  \begin{pmatrix}
  0 & A_{\kappa } \\ 
  -A_{\kappa }^{\prime } & 0
  \end{pmatrix},\quad 
  \mathcal L^{\kappa } :=
  \begin{pmatrix}
  L_{\kappa } & 0 \\ 
  0 & \mathrm{id}
  \end{pmatrix}.
\label{defn-J-K-kappa}
\end{equation}
Moreover $J^{\kappa}:X_\kappa^{\ast}\times Y_\kappa^\ast\rightarrow X_\kappa\times Y_\kappa$ and $\mathcal L^{\kappa}:X_\kappa\times Y_\kappa\rightarrow X_\kappa^{\ast}\times Y_\kappa^\ast$
are anti-self-dual and self-dual respectively. 
Here the operators $A_\kappa : Y_\kappa^\ast\to X_\kappa$ and its dual $A_\kappa':X_\kappa^\ast\to Y_\kappa$  
are given by 
\be \label{E:AKAPPADEF}
  A_{\kappa }v
  :=-\frac{1}{r^{2}} \frac{d}{dr}\left( r^{2}e^{-\frac{3}{2}\lambda_{\kappa }}
  n_{\kappa }^{\frac{1}{2}}v\right), \ \ \ \ 
 A_{\kappa }^\prime \rho :=e^{-\frac{3}{2}\lambda _{\kappa }}
 n_{\kappa }^{\frac{1}{2}} \frac d{dr}\rho .
\ee
Operators $A_\kappa$ and $A_\kappa'$ are densely defined and closed, see Section 5.2 of~\cite{HaLiRe}.  The conserved energy associated with~\eqref{E:FIRSTORDERFLOW} is given by 
\begin{align}
\mathcal E = 
\left(\mathcal L_\kappa \begin{pmatrix} \rho \\ v \end{pmatrix}, \begin{pmatrix} \rho \\ v \end{pmatrix} \right)_{X_\kappa\times Y_\kappa} 
= \left(L_\kappa\rho,\rho\right)_{X_\kappa} + \|v\|_{Y_\kappa}^2.
\end{align}

It is important to note that the first order formulation~\eqref{E:FIRSTORDERFLOW} can be equivalently replaced
by a second order formulation
\begin{align}\label{E:SECONDORDER}
\frac{d^2}{dt^2}v
  +A_{\kappa }^{\prime}L_{\kappa }A_{\kappa}v=0,
\end{align}
which at a formal level follows trivially from~\eqref{E:FIRSTORDERFLOW} by taking a time derivative. It is clear from~\eqref{E:SECONDORDER} that the steady state is spectrally stable if and only if the quadratic form 
\[
\langle A_{\kappa }^{\prime}L_{\kappa }A_{\kappa}v, v\rangle = \langle L_{\kappa }A_{\kappa}v, A_\kappa v\rangle
\]
is positive definite. In fact, the number of negative eigenvalues corresponds to the negative Morse index associated with the operator $L_\kappa\big|_{\overline{R(A_\kappa)}}$.
The space $\overline{R(A_\kappa)}\subset X_\kappa$ corresponds to the set of all {\em dynamically accessible perturbations}. It is not hard to see (Section 5.3 of~\cite{HaLiRe}) that 
\be\label{E:DYNACC}
\overline{R(A_\kappa)} = \left\{\rho\in X_\kappa\ \big| \int_{B_\kappa} \rho \,dx = 0\right\}.
\ee 
\begin{remark}
A simple consequence of the above discussion is the {\em Chandrasekhar stability critetrion}~\cite{Chandrasekhar1964,HaLiRe}: steady state $(\rho_\kappa,\mu_\kappa,\l_\kappa)$ is spectrally stable if and only if $\langle L_\kappa\rho,\rho\rangle\ge0$ for all $\rho\in X_\kappa$ with mean 0.
\end{remark}

\begin{remark}[Negative Morse index]
For a linear operator $L:H\to H^\ast$, $H$ a Hilbert space, the negative Morse index $n^-(L)$ of $L$ is by definition
the maximal dimension of subspaces of
$H$ on which $\langle L\cdot,\cdot\rangle <0$.  
\end{remark}

A crucial tool in our proof of the turning point principle is the so-called {\em reduced operator} discovered in~\cite{HaLiRe}:
\be\label{E:SIGMAKAPPADEF}
\Sigma_\kappa = - \frac1{4\pi}\Delta_\kappa - e^{\l_\kappa}\Psi_\kappa^{-1},
\ee
where
\be
\Delta_\kappa : =
\frac{e^{\mu_\kappa+\l_\kappa}}{ r^2}
\frac{d}{dr}\left(\frac{e^{-\mu_\kappa-3\l_\kappa}r^2}{2r\mu_\kappa'+1}
\frac{d}{dr} \left(e^{\mu_\kappa+\l_\kappa}\cdot\right)\right).
\ee
The operator $\Sigma_\kappa : \dot H^1_r \to (\dot H^1_r)^\ast$ is selfdual.
From~\eqref{E:MUBARDEF} and~\eqref{E:LAMBDAOFRHODEF} it is clear that $\Delta_\kappa \bar\mu_\rho = e^{\mu_\kappa+\l_\kappa} \rho$ for any $\rho\in X_\kappa$
and therefore
\be\label{E:FUNDAMENTAL}
L_\kappa \rho = -e^{\mu_\kappa}\Psi_\kappa \, \Sigma_\kappa \bar\mu_\rho, \ \ r\in[0,R_\kappa].
\ee

One of the key properties of the reduced operator is described in the following lemma.


\begin{lemma}\label{L:NEGMORSEINDEX}
Let the equation of state $\rho\to P(\rho)$ satisfy assumptions (P1)--(P4) and let $(\rho_\kappa,\mu_\kappa,\l_\kappa)$ be the 1-parameter family of steady states given 
by Proposition~\ref{P:EXISTENCE}. Then for every $\kappa>0$ we have 
\begin{align}\label{E:NEGMORSEINDEX}
n^-(L_\kappa\big|_{X_\kappa})=n^-(\mathbb L_\kappa) = n^-(\Sigma_\kappa).
\end{align}
\end{lemma}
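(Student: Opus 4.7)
The approach proceeds in two parts, corresponding to the two equalities in \eqref{E:NEGMORSEINDEX}. The first equality $n^-(L_\kappa|_{X_\kappa}) = n^-(\mathbb L_\kappa)$ is essentially tautological: the definition $\mathbb L_\kappa = e^{-2\mu_\kappa-\lambda_\kappa}\Psi_\kappa^{-1}L_\kappa$ combined with the weight defining $(\cdot,\cdot)_{X_\kappa}$ yields $(\mathbb L_\kappa \rho, \bar\rho)_{X_\kappa} = \langle L_\kappa \rho, \bar\rho\rangle$ for all $\rho, \bar\rho \in X_\kappa$, exactly as recorded after Definition~\ref{D:XDEF_EE}. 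Consequently the two associated quadratic forms on $X_\kappa$ coincide, and so do their negative Morse indices (each being the supremum over dimensions of subspaces on which the form is strictly negative).

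For the key equality $n^-(\mathbb L_\kappa) = n^-(\Sigma_\kappa)$, I plan a continuous deformation argument. Introduce the $1$-parameter families
\[
\Sigma_\kappa^{(t)} := -\tfrac{1}{4\pi}\Delta_\kappa - t\, e^{\lambda_\kappa}\Psi_\kappa^{-1}, \qquad \mathbb L_\kappa^{(t)}\rho := \rho + t\, e^{-\mu_\kappa}\Psi_\kappa^{-1}\bar\mu_\rho, \qquad t \in [0,1],
\]
acting respectively on $\dot H^1_r$ and $X_\kappa$. At $t=0$ both reduce to non-negative operators (the former by integration by parts using $2r\mu_\kappa'+1>0$, the latter equal to the identity), so $n^-(\Sigma_\kappa^{(0)}) = n^-(\mathbb L_\kappa^{(0)}) = 0$; at $t=1$ they recover $\Sigma_\kappa$ and $\mathbb L_\kappa$. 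The identity \eqref{E:FUNDAMENTAL} extends straightforwardly to $L_\kappa^{(t)}\rho = -e^{\mu_\kappa}\Psi_\kappa\,\Sigma_\kappa^{(t)}\bar\mu_\rho$ on $[0,R_\kappa]$ for every $t > 0$, and both families are compact perturbations of non-negative selfadjoint operators, so their negative spectra are discrete and finite.

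The pivotal step is a kernel correspondence: for every $t > 0$, the map $\rho \mapsto \bar\mu_\rho$ restricts to a linear bijection $\ker\mathbb L_\kappa^{(t)} \to \ker\Sigma_\kappa^{(t)}$. One inclusion follows algebraically from $\Delta_\kappa\bar\mu_\rho \propto e^{\mu_\kappa+\lambda_\kappa}\rho$ together with the automatic vanishing $\Sigma_\kappa^{(t)}\bar\mu_\rho = 0$ on $B_\kappa^c$; the reverse uses that any $\phi \in \ker\Sigma_\kappa^{(t)}$ satisfies $\Delta_\kappa\phi = 0$ outside $B_\kappa$ (since $\Psi_\kappa^{-1}$ vanishes there) and hence is of the form $\bar\mu_\rho$ for $\rho \propto e^{-\mu_\kappa-\lambda_\kappa}\Delta_\kappa\phi \in X_\kappa$. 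As $t$ varies in $[0,1]$, the Morse indices of both families are non-decreasing (the quadratic forms are pointwise non-increasing in $t$, using $\int e^{\mu_\kappa+\lambda_\kappa}\bar\mu_\rho\rho\,dx \le 0$ via integration by parts) and jump only at values $t_c$ admitting nontrivial kernels. By the kernel correspondence these crossing values coincide, and at each such $t_c$ a direct analytic perturbation computation---using the kernel relation $\bar\mu_{\rho_0} = -t_c^{-1}e^{\mu_\kappa}\Psi_\kappa\rho_0$---gives $\frac{d\alpha}{dt}|_{t_c} = -t_c^{-1} < 0$ for the $\mathbb L_\kappa^{(t)}$-eigenvalue and an analogously negative derivative for the $\Sigma_\kappa^{(t)}$-eigenvalue. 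Hence at every crossing both Morse indices jump by the same multiplicity $\dim\ker(\cdot)$, yielding the equality at $t = 1$.

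The main obstacle is not the algebra but the functional-analytic bookkeeping: one must verify that the families $\Sigma_\kappa^{(t)}$ and $\mathbb L_\kappa^{(t)}$ are analytic in Kato's sense on their respective Hilbert spaces with uniform resolvent estimates, handle possible accumulations or higher-multiplicity crossings, and reconcile the different underlying spaces ($\dot H^1_r$ versus $X_\kappa$) via the map $\rho \mapsto \bar\mu_\rho$. The cleanest formulation of this homotopy-with-kernel-matching is likely via an abstract index theorem in the spirit of the Lin--Zeng framework \cite{LinZeng2020} already invoked in \cite{HaLiRe}.
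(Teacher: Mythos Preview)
Your argument is correct in outline but takes a genuinely different and substantially more elaborate route than the paper. The paper's proof is a direct three-line variational comparison: it invokes the two-sided quadratic form inequalities
\[
\langle\Sigma_\kappa \mu,\mu\rangle \ge \langle L_\kappa\rho_\mu,\rho_\mu\rangle
\quad\text{and}\quad
\langle L_\kappa\rho,\rho\rangle \ge \langle\Sigma_\kappa \bar \mu_\rho,\bar\mu_\rho\rangle,
\]
with $\rho_\mu = -e^{-\mu_\kappa}\Psi_\kappa^{-1}\mu$ and $\bar\mu_\rho$ as in~\eqref{E:MUBARDEF}, which are established in Theorem~5.15 of~\cite{HaLiRe}. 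These immediately give $n^{\le}(L_\kappa)=n^{\le}(\Sigma_\kappa)$, and the kernel identification via~\eqref{E:FUNDAMENTAL} then yields $n^-(L_\kappa)=n^-(\Sigma_\kappa)$. No deformation, no eigenvalue tracking, no analytic perturbation theory.

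Your homotopy-with-kernel-matching argument recovers the same conclusion but at the cost of the functional-analytic bookkeeping you yourself flag: analyticity of the families, transversality of eigenvalue crossings, and reconciling the two Hilbert spaces through $\rho\mapsto\bar\mu_\rho$. All of this can be made rigorous (the families are affine in $t$, your derivative computation $\alpha'(t_c)=-t_c^{-1}<0$ is correct, and the kernel correspondence you sketch does work), so there is no genuine gap. What your approach buys is self-containment: you avoid importing the variational inequalities from~\cite{HaLiRe}. What the paper's approach buys is brevity and conceptual clarity---the two inequalities say precisely that each quadratic form dominates the other after pushforward/pullback along natural maps, which is exactly the statement that the non-positive cones have the same dimension.
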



\begin{proof}
By Theorem 5.15 in~\cite{HaLiRe} for every $\mu\in \dot H^1_r$ we have the bound 
$
\langle\Sigma_\kappa \mu,\mu\rangle \ge \langle L_\kappa\rho_\mu,\rho_\mu\rangle,
$
where
$
\rho_\mu = - e^{-\mu_\kappa}\Psi_\kappa^{-1}\mu\in X_\kappa.
$
Conversely, for any $\rho\in X_\kappa$ we have
$
\langle L_\kappa\rho,\rho\rangle \ge \langle\Sigma_\kappa \bar \mu_\rho,\bar\mu_\rho\rangle
$, where $\bar\mu_\rho$ is given by the formula~\eqref{E:MUBARDEF}. It then follows that 
$n^{\le}(L_\kappa\big|_{X_\kappa})=n^{\le}(\Sigma_\kappa)$. On the other hand, it follows from~\eqref{E:FUNDAMENTAL} 
that $\dim\text{ker}(L_\kappa) = \dim\text{ker}(\Sigma_\kappa)$ and this yields~\eqref{E:NEGMORSEINDEX}.
\end{proof}

The Newtonian limit of the Einstein-Euler system is the well-known gravitational Euler-Poisson system:
\begin{align}
\dot\rho + \text{div}(\rho {\bf u}) & = 0 \label{E:CONTINUITYEP}\\
\rho\left(\dot{\bf u}+{\bf u}\cdot\nabla {\bf u}\right) + \nabla p & = - \rho \nabla\phi \\
\Delta\phi & = 4\pi \rho, \ \ \lim_{|x|\to\infty}\phi(t,x) = 0. \label{E:POISSON}
\end{align}
Here $\rho$ is the fluid density, ${\bf u}$ the Newtonian 3-velocity, and $\phi$ the gravitational potential satisfying the Poisson equation~\eqref{E:POISSON}. Upon specifying an equation of state $p = P(\rho)$ one finds 1-parameter family of radial equilibria. The most famous among them are the compact Lane-Emden stars, associated with the so-called polytropic equation of state 
\be\label{E:POLYTROPIC}
P(\rho)= k \rho^\gamma, \ \ \frac65<\gamma<2.
\ee  
The linear stability of Lane-Emden stars is a classical topic in astrophysics~\cite{Chandrasekhar1938} and they also play an important role in our work as suitably rescaled limiting objects
in the Newtonian limit $\kappa\to0$, see Lemma~\ref{L:SMALLKAPPA}. For general (non-polytropic) equations of state, the stability analysis is considerably more complicated due to the absence of exact scaling invariance. In a recent work Lin and Zeng~\cite{LinZeng2020} showed that for a very general class of equations of state allowing for compact equilibria, essentially the same turning point principle as proposed by Wheeler applies. In fact, our strategy in this paper is based on analogous steps to~\cite{LinZeng2020}.
Of central importance in the proof of the turning point principle for the Euler-Poisson system is the 
the Newtonian limit of the operator $\Sigma_\kappa$  given by 
\begin{align}
\Sigma_0 : = -\frac1{4\pi}\Delta  - g_0', \ \ \Sigma_0:\dot H^1_r\to (\dot H^1_r)^\ast,
\end{align} 
where 
\begin{align}
g_0(r) & = k^{-\alpha}\left(\frac{\gamma-1}{\gamma}\right)^{\alpha} r^\alpha_+ \label{E:GZERODEF}\\
\alpha & := \frac1{\gamma-1}. \label{E:ALPHADEF} 
\end{align}
The subscript $+$ in $f_+$ refers to the positive part of the function $f$. 
Since $1<\gamma<2$ we have $\alpha>1$ and therefore $g_0$ is a $C^1$-function.
It is in particular shown in~\cite{LinZeng2020} that 
\begin{align}\label{E:SIGMAZEROPROPERTIES}
n^-(\Sigma_0)=1, \ \ \text{ker}(\Sigma_0)=\{0\}.
\end{align}
The operator $\Sigma_0$ can indeed be viewed as the Newtonian limit ($\kappa\to0^+$) of the sequence of operators $(\Sigma_\kappa)_{\kappa>0}$. This is a consequence of Lemma~\ref{L:SMALLKAPPA} and Corollary~\ref{C:APRIORI}.

\begin{remark}
In the context of the Euler-Poisson system, operator $\Sigma_0$ is  the reduced operator associated with the Lane-Emden steady states with equation of state $p = k\rho^\gamma$. 
\end{remark}

\begin{definition}[Winding index]\label{D:INDEX}
An important quantity in our analysis is the {\em winding index} $i_\kappa$:
\be\label{E:IKAPPADEF}
  i_\kappa = 
  \begin{cases}
    1&\text{if }  \frac{d}{d\kappa} M_\kappa \frac{d}{d\kappa}\left(\frac{M_\kappa}{R_\kappa}\right) >0 \ \ \text{ or } \ \frac{d}{d\kappa}M_\kappa=0, \\
    0&\text{if }  \frac{d}{d\kappa} M_\kappa\frac{d}{d\kappa}\left(\frac{M_\kappa}{R_\kappa}\right) <0 \ \ \text{ or } \ \frac{d}{d\kappa}\left(\frac{M_\kappa}{R_\kappa}\right)=0.
  \end{cases} 
  \ee
\end{definition}


\begin{remark}
It is shown in Lemma~\ref{L:NOCOEXISTENCE} that there is no $\kappa>0$ such that $\frac{d}{d\kappa}M_\kappa=\frac{d}{d\kappa}\left(\frac{M_\kappa}{R_\kappa}\right)=0$, so the winding index $i_\kappa$ is well-defined.
\end{remark}


\begin{theorem}\label{T:NEGATIVEMODES}
Consider the $1$-parameter family of solutions $(0,\infty)\ni \kappa\to(\rho_\kappa,\l_\kappa,\mu_\kappa)$ to the radially symmetric Einstein-Euler system. 
\begin{enumerate}
\item[(i)] 
The number of growing modes $n^u(\kappa)$ of the linearized EE-system around a steady state $(\rho_\kappa,\l_\kappa,\mu_\kappa)$ is given by the formula
\be\label{E:NEGATIVEMODES}
n^u(\kappa) = n^-(\Sigma_\kappa) - i_\kappa,
\ee
where $i_\kappa$ is given in Definition~\ref{D:INDEX} and $n^-(\Sigma_\kappa)$ is the negative Morse index of the operator $\Sigma_\kappa$.
\item[(ii)] 
The eigenvalues of the linearized system are discrete with finite multiplicity.
\end{enumerate}
\end{theorem}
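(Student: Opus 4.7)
The plan is to apply the general instability theory for separable Hamiltonian PDEs of Lin--Zeng to the first-order flow \eqref{E:FIRSTORDERFLOW}, and then to convert the abstract index formula into the stated expression by a careful analysis of the codimension-one constraint $\int\rho\,dx=0$ characterizing the dynamically accessible subspace $\overline{R(A_\kappa)}$. As a preliminary, I would verify the abstract hypotheses: $L_\kappa$ and $\mathrm{id}$ assemble into the self-dual operator $\mathcal L^\kappa$ with finite negative Morse index $n^-(L_\kappa)=n^-(\Sigma_\kappa)$ (Lemma \ref{L:NEGMORSEINDEX}), while $A_\kappa$ from \eqref{E:AKAPPADEF} is densely defined and closed. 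Once these are in place, the abstract theorem gives at once part (ii) -- discreteness and finite multiplicity of the spectrum -- via compactness of the relevant embeddings into the weighted spaces $X_\kappa\times Y_\kappa$, and also the identity
\begin{equation*}
n^u(\kappa)=n^-\!\bigl(L_\kappa\big|_{\overline{R(A_\kappa)}}\bigr).
\end{equation*}

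Next I would compare this restricted index with the unrestricted one. Since $\overline{R(A_\kappa)}$ is a closed hyperplane in $X_\kappa$ by \eqref{E:DYNACC}, restricting a self-dual $L_\kappa$ to it can decrease the negative Morse index by at most one, so
\begin{equation*}
n^-(L_\kappa|_{\overline{R(A_\kappa)}})=n^-(\Sigma_\kappa)-\delta_\kappa,\qquad \delta_\kappa\in\{0,1\},
\end{equation*}
with $\delta_\kappa=1$ precisely when the constant function $\mathbf 1$ (i.e.\ the linear functional $\rho\mapsto\int\rho\,dx$) has nontrivial pairing against the negative cone of $L_\kappa$. In the generic case that $L_\kappa$ has trivial kernel on $X_\kappa$, this criterion is equivalent to the sign condition
\begin{equation*}
\delta_\kappa=1\quad\Longleftrightarrow\quad \langle L_\kappa^{-1}\mathbf 1,\mathbf 1\rangle<0,
\end{equation*}
and the goal of the remaining analysis is to identify this sign with the winding index $i_\kappa$ of Definition \ref{D:INDEX}.

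The identification proceeds by constructing an explicit witness. Differentiating the steady-state system \eqref{E:YEQN} and the definition $\rho_\kappa=g(y_\kappa)$ with respect to $\kappa$ produces $\rho'_\kappa:=\partial_\kappa\rho_\kappa\in X_\kappa$, together with the induced potential $\bar\mu_{\rho'_\kappa}=\partial_\kappa\mu_\kappa$ (up to an additive constant). Using \eqref{E:FUNDAMENTAL} and the TOV relation, a direct computation should show that $L_\kappa\rho'_\kappa$ reduces to a constant multiple of $\mathbf 1$, the constant being proportional to $\frac{d}{d\kappa}(M_\kappa/R_\kappa)$ coming from the Schwarzschild matching at $r=R_\kappa$, while $\int\rho'_\kappa\,dx=\frac{dM_\kappa}{d\kappa}$ by differentiation of the mass integral. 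Pairing these two identities then gives
\begin{equation*}
\langle L_\kappa\rho'_\kappa,\rho'_\kappa\rangle=c\,\frac{dM_\kappa}{d\kappa}\,\frac{d}{d\kappa}\!\left(\frac{M_\kappa}{R_\kappa}\right)
\end{equation*}
with a computable positive constant $c>0$. When $\frac{dM_\kappa}{d\kappa}\neq 0$, the element $\rho'_\kappa/\frac{dM_\kappa}{d\kappa}$ is an explicit preimage of $\mathbf 1$ under $L_\kappa$, and its $L_\kappa$-norm squared reproduces the product in Definition \ref{D:INDEX}, establishing $\delta_\kappa=i_\kappa$ in the generic case.

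The main obstacle, and the reason for the two-part definition of $i_\kappa$, will be the exceptional points where either $\frac{dM_\kappa}{d\kappa}=0$ or $\frac{d}{d\kappa}(M_\kappa/R_\kappa)=0$; Lemma \ref{L:NOCOEXISTENCE} ensures these cases are disjoint. When $\frac{dM_\kappa}{d\kappa}=0$ the above construction produces a nonzero $\rho'_\kappa$ with $L_\kappa\rho'_\kappa\in\overline{R(A_\kappa)}^\perp$ but with vanishing mean, so $\rho'_\kappa$ itself lies in the dynamically accessible subspace and, via the identity above, witnesses a nontrivial zero or sign of $L_\kappa$ on that subspace; a Fredholm alternative applied to $L_\kappa$ together with perturbation in $\kappa$ will show $\delta_\kappa=1$ consistently with the definition. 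When $\frac{d}{d\kappa}(M_\kappa/R_\kappa)=0$, the same construction yields $L_\kappa\rho'_\kappa=0$, enlarging $\ker L_\kappa$, and a limiting/continuity argument in $\kappa$ forces $\delta_\kappa=0$. The most delicate technical point will be justifying these limiting sign computations at the exceptional $\kappa$'s uniformly, which will require the quantitative control on $\Sigma_\kappa$ and its resolvent provided by the spectral properties established earlier in the paper.
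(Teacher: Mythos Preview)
Your overall strategy---computing $L_\kappa(\partial_\kappa\rho_\kappa)$ as a constant multiple of $\mathbf 1$, pairing it back against $\partial_\kappa\rho_\kappa$ to obtain a formula involving $\frac{d}{d\kappa}M_\kappa$ and $\frac{d}{d\kappa}(M_\kappa/R_\kappa)$, and then reading off the index drop on the hyperplane $\overline{R(A_\kappa)}$---is exactly what the paper does. However, there are two concrete issues that would derail the argument as written.

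\textbf{Sign and scaling.} The constant in $L_\kappa(\partial_\kappa\rho_\kappa)=C(\kappa)\mathbf 1$ is $C(\kappa)=\frac{d}{d\kappa}[\mu_\kappa(R_\kappa)]$, and since $e^{2\mu_\kappa(R_\kappa)}=1-2M_\kappa/R_\kappa$ one gets
\[
\langle L_\kappa\,\partial_\kappa\rho_\kappa,\partial_\kappa\rho_\kappa\rangle
=-\frac{1}{1-2M_\kappa/R_\kappa}\,\frac{d}{d\kappa}\!\left(\frac{M_\kappa}{R_\kappa}\right)\frac{dM_\kappa}{d\kappa},
\]
so your $c$ is \emph{negative}, not positive. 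With $c>0$ the criterion $\langle L_\kappa\partial_\kappa\rho_\kappa,\partial_\kappa\rho_\kappa\rangle<0$ yields $\delta_\kappa=1$ precisely when the product is negative, i.e.\ when $i_\kappa=0$, which is the opposite of what you need. Relatedly, the preimage of $\mathbf 1$ under $L_\kappa$ is $\partial_\kappa\rho_\kappa/C(\kappa)$, \emph{not} $\partial_\kappa\rho_\kappa/\frac{dM_\kappa}{d\kappa}$; the scaling by $\frac{dM_\kappa}{d\kappa}$ only enters through $\int\partial_\kappa\rho_\kappa\,dx$.

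\textbf{The exceptional cases.} Your proposed ``limiting/continuity in $\kappa$'' argument is precisely where the proposal is weakest. At a critical point of $\kappa\mapsto M_\kappa/R_\kappa$ one has $\ker L_\kappa\neq\{0\}$ (Lemma~\ref{L:KERNELLEMMA}) and $n^-(\Sigma_\kappa)$ may jump (Lemma~\ref{L:JUMPLEMMA}), so continuity of indices in $\kappa$ is exactly what fails. The paper avoids this entirely by a direct, pointwise argument at each exceptional $\kappa$: when $\frac{d}{d\kappa}(M_\kappa/R_\kappa)=0$, one has $\partial_\kappa\rho_\kappa\in\ker L_\kappa$ but $\int\partial_\kappa\rho_\kappa\,dx=\frac{dM_\kappa}{d\kappa}\neq0$ by Lemma~\ref{L:NOCOEXISTENCE}, so $\partial_\kappa\rho_\kappa\notin\overline{R(A_\kappa)}$ and $X_\kappa=\overline{R(A_\kappa)}\oplus\mathbb R\,\partial_\kappa\rho_\kappa$ with the complement in the kernel; restricting therefore removes a null direction, not a negative one, giving $\delta_\kappa=0=i_\kappa$ immediately. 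When $\frac{dM_\kappa}{d\kappa}=0$, the paper builds an explicit two-dimensional block $S_0=\mathrm{span}\{\psi,\partial_\kappa\rho_\kappa\}$ (with $\int\psi\,dx=1$) that is $L_\kappa$-orthogonal to a complement $S_1\subset\overline{R(A_\kappa)}$, and computes $n^-(L_\kappa|_{S_0})=1$ directly from the off-diagonal term $\langle L_\kappa\partial_\kappa\rho_\kappa,\psi\rangle=C(\kappa)\neq0$. No perturbation in $\kappa$ or resolvent control is needed; replace your limiting arguments with these direct decompositions.
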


\begin{remark}
The discreteness of the spectrum can also be obtained using Sturm-Liouville type methods. The formulation can be essentially read off from Chandrasekhar's pioneering work~\cite{Chandrasekhar1964}, for mathematically rigorous treatment see for example the work of Makino~\cite{Ma2016}. Our proof of discreteness in Theorem~\ref{T:NEGATIVEMODES} proceeds by a different method and capitalizes crucially on the separable Hamiltonian structure of the linearized operator. The same strategy has been used in the Euler-Poisson case~\cite{LinZeng2020} and it is a generally applicable procedure to other systems enjoying the separable Hamiltonian structure. 
\end{remark}


\begin{theorem}\label{T:TPP}
Consider the $1$-parameter family of solutions $(0,\infty)\ni \kappa\to(\rho_\kappa,\l_\kappa,\mu_\kappa)$ to the radially symmetric Einstein-Euler system. 
\begin{enumerate}
\item
{\em Turning Point Principle.}
The number of growing modes $n^u(\kappa)$ can only change at the extrema of the mass function $\kappa\to M_\kappa$.
At an extremum of $\kappa\to M_\kappa$, $n^u(\kappa)$ increases by 1 if the sign of $\frac{d}{d\kappa}M_\kappa\frac{d}{d\kappa}R_\kappa$
changes from $-$ to $+$ as $\kappa$ increases, and similarly it decreases by $1$ if the sign of $\frac{d}{d\kappa}M_\kappa\frac{d}{d\kappa}R_\kappa$
changes from $+$ to $-$ as $\kappa$ increases. Geometrically this implies that we ``gain" a growing mode if the mass-radius curve bends counter-clockwise at the extremum of $\kappa\to M_\kappa$, and we ``lose" a growing mode if the mass-radius curve bends clockwise at the extremum of $\kappa\to M_\kappa$. Here the horizontal axis corresponds to the star radius.
\item The number of growing modes goes to infinity as $\kappa$ goes to infinity, i.e.
\be\label{E:VERYUNSTABLE}
\lim_{\kappa\to\infty} n^u(\kappa)=\infty.
\ee
\end{enumerate}
\end{theorem}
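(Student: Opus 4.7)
My strategy for both parts is to leverage the index identity $n^u(\kappa) = n^-(\Sigma_\kappa) - i_\kappa$ from Theorem~\ref{T:NEGATIVEMODES}, tracking how each summand varies with $\kappa$.

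For part (1), the family $\kappa \mapsto \Sigma_\kappa$ depends smoothly on $\kappa$, and since by Theorem~\ref{T:NEGATIVEMODES}(ii) its spectrum is discrete with finite multiplicities, standard perturbation theory implies $n^-(\Sigma_\kappa)$ is locally constant except at isolated transition values $\kappa_*$ where $\ker\Sigma_{\kappa_*} \neq \{0\}$. The central technical step is to identify these transition points. Differentiating the coupled steady state equations~\eqref{E:YEQN}--\eqref{E:LAMBDARELATION} in $\kappa$ produces an explicit candidate kernel element built from $\partial_\kappa y_\kappa$ and $\partial_\kappa \l_\kappa$, and an analysis of its boundary behavior at $r = R_\kappa$ using~\eqref{E:PSIKAPPAINVASYMP} together with the Hopf asymptotics $\rho_\kappa \sim (R_\kappa - r)^{1/(\gamma-1)}$ reveals that this candidate belongs to $\dot H^1_r$ precisely when $\frac{d}{d\kappa}(M_\kappa/R_\kappa) = 0$. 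A Feynman--Hellmann-type computation then pins down the sign of each $\pm 1$ jump of $n^-(\Sigma_\kappa)$ across such $\kappa_*$ in terms of the direction in which $(M_\kappa/R_\kappa)'$ crosses zero. In parallel, a direct inspection of Definition~\ref{D:INDEX} shows that $i_\kappa$ can only flip at points where either $M'_\kappa = 0$ or $(M_\kappa/R_\kappa)' = 0$.

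Combining these two inputs yields the turning point principle. At a transition with $(M_\kappa/R_\kappa)' = 0$ but $M'_\kappa \neq 0$, both $n^-(\Sigma_\kappa)$ and $i_\kappa$ jump by $\pm 1$; the sign calculation above is designed so that these jumps cancel and $n^u$ is unchanged. At a transition with $M'_\kappa = 0$ but $(M_\kappa/R_\kappa)' \neq 0$, $n^-(\Sigma_\kappa)$ is continuous while $i_\kappa$ flips, producing the claimed $\pm 1$ jump of $n^u$ (the impossibility of simultaneous vanishing $M'_\kappa = (M_\kappa/R_\kappa)' = 0$ is precisely Lemma~\ref{L:NOCOEXISTENCE}). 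To convert this algebraic rule into the geometric ``counter-clockwise vs. clockwise bending'' statement, I use the identity $(M/R)' = M'/R - MR'/R^2$, which at $M'_\kappa = 0$ collapses to $-M_\kappa R'_\kappa/R_\kappa^2$, so near a mass extremum the sign of $M'_\kappa \cdot (M_\kappa/R_\kappa)'$ agrees with $-\mathrm{sign}(M'_\kappa R'_\kappa)$, and Definition~\ref{D:INDEX} translates directly into the stated bending condition. For part (2), since $i_\kappa \in \{0,1\}$ is bounded, it suffices to prove $n^-(\Sigma_\kappa) \to \infty$. My plan is a direct min-max construction: in the ultrarelativistic limit the steady states become highly concentrated, equation~\eqref{E:YEQN} forces $1 - 2m(r)/r$ to become small in some region, so $e^{2\l_\kappa}$ and hence the negative potential $e^{\l_\kappa}\Psi_\kappa^{-1}$ in~\eqref{E:SIGMAKAPPADEF} grow large in a suitably-scaled sense. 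Given any $N$, I construct $N$ radial bump functions with pairwise disjoint supports concentrated in such a region, each realizing a strictly negative $\Sigma_\kappa$-Rayleigh quotient for $\kappa$ sufficiently large; the min-max principle then gives $n^-(\Sigma_\kappa) \geq N$ and~\eqref{E:VERYUNSTABLE} follows.

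The main obstacle I anticipate is the kernel identification in part (1): the density $\rho_\kappa$ and the two metric coefficients $\l_\kappa, \mu_\kappa$ all vary with $\kappa$ and are tightly coupled through the TOV relation~\eqref{E:TOV} and~\eqref{E:LAMBDARELATION}, so the $\kappa$-derivative of the steady state is a genuinely coupled object. Producing an explicit element of $\ker\Sigma_\kappa$ and tracking the precise sign in the Feynman--Hellmann step both require disentangling this coupling and performing delicate boundary analysis at $r = R_\kappa$ where the Hopf asymptotics control the behavior of the weight $\Psi_\kappa^{-1}$. A secondary obstacle, for part (2), is establishing sharp enough asymptotic information on $(\rho_\kappa, \l_\kappa, \mu_\kappa)$ in the $\kappa \to \infty$ limit to ensure that on the prescribed scales of the trial functions the negative potential term does in fact dominate the kinetic contribution $-\tfrac{1}{4\pi}\Delta_\kappa$.
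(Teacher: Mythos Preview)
Your approach to part (i) is essentially the paper's: the index identity from Theorem~\ref{T:NEGATIVEMODES}, the kernel characterization of $\Sigma_\kappa$ in terms of $(M_\kappa/R_\kappa)'$ (paper's Lemma~\ref{L:KERNELLEMMA}), the jump cancellation at those points (paper's Lemma~\ref{L:JUMPLEMMA}), and the quotient-rule translation to the bending criterion are all exactly what the paper does. One small correction: the obstruction to $\partial_\kappa y_\kappa$ lying in $\dot H^1_r$ is not at the star boundary $r=R_\kappa$ but at $r\to\infty$. The function $v_\kappa=\partial_\kappa y_\kappa$ solves the ODE $\Sigma_\kappa v_\kappa=0$ globally and is smooth across $R_\kappa$; the issue is that $\lim_{r\to\infty}v_\kappa(r)=\frac{d}{d\kappa}[\mu_\kappa(R_\kappa)]$, which by Lemma~\ref{L:CORRECTION} vanishes iff $(M_\kappa/R_\kappa)'=0$. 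No Hopf boundary analysis is needed here.

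Your approach to part (ii) is genuinely different and considerably harder than what the paper does. The paper simply invokes the known fact (from \cite{NiUg,Ma2000,HeRoUg}) that under assumptions (P1)--(P4) the mass-radius curve forms an infinite spiral that bends counter-clockwise as $\kappa\to\infty$, and then applies part (i): each counter-clockwise turn at a mass extremum contributes $+1$ to $n^u$, so $n^u(\kappa)\to\infty$. Your direct min-max route---building $N$ disjoint trial functions and showing the potential $e^{\lambda_\kappa}\Psi_\kappa^{-1}$ dominates the kinetic term on each---would in principle prove the stronger statement $n^-(\Sigma_\kappa)\to\infty$ without appealing to the global geometry of the $(R,M)$-curve, but it requires sharp large-$\kappa$ asymptotics for the steady states (essentially the ultrarelativistic self-similar core analysis) that you correctly flag as a real obstacle and that the paper entirely sidesteps. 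The paper's route is much shorter and black-boxes the hard asymptotic analysis into the cited spiral-structure results; yours would be more self-contained but is not obviously feasible at the level of detail you give.
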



\begin{remark}
In the physics literature the onset of the ``higher and higher order instabilites"~\cite{Thorne1966} as $\kappa\to\infty$ for static stars with extremely dense cores was first pointed out by Dimitriev and Holin~\cite{DiHo1963} in 1963, as well as Harrison~\cite{Ha1965} and Wheeler~\cite{HaThWaWh}.
\end{remark}

\begin{remark}
Part (ii) of Theorem~\ref{T:TPP} is a (considerable) strengthening of 
a result in~\cite{HaLiRe}, where it was shown that for $\kappa\gg1$ sufficiently large we have $n^u(\kappa)\ge1$.
\end{remark}

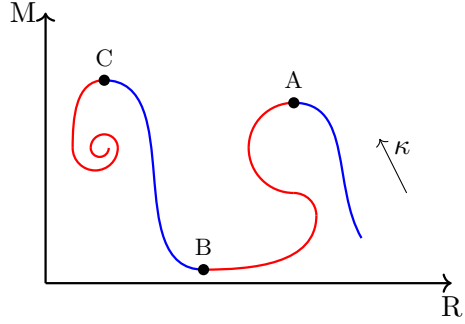
\begin{figure}
\begin{center}
\begin{tikzpicture}[domain=0:5, scale = 0.6]
\draw[->, black, line width = 0.30mm]   (0,0) -- (0,6);
\node at (-0.5,6){M};
\draw[->, black, line width = 0.30mm] (0,0) -- (9,0);
\node at (9,-0.5){R};
\draw[->] (8,2) -- (7.4,3.2);
\node at (7.9,3.0){$\kappa$};

\draw[blue, line width = 0.30mm] (7,1) to[out=120,in=0] (5.5,4);
\draw[red, line width = 0.30mm] (5.5,4) to[out=180,in=90] (4.5,3);
\draw[red, line width = 0.30mm] (4.5,3) to[out=270,in=180] (5.5,2);
\draw[red, line width = 0.30mm] (5.5,2) to[out=0,in=90] (6,1.5);
\draw[red, line width = 0.30mm] (6,1.5) to[out=270,in=0] (3.5,0.3);
\draw[blue, line width = 0.30mm] (3.5,0.3) to[out=180,in=0] (1.3,4.5);
\draw[red, line width = 0.30mm] (1.3,4.5) to[out=180,in=90] (0.6,3.0);
\draw[red, line width = 0.30mm] (1.6,3) arc(180:0:-0.5);
\draw[red, line width = 0.30mm] (1,3) arc(180:0:0.3);
\draw[red, line width = 0.30mm] (1.4,3) arc(180:0:-0.2);
\node at (5.5,4)[circle,fill,inner sep=1.5pt]{};
\node at (5.5,4.5){\footnotesize A};
\node at (3.5,0.3)[circle,fill,inner sep=1.5pt]{};
\node at (3.5,0.8){\footnotesize B};
\node at (1.3,4.5) [circle,fill,inner sep=1.5pt]{};
\node at (1.3,5) {\footnotesize C};
\end{tikzpicture}
\end{center}
\caption{Schematic sketch of a possible mass-radius curve based on the physics literature, see e.g.~\cite{HaThWaWh, Weinberg1972, Wald1984}. Blue portions correspond to spectrally stable equilibria. The first three local extrema of $\kappa\to M_\kappa$ are labelled by $A,B,C$. Starting with $\kappa\ll1$ small to the very right of the curve, equilibria are stable until we reach the first maximum of $M_\kappa$ at the point $A$ (the ``white dwarf" region). By Theorem~\ref{T:TPP} equilibria between $A$ and $B$ have 1 growing mode, the branch between $B$ and $C$ is again stable (the ``neutron star" region), and all the equilibria pass the point $C$ are unstable. The far left region of the graph features the infinite mass-radius spiral which leads to ever-increasing number of growing modes as it bends counterclockwise.}
\label{F:TPPEE}
\end{figure}


One of the central outcomes in the analysis of the Euler-Einstein system in~\cite{HaLiRe} is Theorem 5.20 which proves the existence 
and the associated exponential trichotomy 
decomposition of the phase space for the linearized flow~\eqref{E:FIRSTORDERFLOW}. Theorem 5.20 in~\cite{HaLiRe} is however not complete, as
it does not address the steady states whose central redshifts satisfy the nongeneric condition 
\be\label{E:NONGENERIC}
\frac{d}{d\kappa}M_\kappa \frac{d}{d\kappa}\left(\frac{M_\kappa}{R_\kappa}\right) =0.
\ee
For readers' convenience we state the complete version of the theorem, however, we only briefly sketch the proof in Section~\ref{S:PROOFS} as it follows closely the arguments in~\cite{HaLiRe,LinZeng2020}.


\begin{theorem}[Exponential trichotomy]\label{T:TRICHOTOMY}
Let the equation of state $\rho\to P(\rho)$ satisfy assumptions (P1)--(P4) and let $(\rho_\kappa,\mu_\kappa,\l_\kappa)$ be the 1-parameter family of steady states given 
by Proposition~\ref{P:EXISTENCE}. Then for any $\kappa>0$
the operator $J^{\kappa }\mathcal L^{\kappa }$ generates a $%
C^{0}$ group $e^{tJ^{\kappa }\mathcal L^{\kappa }}$ of bounded linear operators on $X_\kappa\times Y_\kappa$
and there exists a decomposition%
\begin{equation*}
X_\kappa\times Y_\kappa=E^{u}\oplus E^{c}\oplus E^{s},\quad
\end{equation*}
with the following properties:
\begin{itemize}
\item[(i)]
  $E^{u}\left( E^{s}\right) $ consist only of eigenvectors corresponding to
  negative (positive) eigenvalues of $J^{\kappa }\mathcal L^{\kappa }$ and  
  \begin{equation}
    \dim E^{u}=\dim E^{s}=n^{-}\left( \Sigma_{\kappa }\right) -i_{\kappa },
    \label{unstable-dimension-formula-EE}
  \end{equation}
  where $i_{\kappa }$ is defined by~\eqref{E:IKAPPADEF}. 
\item[(ii)]
  The quadratic form $\left( \mathcal L^{\kappa }\cdot ,\cdot \right)_X $
  vanishes on $E^{u,s}$, but is non-degenerate on $E^{u}\oplus E^{s} $, and 
  \begin{equation*}
    E^{c}=\left\{
    \begin{pmatrix} \rho \\ v \end{pmatrix}
    \in X_\kappa\times Y_\kappa \mid \left( \mathcal L^{\kappa }
    \begin{pmatrix} \rho \\ v \end{pmatrix},
    \begin{pmatrix} \rho_1 \\ v_1 \end{pmatrix}
    \right)_X =0\ \mbox{for all}\
    \begin{pmatrix} \rho_1 \\ v_1 \end{pmatrix}
    \in E^{s}\oplus E^{u}\right\}.
  \end{equation*}
\item[(iii)]
  $E^{c}, E^{u}, E^{s}$ are invariant under $e^{tJ^{\kappa }\mathcal L^{\kappa}}$.
\item[(iv)]
 Let $\lambda _{u}=
  \min \{\lambda \mid \lambda \in \sigma (J^{\kappa}L^{\kappa }|_{E^{u}})\}>0$.
  Then there exist $M>0$ such that 
  \begin{equation}
    \begin{split}
      & \left\vert e^{tJ^{\kappa }L^{\kappa }}|_{E^{s}}\right\vert
      \leq M e^{-\lambda _{u}t},\ t\geq 0,\\
      & \left\vert e^{tJ^{\kappa }L^{\kappa }}|_{E^{u}}\right\vert
      \leq M e^{\lambda _{u}t},\ t\leq 0,
    \end{split}
    \label{estimate-stable-unstable-EE}
  \end{equation}
  \begin{equation}
    \left\vert e^{tJ^{\kappa }L^{\kappa }}|_{E^{c}}\right\vert
    \leq M (1+|t|)^{k_{0}},
    \ t\in \mathbb{R},  \label{estimate-center-EE}
  \end{equation}
  where
  \begin{equation}
    k_{0}\leq 2.  \label{bound-k-0-EE}
  \end{equation}
 In the generic case $\frac{d}{d\kappa}M_\kappa\neq0$, we have $k_0=0$ and therefore the flow is Lyapunov stable on the center space $E^c$.
  \end{itemize}
\end{theorem}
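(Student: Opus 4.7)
The plan is to reduce Theorem~\ref{T:TRICHOTOMY} to the abstract structural theorem of Lin--Zeng~\cite{LinZeng2020} for separable Hamiltonian systems, applied to the operator $J^\kappa\mathcal L^\kappa$ in~\eqref{defn-J-K-kappa}. Verification of the hypotheses is largely bookkeeping: $J^\kappa$ is anti-self-dual, $\mathcal L^\kappa$ is self-dual, and $A_\kappa, A_\kappa'$ are densely defined closed operators (stated after~\eqref{E:AKAPPADEF}). The crucial quantitative input is that $\mathcal L^\kappa$ has finite negative Morse index, which is precisely Lemma~\ref{L:NEGMORSEINDEX}. The Fredholm and closed-range properties required by the abstract theory can be extracted from the factorization~\eqref{E:FUNDAMENTAL}, which exhibits $L_\kappa$ as a conjugation of the second-order elliptic operator $\Sigma_\kappa$ by a positive weight, reducing the relevant analytic questions about $L_\kappa$ to standard facts about a Schr\"odinger-type operator.

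The abstract framework then yields the existence of the $C^0$ group, the invariant decomposition $X_\kappa\times Y_\kappa=E^u\oplus E^c\oplus E^s$, assertion (iii), the characterization of $E^c$ in (ii), and the exponential bounds~\eqref{estimate-stable-unstable-EE}. Discreteness of the nonzero spectrum, asserted in Theorem~\ref{T:NEGATIVEMODES}(ii), guarantees that $E^u$ and $E^s$ are spanned by genuine eigenvectors, and the symplectic pairing from $J^\kappa$ forces eigenvalues to occur in pairs $(\lambda,-\lambda)$, so $\dim E^u = \dim E^s$. Equating either with the count of growing modes from Theorem~\ref{T:NEGATIVEMODES}(i) produces the dimension formula~\eqref{unstable-dimension-formula-EE}.

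The delicate part is the polynomial bound $k_0\leq 2$ on $E^c$. The exponent $k_0$ is the length of the longest Jordan chain of $J^\kappa\mathcal L^\kappa$ at the zero eigenvalue, minus one. I would identify the generalized kernel with infinitesimal deformations of the steady-state family: $\partial_\kappa(\rho_\kappa,\mu_\kappa,\lambda_\kappa)$ produces a nontrivial element of $\ker(J^\kappa\mathcal L^\kappa)$ modulo the dynamically accessible constraint~\eqref{E:DYNACC}, and longer chains are built by iteratively solving $J^\kappa\mathcal L^\kappa \xi_{j+1}=\xi_j$. The solvability obstructions at each level should compute, after pairing against the adjoint kernel, to scalar multiples of $\frac{d}{d\kappa}M_\kappa$ and $\frac{d}{d\kappa}(M_\kappa/R_\kappa)$. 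In the generic case at least one of these is nonzero, the chain terminates immediately, and $k_0=0$, so the flow on $E^c$ is Lyapunov stable; in the nongeneric case~\eqref{E:NONGENERIC} one gains additional generalized eigenvectors, but Lemma~\ref{L:NOCOEXISTENCE} ensures that both derivatives never vanish simultaneously, so at most two extra levels can appear, giving $k_0\leq 2$.

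The main obstacle will be the variational bookkeeping identifying the obstructions with $\partial_\kappa M_\kappa$ and $\partial_\kappa(M_\kappa/R_\kappa)$. This requires linearizing the Tolman--Oppenheimer--Volkov relation~\eqref{E:TOV} together with the mass formula~\eqref{E:LAMBDARELATION}, and carefully pairing against dual kernel elements using the weighted inner products on $X_\kappa\times Y_\kappa$. While the analogous identities in the Newtonian Euler--Poisson setting are executed in~\cite{LinZeng2020}, the relativistic case introduces the compactness ratio $M_\kappa/R_\kappa$ as a second relevant quantity alongside $M_\kappa$, which explains the two-factor structure of the winding index~\eqref{E:IKAPPADEF} and approximately doubles the obstruction analysis relative to the Euler--Poisson case.
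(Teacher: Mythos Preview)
Your proposal is correct and takes essentially the same approach as the paper: the paper's own proof is only a sketch that defers to the abstract Lin--Zeng framework~\cite{LinZeng2020,LinZengMemoirs} and to~\cite{HaLiRe} for the non-degenerate case, and explicitly says the degenerate case~\eqref{E:NONGENERIC} is handled verbatim as in~\cite{LinZeng2020} with $\Sigma_\kappa$ playing the role of the reduced operator.

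One imprecision worth tightening: your phrase ``in the generic case at least one of these is nonzero'' is vacuous, since Lemma~\ref{L:NOCOEXISTENCE} guarantees this always. The two derivatives do not enter symmetrically as successive obstructions in a single Jordan-chain computation. Rather, $\frac{d}{d\kappa}\bigl(\tfrac{M_\kappa}{R_\kappa}\bigr)$ controls whether $\ker\Sigma_\kappa$ (equivalently $\ker L_\kappa$) is trivial via Lemma~\ref{L:KERNELLEMMA}, while $\frac{d}{d\kappa}M_\kappa$ controls whether $\frac{d\rho_\kappa}{d\kappa}$ lies in the dynamically accessible space $\overline{R(A_\kappa)}$ via~\eqref{E:DYNACC}. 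The theorem's generic case $k_0=0$ is specifically $\frac{d}{d\kappa}M_\kappa\neq0$; when $\frac{d}{d\kappa}M_\kappa=0$ the element $\frac{d\rho_\kappa}{d\kappa}$ becomes dynamically accessible and seeds a nontrivial chain, whose length is then capped by the nonvanishing of $\frac{d}{d\kappa}\bigl(\tfrac{M_\kappa}{R_\kappa}\bigr)$. Sorting out these distinct roles is exactly the bookkeeping you flag as the main obstacle, and it is carried out in~\cite{LinZeng2020}.
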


\begin{remark}
Invariant subspaces and the exponential trichotomy are important for a refined description of the dynamics in the vicinity of the equilibria. Our result is closely related to the criticality picture emerging in the description of contrasting dynamics near nontrivial steady states, which is largely based on numerical and heuristic arguments~\cite{GaGu}. In the context of neutron stars, Noble and Choptuik~\cite{NoCh2016} numerically probed the dynamics near the unstable equilibria of the Einstein-Euler system, using the initial velocity and the central density (or equivalently $\kappa$) to parametrize their perturbations. The resulting dynamic picture is very rich, and leads to collapsing, dispersive, and time-periodic solutions with data starting out close to unstable equilibria.
\end{remark}

As already explained, the statement of Theorem~\ref{T:TPP} goes back to Zel'dovich~\cite{Ze1963} and Wheeler~\cite{HaThWaWh}, see also Section 10.11 of the book by Zel'dovich and Novikov~\cite{ZeNo}. It is also referred to as the {\em static criterion} or the {\em static approach}~\cite{BiKo2004,Tassoul1978} which, as formulated in the original work of Zel'dovich~\cite{Ze1963}, asserts that a growing mode is gained or lost at the extrema of the $\kappa\to M(\kappa)$ curve - specifically only at the maxima and minima and at no other extrema~\cite{Tassoul1978}. The word ``static" is used, as the stability can be read off from the location of the equilibrium on the mass-radius curve, which are natural astrophysical observables; in the process we avoid potentially cumbersome eigenvalue computations~\cite{Thorne1966}. Our formulation of this principle follows closely the one in~\cite{HaThWaWh}. In 1965 Thorne~\cite{Thorne1966} gave a more precise version of Wheeler's Turning Point Principle, and provided heuristic arguments for the main conclusion of part (ii) of Theorem~\ref{T:TPP}. In 1970 Calamai~\cite{Calamai1970} similarly gave a more refined argument for the static approach to stability. Various heuristic treatments of the ``static approach" can be found in the textbook by Shapiro and Teukolsky~\cite{ShTe1983} and Straumann~\cite{Straumann2013}.

The first comprehensive treatments of the (linear) stability study of the isentropic relativistic dynamic equilibria (stars) started with the seminal contribution of Chandrasekhar~\cite{Chandrasekhar1964}, which after the pioneering work of Oppenheimer and Volkov from 1939~\cite{OpVo} gave a big boost to the study of dynamic stability properties of stars. For a historical overview we refer the reader to the summer school notes of Thorne~\cite{Thorne1966} and the review paper of Bisnovatyi-Kogan~\cite{BiKo2004}. 
Chandrasekhar~\cite{Chandrasekhar1964} linearized the problem in the co-moving coordinates and formulated the spectral stability problem in terms of a suitable Rayleigh-Ritz minimization principle for the eigenvalues of the linearized operator. An alternative, purely ``Eulerean" characterization of spectral stability was derived by Thorne in~\cite{HaThWaWh} in terms of the second variation of the ADM-mass $M$ under the constraint of constant total particle number $N$. For more details we point the reader to~\cite{HaLiRe} and references therein. 
At the same time as Wheeler's work on turning point principle~\cite{HaThWaWh} Bardeen~\cite{Ba1965} proposed a slightly different turning point principle for so-called hot stars (where the thermodynamic temperature is not zero), also relying on the $M(R)$-diagram. A nice overview is given by Bardeen, Thorne, and Meltzer~\cite{BaThMe}, where both the spectral stability of a single star, as well as their behaviour along the mass-radius curve is discussed.

When studying the stability of self-gravitating systems, a distinction is made between the {\em dynamic} stability/instability - studied in this paper - and the {\em thermodynamic} stability/instability, see the work of Green, Schiffrin, and Wald~\cite{GrSchWa} for an extensive discussion. The latter instability sets in when an energy-like quantity - typically the {\em entropy} - can be infinitesimally increased with perturbations that keep other relevant conserved quantities infinitesimally zero.  This notion of stability is in general not equivalent to dynamic stability, but one can often formulate turning point principles along 1-parameter family of equilibria where entropy, or a {\em binding energy} is plotted against some other relevant conserved quantity.  A general criterion for determining turning point instabilities in this context was given by Sorkin~\cite{So1981,So1982}, which was later applied to the study of thermodynamic (in)stability of axisymmetric stars by Friedman, Ipser, and Sorkin~\cite{FrIpSo1988}. More recently, thermodynamic stability of radial and axisymmetric equilibria of  the Einstein-Euler system was investigated by Schiffrin and Wald~\cite{SchWa2013}, Roupas~\cite{Roupas2013}, both works containing a number of references on the topic. 



Turning point principles play an important role in the study of other relativistic self-gravitating systems. An important open problem in this context is the stability of radially symmetric galaxies, which are equilibria of the asymptotically flat Einstein-Vlasov system. Going back to Zel'dovich and Podurets~\cite{ZePo}, it is conjectured and numerically verified (see also Zel'dovich and Novikov~\cite{ZeNo}, and  more recent numerical investiagtion by Andr\'{e}asson and Rein~\cite{AnRe1}) that the stability of suitable 1-parameter families  exhibits a single exchange of stability to instability at some critical value of central redshift $\kappa=\kappa_{\text{max}}$. At $\kappa_{\text{max}}$ the so-called fractional binding energy has a maximum and for $\kappa>\kappa_{\text{max}}$ the equilibria are dynamically unstable. This stability scenario is very different from the mass-radius turning point principle that we prove in Theorem~\ref{T:TPP}, as in the case of stars stability can in principle be exchanged to instability, and then back to stability~\cite{HaThWaWh,Lindblom1997}, \bcr see Figure~\ref{F:TPPEE}\ec.  The works~\cite{HaRe2015, HaLiRe} show that the steady states are spectrally stable for small values of $\kappa$ and spectrally unstable for large values of $\kappa$ respectively, which is consistent with the Zeldovitch-Podurets stability picture. The ``large central redshift" instability is driven by the existence of an exponentially growing mode. To prove the existence of the growing mode and understand the invariant subspaces requires the full power of the separable Hamilton formulation of the Einsten-Vlasov system~\cite{HaLiRe,LinZengMemoirs}, as variational principles are inadequate for this purpose in the context of the Vlasov theory.
We also mention that related binding energy criteria play a role in the study of the stability of so-called boson stars~\cite{LeePang1989} as well as black holes/black rings in higher dimensions~\cite{FiMuRe, ArLo2005,SchWa2013}.

In the Newtonian context, we already mentioned that gaseous stars radial equilibria are embedded in 1-parameter families of the gravitational Euler-Poisson system~\eqref{E:CONTINUITYEP}--\eqref{E:POISSON}. On the other hand, the Newtonian limit of the Einstein-Vlasov system is the gravitational Vlasov-Poisson system and also admits 1-parameter families of radially symmetric equilibria, i.e. steady galaxies, for a given {\em microscopic} equation of state. While the Zel'dovich/Wheeler turning point principle was shown to be true in the macroscopic Euler-Poisson case~\cite{LinZeng2020}, such a principle is {\em wrong} for the Vlasov-Poisson case. To illustrate this, the well-known 1-parameter family of King solutions of the Vlasov-Poisson system possesses a mass-radius graph which spirals in to some asymptotic value $(M_\infty, R_\infty)$ with infinitely many winding points~\cite{RaRe2017}, but it is {\em nonlinearly dynamically stable} for any value of the central macroscopic density $\rho_0>0$~\cite{GuoRein2007,GuoLin2008,LeMeRa2011,LeMeRa2012}. The inadequacy of the mass-radius diagram to predict the offset of (linear) instability for the kinetic models such as Einstein-Vlasov and Vlasov-Poisson is intimately related to the more complicated Hamiltonian structure by comparison to their macroscopic (gaseous) counterparts. In particular, the space of dynamically accessible perturbations in the kinetic setting is infinite-codimensional, which is one of the reasons why an extension of our analysis in the present work to the Einstein-Vlasov system is difficult. 

We can however prove the radial equilibria of the Einstein-Vlasov system possess no growing modes for all $\kappa<\kappa_{\text{max}}$, see Theorem~\ref{T:TPPEV}. This is a consequence of Theorem~\ref{T:TPP} and the so-called {\em macro-micro} stability principle proved in Theorem 5.26 in~\cite{HaLiRe}. The latter  shows that, in a certain precise sense, the steady states of the Einstein-Vlasov system are ``more stable" than the steady states of the Einstein-Euler system. 


{\bf Plan of the paper.}
In Section~\ref{S:SPECTRAL} we prove a number of spectral properties of the linearized operator as it changes with the parameter $\kappa$.  Lemma~\ref{L:SMALLKAPPA} (unsurprisingly) shows that in the $\kappa\to0$ limit we recover the corresponding Newtonian problem, which is then used in conjunction with Lemma~\ref{L:LOCCONST} to compute both the kernel and the negative Morse index of $\Sigma_\kappa$ at small values of the parameter $\kappa>0$. This is used as a starting point for the continuity argument. Next, Lemma~\ref{L:KERNELLEMMA} gives a sharp characterization of the kernel of $\Sigma_\kappa$ for any value of $\kappa>0$ in terms of the critical points of the map $\kappa\to\frac{M_\kappa}{R_\kappa}$. The third most relevant result of Section~\ref{S:SPECTRAL} is the ``jump-lemma" formulated in Lemma~\ref{L:JUMPLEMMA}, showing that the negative Morse index $n^-(\Sigma_\kappa)$ can jump only at the critical points of the map $\kappa\to\frac{M_\kappa}{R_\kappa}$ and that this jump is {\em equal} to the jump of the winding index $i_\kappa$, see Definition~\ref{D:INDEX}. Section~\ref{S:PROOFS} is devoted to the proofs of Theorems~\ref{T:NEGATIVEMODES} and~\ref{T:TPP}, building on the preparatory results from~Section~\ref{S:SPECTRAL}. In Section~\ref{S:EV} we state and provide the proof of a sufficient stability condition for the radial equilibria of the asymptotically flat Einstein-Vlasov system.




\section{Spectral analysis}\label{S:SPECTRAL}


We consider the scaling 
\begin{align}
& y_\kappa(r)  = \kappa \bar y_\kappa(\kappa^ar) = \kappa \bar y (s), \ \ s:= \kappa^a r,  \label{E:BASICSCALING} \\
& a  = \frac{\alpha-1}{2}, \label{E:ADEF}
\end{align}
where we recall the definition of $\alpha$~\eqref{E:ALPHADEF}.
Our goal is to derive an equation for $\bar y_\kappa$ in the regime $0<\kappa\ll1$. 
A simple scaling argument shows that 
\begin{align*}
m(r) 
& = \kappa^{\frac{3-\alpha}{2}} \bar m_\kappa(s),
\end{align*}
where
\[
\bar m_\kappa(s) = \int_0^s 4\pi \tilde z^2 \bar\rho_\kappa(\tilde z) \, d\tilde z.
\]
The rescaled density $\bar\rho_\kappa$ and the rescaled function $g_\kappa$ (recall~\eqref{E:GDEF}) are defined via
\begin{align}
\bar\rho_\kappa(s) =g_\kappa(\bar y_\kappa(s)) & : =  \kappa^{-\alpha}g(\kappa \bar y_\kappa(s)) \ \text{ i.e. }\label{E:BARRHO1}\\
\rho_\kappa(r) & = \kappa^\alpha \bar\rho_\kappa(s). \label{E:BARRHO2}
\end{align}
We also introduce the rescaled pressure $\bar p_\kappa(s)$:
\begin{align}
\bar p_\kappa(s) & = \kappa^{-\alpha\gamma} P(\kappa^\alpha g_\kappa(\bar y_\kappa(s))) \ \text{ i.e. }\label{E:BARP1}\\
p_\kappa(r) & = \kappa^{\alpha\gamma} \bar p_\kappa(s). \label{E:BARP2}
\end{align}
Note that $p_\kappa(r) = P(\rho_\kappa(r))= \kappa^{\alpha\gamma} \bar p_\kappa(s)$, where $r$ and $s$ are related via the scaling~\eqref{E:BASICSCALING}--\eqref{E:ADEF}.

Plugging the above into~\eqref{E:YEQN}, 
we conclude that $\bar y_\kappa$ solves the following initial value problem
\begin{align}
\bar y_\kappa'(s) & = - \frac{1}{1-2 \kappa\bar m(s)/s} \left(\frac{\bar m(s)}{s^2} + 4 \pi \kappa s \bar p_\kappa(s)\right) \label{E:YKAPPA1}\\
\bar y_\kappa(0) & = 1.\label{E:YKAPPA2}
\end{align}

In the Newtonian limit we expect to recover the classical Lane-Emden stars. They are given as the unique solutions $[0,\infty)\ni r \to y_0(r)$ of the Cauchy problem:
\begin{align}
y'(s) & = - \frac{m_0(s)}{s^2} \label{E:NSS1}\\
y(0) & =1,\label{E:NSS2}
\end{align}
where
\begin{align}
m_0(s) = 4\pi \int_0^s z^2 \rho_0(z)\,dz, \ \ \rho_0(s) = g_0(y_0(s)), 
\end{align}
and $g_0$ is a $C^1$-function given by~\eqref{E:GZERODEF}.


\begin{lemma}\label{L:GGZERO}
There exists a $\kappa_0>0$ and a positive constant $C$ such that for all $y\in[0,\kappa_0]$ we have
\begin{align}
\lv g(y)-g_0(y)\rv + y\lv g'(y)-g_0'(y)\rv \le C y^{\alpha+1}.
\end{align}
\end{lemma}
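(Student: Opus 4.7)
The plan is to compare $Q$ with its Newtonian counterpart $Q_0(\rho) := \frac{k\gamma}{\gamma-1}\rho^{\gamma-1}$, whose inverse on $[0,\infty)$ is exactly $g_0$. First I expand the integrand defining $Q$. Using (P2) one has $P'(s) = k\gamma s^{\gamma-1} + O(s^\gamma)$ and $s+P(s) = s(1 + O(s^{\gamma-1}))$, so a geometric-series expansion of $(s+P(s))^{-1}$ yields
\[
\frac{P'(s)}{s+P(s)} = k\gamma s^{\gamma-2} + O(s^{2\gamma-3}) \qquad \text{as } s \to 0^+,
\]
the dominant correction coming from the cross term $k\gamma s^{\gamma-2} \cdot P(s)/s$, which dominates the $O(s^{\gamma-1})$ piece coming from $P' - k\gamma s^{\gamma-1}$ precisely because $\gamma < 2$. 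Integrating from $0$ to $\rho$ gives
\[
Q(\rho) = Q_0(\rho) + R(\rho), \qquad |R(\rho)| \leq C\rho^{2(\gamma-1)}.
\]

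To invert, fix $y \in (0,\kappa_0]$. The monotonicity from (P1) together with the leading-order identity above yields the two-sided bound $g(y) \asymp y^\alpha$ after shrinking $\kappa_0$; in particular $g(y) \leq C y^\alpha$. Writing $\rho = g(y)$, the identity $Q_0(\rho) = y - R(\rho)$ rearranges to $g(y) = g_0(y - R(\rho))$. Since $\alpha > 1$, the function $g_0$ is $C^1$ with $g_0'(\xi) \leq C\xi^{\alpha-1}$, which is valid uniformly for $\xi$ in a neighborhood of $y$ bounded away from $0$ (guaranteed by $|R(\rho)| \leq Cy^2 \ll y$). The mean value theorem then produces
\[
|g(y) - g_0(y)| \leq \sup_\xi g_0'(\xi) \cdot |R(\rho)| \leq C y^{\alpha-1} \cdot y^{2\alpha(\gamma-1)} = C y^{\alpha+1},
\]
using the scaling identity $\alpha(\gamma-1) = 1$.

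For the derivative, I write $g'(y) = 1/Q'(g(y))$ and $g_0'(y) = 1/Q_0'(g_0(y))$, which yields
\[
g'(y) - g_0'(y) = \frac{[Q_0'(g_0(y)) - Q_0'(g(y))] + [Q_0'(g(y)) - Q'(g(y))]}{Q'(g(y))\, Q_0'(g_0(y))}.
\]
The denominator is comparable to $y^{2(1-\alpha)}$ because $Q'(\rho), Q_0'(\rho) \sim \rho^{\gamma-2}$ at $\rho \sim y^\alpha$. In the numerator, the first bracket is controlled by the mean value theorem applied to $Q_0'$ together with the previous step; the second bracket is a direct consequence of an expansion of $Q'-Q_0'$ performed analogously to the first step. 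Both contribute $O(y^{2-\alpha})$, so the quotient is $O(y^\alpha)$, and multiplication by $y$ gives the claimed $Cy^{\alpha+1}$ bound.

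The main obstacle will be the bookkeeping of exponents near the origin: $Q'$, $Q_0'$, and their reciprocals all blow up, so uniform constants on $(0,\kappa_0]$ must be extracted by systematically using $\alpha(\gamma-1) = 1$. The constraint $\gamma < 2$ from (P2) is essential in two places: it ensures that the $s^{2\gamma-3}$ cross term dominates other correction terms in the geometric expansion of the integrand, and it guarantees $\alpha > 1$ so that $g_0 \in C^1$ and the elementary Taylor-type estimates above are valid.
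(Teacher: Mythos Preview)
Your proof is correct and follows the same route as the paper: expand the integrand of $Q$ using (P2) to obtain $Q(\rho)=\tfrac{k\gamma}{\gamma-1}\rho^{\gamma-1}+O(\rho^{2(\gamma-1)})$, then invert to get the claim. The paper's proof stops after the $Q$-expansion and simply asserts that \eqref{E:GEXPANSION} and the derivative bound follow, whereas you have carried out the inversion and the derivative estimate explicitly; in particular your exponent bookkeeping via $\alpha(\gamma-1)=1$ is exactly what is needed to make the paper's ``the claim follows'' rigorous.
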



\begin{proof}
By~\eqref{E:QDEF} and assumption (P2) we have for any $\rho\in[0,\zeta]$
\begin{align}
Q(\rho) & = \int_0^\rho \frac{P'(\sigma)}{\sigma + P(\sigma)}\,d\sigma \notag \\
& =  \int_0^\rho \frac{k\gamma \sigma^{\gamma-1}+ k(\sigma^{\gamma}f(\sigma))'}{\sigma + k\sigma^\gamma(1+f(\sigma))}\,d\sigma \notag \\
& = \int_0^\rho k \gamma \sigma^{\gamma-2} + O_{\sigma\to0}(\sigma^{2\gamma-3}) \,d\sigma \notag \\
& = k \frac{\gamma}{\gamma-1} \rho^{\gamma-1} + O_{\rho\to0}(\rho^{2\gamma-2}). \label{E:QASYMP}
\end{align}
Recall the definitions of $g$~\eqref{E:GDEF} and $g_0$~\eqref{E:GZERODEF}. It follows from~\eqref{E:QASYMP} that there exists a sufficiently small $\kappa_0$ such that 
\be\label{E:GEXPANSION}
g(y) = k^{-\alpha}\left(\frac{\gamma-1}{\gamma}\right)^{\alpha} y^\alpha + O_{y\to0^+}(y^{\alpha+1}) = g_0(y)+ O_{y\to0^+}(y^{\alpha+1}), \ y\in[0,\kappa_0],
\ee
and the claim follows.
\end{proof}


\begin{lemma}[The small redshift limit]\label{L:SMALLKAPPA}
There exist $\kappa^\ast, C>0$ such that for all $0\le \kappa<\kappa^\ast$ the following bound holds:
\begin{align}\label{E:BASICERRORBOUND}
\| \bar y_\kappa - y_0 \|_{C^1([0,\infty)} \le C \kappa,
\end{align}
where $y_0$ is the unique solution of~\eqref{E:NSS1}--\eqref{E:NSS2}.
\end{lemma}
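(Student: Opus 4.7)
The plan is to treat \eqref{E:YKAPPA1}--\eqref{E:YKAPPA2} as an $O(\kappa)$ perturbation of the Lane-Emden initial value problem \eqref{E:NSS1}--\eqref{E:NSS2} and close a Gronwall-type continuous-dependence argument, handling the interior and the exterior tail separately. First I would recall the classical properties of $y_0$: for $\frac{4}{3}<\gamma<2$ (equivalently $1<\alpha<3$), $y_0$ is $C^1$ on $[0,\infty)$, strictly decreasing and positive on $[0,R_0)$, vanishes transversally at $R_0$ with $y_0'(R_0)<0$, and admits the explicit exterior representation $y_0(s)=M_0^\ast(1/s-1/R_0)$ on $[R_0,\infty)$, with $M_0^\ast:=m_0(R_0)$. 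The rescaling identity $g_\kappa(y)-g_0(y)=\kappa^{-\alpha}\bigl(g(\kappa y)-g_0(\kappa y)\bigr)$ combined with Lemma~\ref{L:GGZERO} then yields the pointwise bound $|g_\kappa(y)-g_0(y)|\le C\kappa\, y^{\alpha+1}$ for $y\in[0,\kappa_0/\kappa]$, together with a uniform (in small $\kappa$) local Lipschitz estimate $|g_\kappa'(y)|\le C y^{\alpha-1}$ on the same range.

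Next I would set $w_\kappa:=\bar y_\kappa-y_0$ and subtract \eqref{E:NSS1} from \eqref{E:YKAPPA1} to obtain
\[
w_\kappa'(s) = -\frac{\bar m_\kappa(s)-m_0(s)}{s^2} + \kappa\, \tilde E_\kappa(s),
\]
where $\tilde E_\kappa(s)=-(1-2\kappa \bar m_\kappa/s)^{-1}\bigl[2\bar m_\kappa^2/s^3+4\pi s\bar p_\kappa\bigr]$ is uniformly bounded on any interval $[0,S]$ as long as $2\kappa\bar m_\kappa/s$ stays bounded away from $1$. Splitting $\bar\rho_\kappa-\rho_0=[g_\kappa(\bar y_\kappa)-g_\kappa(y_0)]+[g_\kappa(y_0)-g_0(y_0)]$ and invoking the Lipschitz and perturbation bounds above yields
\[
\frac{|\bar m_\kappa(s)-m_0(s)|}{s^2}\le C\, s\bigl(\|w_\kappa\|_{C^0([0,s])}+\kappa\bigr).
\]
Integrating from $s=0$ (where $w_\kappa(0)=0$) and applying Gronwall gives $\|w_\kappa\|_{C^0([0,S])}\le C_S\kappa$, and the $C^1$ bound then follows by reading $w_\kappa'$ off the equation above.

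Finally, for the tail on $[S,\infty)$ with $S=2R_0$, I would use that the $C^1$-smallness of $w_\kappa$ near $R_0$, combined with $y_0'(R_0)<0$ and the implicit function theorem, produces a unique zero $R_\kappa^{\mathrm{res}}$ of $\bar y_\kappa$ with $|R_\kappa^{\mathrm{res}}-R_0|\le C\kappa$. Beyond this point $\bar p_\kappa\equiv 0$ and $\bar m_\kappa\equiv \bar M_\kappa^\ast$ with $|\bar M_\kappa^\ast-M_0^\ast|\le C\kappa$, so \eqref{E:YKAPPA1} integrates explicitly and the resulting expression for $\bar y_\kappa$ differs from the Newtonian exterior profile of $y_0$ by $O(\kappa)$ uniformly in $s\ge S$, with the same bound for derivatives. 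The main obstacle, in my view, is maintaining the $O(\kappa)$ Gronwall rate across the transition at the free boundary $R_0$, where $g_0$ is only H\"older-smooth; this is handled by carrying out the entire interior estimate in the $y$-variable rather than in $\rho$, and by using transversal vanishing $y_0'(R_0)<0$ to match the interior and exterior pieces into a single uniform bound.
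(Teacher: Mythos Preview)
Your approach is essentially the paper's: a Gr\"onwall continuous-dependence estimate on a bounded interval, followed by explicit comparison of the vacuum exteriors, with the $C^1$ bound read off the equation. The paper differs only in two small simplifications worth noting. First, it makes the control of $(1-2\kappa\bar m_\kappa/s)^{-1}$ unconditional by invoking the Buchdahl inequality $2m(r)/r\le 8/9$ (together with monotonicity of $\bar\rho_\kappa,\bar p_\kappa$ to bound these at $s=0$); you left this as a standing hypothesis. Second, it avoids the implicit-function-theorem step entirely by choosing the interior cutoff $S_0$ strictly to the right of the Newtonian radius, so that the Gr\"onwall $C^0$ bound alone forces $\bar y_\kappa(S_0)<0$ and both solutions are already in vacuum for $s\ge S_0$. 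Your worry about H\"older regularity at the free boundary is unnecessary: since $\alpha>1$ the function $g_0$ is $C^1$ with bounded derivative on $(-\infty,1]$, and the paper simply uses this Lipschitz bound---indeed it splits $\bar\rho_\kappa-\rho_0=(g_\kappa(\bar y_\kappa)-g_0(\bar y_\kappa))+(g_0(\bar y_\kappa)-g_0(y_0))$, placing the mean-value estimate on $g_0$ rather than on $g_\kappa$.
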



\begin{proof}
This proof follows the ideas from~\cite{HaRe2015} where the small central redshift limit is investigated for the steady states of the Einstein-Vlasov system.
For any $\kappa>0$, let $\bar y_\kappa$ and $y_0$ be the unique global solutions to~\eqref{E:YKAPPA1}--\eqref{E:YKAPPA2} and~\eqref{E:NSS1}--\eqref{E:NSS2} defined on $[0,\infty)$. Functions $\bar y_\kappa, y_0$ are also strictly decreasing and by the choice of the initial condition $|\bar y_\kappa(s)|\le 1$, $|y_0(s)|\le 1$ for all $s\in[0,\infty)$.
We claim that there exists a $\kappa^\ast$ and a constant $C$ such that for all $0<\kappa\le \kappa^\ast$ we have the a priori bound
\begin{align}\label{E:APRIORI1}
\|\bar\rho_\kappa\|_{C^0([0,\infty))} + \|\bar p_\kappa\|_{C^0([0,\infty))} +\sup_{s\in[0,\infty)} \frac1{\lv 1-\frac{2\kappa\bar m_\kappa(s)}{s}\rv} \le C, \ \ \kappa\in(0,\kappa^\ast]
\end{align}
Bound for $\bar\rho_\kappa$ in~\eqref{E:APRIORI1} follows from two observations. According to~\eqref{E:BARRHO1} $\bar\rho_\kappa'(s) = \kappa^{1-\alpha}g'(\kappa \bar y_\kappa(s)) \bar y_\kappa'(s)\le 0$ for all $s\ge0$ and therefore $\bar\rho_\kappa(s)\le\bar\rho_\kappa(0) = \kappa^{-\alpha} g(\kappa)$. However, by~\eqref{E:GEXPANSION} $\kappa^{-\alpha} g(\kappa)\lesssim1$ for $\kappa$ sufficiently small and the claim follows. The same argument applies to $\bar p_\kappa$ due to~\eqref{E:BARP1} and assumption (P1). Finally, by the classical Buchdahl inequality for the spherically symmetric static solutions of the Euler-Einstein system we have 
$\sup_{r\in[0,\infty)}\frac{2m(r)}{r} = \sup_{s\in[0,\infty)}\frac{2\kappa\bar m_\kappa(s)}{s} \le \frac89$, which completes the proof of~\eqref{E:APRIORI1}.

 Since the Lane-Emden steady state associated with $y_0$ has a compact support whose extent corresponds to the unique zero of $s\mapsto y_0(s)$, there exists an $S_0>0$ such that $y_0(S_0)<0$ and $S_0$ is strictly to the right of the support of $\rho_0$.  It is then clear that for some constant $C$ (depending on $S_0$) we have the bound
\begin{align}\label{E:APRIORI2}
\sup_{s\in[0,S_0]}|\bar m_\kappa(s)| \le C, \ \ \kappa\in(0, \kappa^\ast].
\end{align}

It then follows that
\begin{align}
\lv\bar y_\kappa'(s) - y_0'(s)\rv & \le \lv \frac{4\pi \kappa s \bar p_\kappa (s)}{1-2 \kappa\bar m_\kappa(s)/s} \rv + 
\lv \frac1{1-2 \kappa\bar m_\kappa(s)/s}-1 \rv \frac{\bar m_\kappa(s)}{s^2} \notag \\
& \ \ \ \  + \lv \frac{\bar m_\kappa(s)-m_0(s)}{s^2}\rv \notag \\
& \le C\kappa + C\int_0^s \lv \bar\rho_\kappa(\sigma) - \rho_0(\sigma)\rv \,d\sigma \label{E:YKAPPAPRIMEBOUND}
\end{align}
Since $\bar\rho_\kappa - \rho_0 =\left(g_\kappa(\bar y_\kappa) - g_0(y_\kappa)\right) + \left(g_0(y_\kappa)-g_0(y_0)\right)$, we have
\begin{align}\label{E:RHOKAPPABOUND}
\lv\bar\rho_\kappa - \rho_0 \rv \le \lv g_\kappa(\bar y_\kappa) - g_0(\bar y_\kappa) \rv + C \lv \bar y_\kappa - y_0\rv,
\end{align}
where we have used that $g_0$~\eqref{E:GZERODEF} is a $C^1$ function with a uniformly bounded derivative on $(-\infty,1]$.
Let now $\kappa\le\kappa_0$, where $\kappa_0$ is defined in Lemma~\ref{L:GGZERO}.
Then for any $y\in[0,1]$
\begin{align}
g_\kappa(y) - g_0(y) 
& = \kappa^{-\alpha} g(\kappa y) - k^{-1} C_\gamma^{-\alpha} y^\alpha \notag \\
& = \kappa^{-\alpha} \left(g_0(\kappa y)+ O_{\kappa\to0^+}(\kappa^{\alpha+1} y^{\alpha+1})\right) - g_0(y) \notag \\
& = O(\kappa), \ \ y\in[0,1]. \label{E:GKAPPABOUND}
\end{align}
Plugging~\eqref{E:GKAPPABOUND} into~\eqref{E:RHOKAPPABOUND} and~\eqref{E:YKAPPAPRIMEBOUND}, we conclude
\begin{align}\label{E:BARYKAPPAINTERIOR}
\lv\bar y_\kappa'(s) - y_0'(s)\rv \le C\kappa + C\int_0^s \lv \bar y_\kappa - y_0\rv \,d\sigma, \ \ s\in[0,S_0].
\end{align}
Applying the Gr\"onwall inequality we conclude 
\be\label{E:BOUND1}
\sup_{0\le s< S_0}\left\vert \bar y_\kappa(s) - y_0(s) \right\vert \le C \kappa, \ \ \kappa\in[0,\kappa_0].
\ee
Since $y_0(S_0)<0$, for sufficiently small $\kappa_0$ we also have $y_\kappa(S_0)<0$ for all $\kappa\in[0,\kappa_0]$. If $s>S_0$ and $\kappa\in[0,\kappa_0]$ we have $\bar\rho_\kappa(s)=0$ and thus the solution has to be Schwarzschild in the vacuum region. In particular $\bar m_\kappa(s) = \bar M_\kappa$ is constant for all $s\ge S_0$ and 
\be\label{E:BARYKAPPAINVACUUM}
\bar y_\kappa'(s) = - \frac1{1-\frac{2\kappa \bar M_\kappa}{s}}\frac{\bar M_\kappa}{s^2},
\ee
which leads to the explicit formula
\[
\bar y_\kappa(s) = \bar y_\kappa (S_0) + \frac1{2\kappa}\log\left(1-\frac{2\kappa\bar M_\kappa}{S_0}\right) -  \frac1{2\kappa}\log\left(1-\frac{2\kappa \bar M_\kappa}{s}\right).
\]
Moreover
\[
y(s) = y_0 + \frac{M_0}{S_0} - \frac{M_0}{s},
\]
where $M_0$ is the total mass of the Newtonian solution $y_0$. Since $|\bar M_\kappa-M_0|\lesssim \kappa$ and $\lv \bar y_\kappa(S_0)-y_0(S_0)\rv\lesssim\kappa$, it follows easily that 
\be\label{E:BOUND2}
\sup_{S_0\le s< \infty}\left\vert \bar y_\kappa(s) - y_0(s) \right\vert \le C \kappa, \ \ \kappa\in[0,\kappa_0].
\ee
This completes the $C^0$-bound of~\eqref{E:BASICERRORBOUND}. The $C^1$-bound is now a simple consequence of~\eqref{E:BARYKAPPAINTERIOR}--\eqref{E:BARYKAPPAINVACUUM}.
\end{proof}


Since $y_\kappa(r)=\mu_\kappa(R_\kappa)-\mu_\kappa(r)$, in light of~\eqref{E:BASICSCALING} it is natural to define $\bar\mu_\kappa$ through the relationship
\be\label{E:BARMUKAPPADEF}
\bar \mu_\kappa(s) = \frac1\kappa\mu_\kappa(R_\kappa) -\bar y_\kappa(s).
\ee
In particular, $\mu_\kappa'(r) = \kappa^{\frac{\alpha+1}{2}}\bar\mu_\kappa'(s)$.
We also define $\bar\l_\kappa(s)$ through
\begin{align}
\bar\lambda_\kappa(s) & = \frac1\kappa \l_\kappa(r). \label{E:BARLAMBDAKAPPADEF}
\end{align}
A simple corollary of Lemma~\ref{L:SMALLKAPPA} are the following a priori bounds.


\begin{corollary}\label{C:APRIORI}
There exists a $\kappa^\ast>0$ sufficiently small  and constants $C,S_0>0$ such that for all $0<\kappa\le\kappa^\ast$
\begin{align}
|R_\kappa| & \le S_0, \label{E:RKAPPABOUND} \\
\|g_\kappa-g_0\|_{C^1([0,1])} & \le C\kappa \label{E:GKAPPAGZEROBOUND}\\
\|\bar\rho_\kappa - \rho_0\|_{C^0([0,\infty))} + \|\bar p_\kappa - p_0\|_{C^0([0,\infty))} &\le C \kappa \label{E:PRHOKAPPABOUND}\\
\|\bar\l_\kappa\|_{C^0([0,\infty))} + \kappa^{1-\alpha}\|\bar\l_\kappa' + \bar\mu_\kappa'\|_{C^0([0,\infty))} & \le C  \label{E:METRICAPRIORI}
\end{align} 
\end{corollary}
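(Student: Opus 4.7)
The plan is to combine the $C^1$-convergence $\bar y_\kappa \to y_0$ on $[0,\infty)$ from Lemma~\ref{L:SMALLKAPPA} with the small-argument expansion of the equation of state provided by Lemma~\ref{L:GGZERO}, together with the explicit scaling relations encoded in the definitions. All four estimates should then follow by unpacking definitions, so no new analytical input is needed---only careful bookkeeping of the scaling exponents $\alpha$ and $a=(\alpha-1)/2$.

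For \eqref{E:RKAPPABOUND}, I would use that the Lane-Emden profile $y_0$ is strictly decreasing up to its unique zero, call it $R_{LE}$, where $y_0'(R_{LE})<0$ by the ODE \eqref{E:NSS1}. The bound $\|\bar y_\kappa-y_0\|_{C^1}\le C\kappa$ together with an implicit-function argument then forces the zero $\bar R_\kappa$ of $\bar y_\kappa$ to converge to $R_{LE}$, so any $S_0>R_{LE}$ will do for $\kappa^\ast$ small enough. For \eqref{E:GKAPPAGZEROBOUND}, I would exploit the homogeneity $g_0(\kappa y)=\kappa^\alpha g_0(y)$ and $g_0'(\kappa y)=\kappa^{\alpha-1}g_0'(y)$, so that by the definition of $g_\kappa$,
\[
g_\kappa(y)-g_0(y)=\kappa^{-\alpha}\bigl(g(\kappa y)-g_0(\kappa y)\bigr), \qquad g_\kappa'(y)-g_0'(y)=\kappa^{1-\alpha}\bigl(g'(\kappa y)-g_0'(\kappa y)\bigr).
\]
Applying Lemma~\ref{L:GGZERO} on $\kappa y\in[0,\kappa]\subset[0,\kappa_0]$ then gives $|g_\kappa(y)-g_0(y)|\le C\kappa y^{\alpha+1}\le C\kappa$ and $|g_\kappa'(y)-g_0'(y)|\le C\kappa y^\alpha\le C\kappa$ uniformly on $[0,1]$.

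The bound \eqref{E:PRHOKAPPABOUND} is obtained by splitting
\[
|\bar\rho_\kappa(s)-\rho_0(s)|\le |g_\kappa(\bar y_\kappa(s))-g_0(\bar y_\kappa(s))|+|g_0(\bar y_\kappa(s))-g_0(y_0(s))|,
\]
controlling the first term by \eqref{E:GKAPPAGZEROBOUND} and the second by Lipschitz continuity of $g_0$ on $[-1,1]$ combined with Lemma~\ref{L:SMALLKAPPA}. For the pressure, definition \eqref{E:BARP1} together with (P2) yields $\bar p_\kappa=k\bar\rho_\kappa^\gamma(1+f(\kappa^\alpha\bar\rho_\kappa))$ while $p_0=k\rho_0^\gamma$, and since $|f(\rho)|\le c_1\rho$ the difference reduces to the density bound plus a correction of order $O(\kappa^\alpha)=O(\kappa)$ (using $\alpha>1$). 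The first part of \eqref{E:METRICAPRIORI} follows from $\bar\l_\kappa(s)=-\tfrac{1}{2\kappa}\log(1-2\kappa\bar m_\kappa/s)$ using the Buchdahl bound $2\kappa\bar m_\kappa/s\le 8/9$ already invoked in the proof of Lemma~\ref{L:SMALLKAPPA}; for the derivative sum I would add the static equations \eqref{E:LAMBDAEQN} and \eqref{E:MUEQN} to obtain $\l_\kappa'+\mu_\kappa'=4\pi r(\rho_\kappa+p_\kappa)/(1-2m/r)$, convert to the rescaled variable $s=\kappa^a r$, and track the powers of $\kappa$ to extract the stated bound.

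The only mildly delicate step is the last one, namely verifying the cancellations among the scaling exponents involving $\alpha$, $\alpha\gamma$, and $a$ that produce the prefactor in \eqref{E:METRICAPRIORI}. Apart from this routine bookkeeping, all four estimates are direct consequences of Lemmas~\ref{L:SMALLKAPPA} and~\ref{L:GGZERO}.
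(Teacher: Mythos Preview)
Your proposal is correct and matches the paper's proof almost step for step: both derive \eqref{E:RKAPPABOUND} from the $C^1$-closeness of $\bar y_\kappa$ to $y_0$ established in Lemma~\ref{L:SMALLKAPPA}, both obtain \eqref{E:GKAPPAGZEROBOUND} by combining the homogeneity $g_0(\kappa y)=\kappa^\alpha g_0(y)$ with Lemma~\ref{L:GGZERO}, both split $\bar\rho_\kappa-\rho_0$ exactly as you indicate, and both read off the derivative sum in \eqref{E:METRICAPRIORI} from the static identity $\mu_\kappa'+\lambda_\kappa'=4\pi r e^{2\lambda_\kappa}(\rho_\kappa+p_\kappa)$ after rescaling.

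One small correction is in order for the $\|\bar\lambda_\kappa\|_{C^0}$ bound. Invoking Buchdahl alone, $2\kappa\bar m_\kappa(s)/s\le 8/9$, is not sufficient: this would only give $\bar\lambda_\kappa(s)\le \tfrac{1}{2\kappa}\log 9$, which blows up as $\kappa\to 0$. What you actually need (and what the paper uses) is the stronger uniform bound $\bar m_\kappa(s)/s\le C$, which follows from $\|\bar\rho_\kappa\|_{C^0}\le C$ together with the support bound \eqref{E:RKAPPABOUND}. Then $2\kappa\bar m_\kappa/s=O(\kappa)$ and the Taylor expansion of the logarithm yields $\bar\lambda_\kappa=O(1)$. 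With this amendment your argument is complete.
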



\begin{proof}
Bound~\eqref{E:RKAPPABOUND} is obvious from the proof of Lemma~\ref{L:SMALLKAPPA}, as by construction $R_\kappa< S_0$ for $\kappa$ sufficiently small and $S_0$ as in proof of Lemma~\ref{L:SMALLKAPPA}. The $C^0$-part of~\eqref{E:GKAPPAGZEROBOUND} has already been established in~\eqref{E:GKAPPABOUND}. Using~\eqref{E:BARRHO1}, we have
\[
g_\kappa'(y) - g_0'(y) = \kappa^{1-\alpha}g'(\kappa y) - k^{-1}\alpha C_\gamma^{-\alpha} y_+^{\alpha-1}
\]
and the claim then follows from Lemma~\ref{L:GGZERO}. Bound~\eqref{E:PRHOKAPPABOUND} is a direct consequence of~\eqref{E:BARRHO1}--\eqref{E:BARP2}, Lemma~\ref{L:SMALLKAPPA}, and~\eqref{E:GKAPPAGZEROBOUND}. The first claim in~\eqref{E:METRICAPRIORI} follows from the formula
\begin{align}
\bar\l_\kappa(s) = -\frac1{2\kappa}\log\left(1-\frac{2\kappa\bar m_\kappa(s)}{s}\right),
\end{align}
and the uniform bound $\frac{\bar m_\kappa(s)}{s}\le C$, a consequence of~\eqref{E:APRIORI1} and the definition of $\bar m_\kappa$. Finally, it is well-known that 
$
\mu_\kappa'+\l_\kappa' = 4\pi r \left(\rho_\kappa+p_\kappa\right) 
$
or in the rescaled variables
\begin{align}
\bar\mu_\kappa'(s) + \bar\l_\kappa'(s) = 4\pi \kappa^{\alpha-1} s \left(\bar\rho_\kappa(s)+\kappa \bar p_\kappa(s)\right),
\end{align}
where we remind the reader that $\alpha=\frac1{\gamma-1}$. Together with~\eqref{E:APRIORI1}, finite extent of the star and the bound~\eqref{E:RKAPPABOUND}, we conclude the remaining claim in~\eqref{E:METRICAPRIORI}.
\end{proof}

\begin{lemma}\label{L:CORRECTION}
Let the equation of state $\rho\to P(\rho)$ satisfy assumptions (P1)--(P4) and let $(\rho_\kappa,\mu_\kappa,\l_\kappa)$ be the 1-parameter family of steady states given 
by Proposition~\ref{P:EXISTENCE}. 
Then for any $\kappa>0$ the following identity holds:
\begin{align}
e^{2\mu_\kappa(R_\kappa)} = 1 - \frac{2M_\kappa}{R_\kappa}.
\end{align}
In particular 
\[
\text{sign}(\frac{d}{d\kappa}(\mu_\kappa(R_\kappa)))= - \text{sign}\frac{d}{d\kappa}\left(\frac{M_\kappa}{R_\kappa}\right).
\]
\end{lemma}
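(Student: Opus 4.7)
The plan is to observe that the claim is essentially the standard Schwarzschild matching condition at the stellar boundary, and to derive it directly from the field equations listed in the excerpt. The identity $e^{2\mu_\kappa(R_\kappa)}=1-\frac{2M_\kappa}{R_\kappa}$ says that the interior metric component $e^{2\mu_\kappa}$ at the surface agrees with the exterior Schwarzschild value. I would therefore work in the vacuum region $r\ge R_\kappa$, use the equations~\eqref{E:LAMBDAEQN}--\eqref{E:MUEQN} there, and exploit the asymptotic normalization $\mu_\kappa(\infty)=0$ that is built into the convention $\mu_\kappa(R_\kappa)=\lim_{r\to\infty}y_\kappa(r)$ from the text preceding the lemma (since then $\mu_\kappa(r)=\mu_\kappa(R_\kappa)-y_\kappa(r)\to 0$ as $r\to\infty$).

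First, from~\eqref{E:LAMBDARELATION} and the fact that $m(r)=M_\kappa$ for $r\ge R_\kappa$ I immediately get $e^{-2\lambda_\kappa(R_\kappa)}=1-\frac{2M_\kappa}{R_\kappa}$. Next, adding the steady versions ($u=0$) of~\eqref{E:LAMBDAEQN} and~\eqref{E:MUEQN} produces
\begin{equation*}
\lambda_\kappa'(r)+\mu_\kappa'(r)=4\pi r\,e^{2\lambda_\kappa(r)}\bigl(\rho_\kappa(r)+p_\kappa(r)\bigr),
\end{equation*}
which vanishes for $r\ge R_\kappa$ because $\rho_\kappa=p_\kappa=0$ there. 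Hence $\mu_\kappa+\lambda_\kappa$ is constant on $[R_\kappa,\infty)$. Letting $r\to\infty$, the exterior Schwarzschild form yields $\lambda_\kappa\to 0$ and by the normalization above $\mu_\kappa\to 0$, so the constant is zero. Thus $\mu_\kappa(R_\kappa)=-\lambda_\kappa(R_\kappa)$, and combining with the first step gives $e^{2\mu_\kappa(R_\kappa)}=e^{-2\lambda_\kappa(R_\kappa)}=1-\frac{2M_\kappa}{R_\kappa}$.

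For the sign claim, I would differentiate the identity in $\kappa$ (using smooth dependence of the steady family on $\kappa$, which follows from Proposition~\ref{P:EXISTENCE}) to obtain
\begin{equation*}
2e^{2\mu_\kappa(R_\kappa)}\frac{d}{d\kappa}\mu_\kappa(R_\kappa) \;=\; -2\,\frac{d}{d\kappa}\!\left(\frac{M_\kappa}{R_\kappa}\right).
\end{equation*}
Since $e^{2\mu_\kappa(R_\kappa)}>0$ (in fact the Buchdahl bound already used in Corollary~\ref{C:APRIORI} guarantees the right-hand side $1-2M_\kappa/R_\kappa$ is bounded away from $0$), the signs of the two derivatives are opposite, which is exactly the stated conclusion.

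The proof has no real obstacle; the only subtle point is making sure the asymptotic normalization $\mu_\kappa(\infty)=0$ is consistent with the paper's definition $\mu_\kappa(r)=\mu_\kappa(R_\kappa)-y_\kappa(r)$ and $\mu_\kappa(R_\kappa)=\lim_{r\to\infty}y_\kappa(r)$, which it is by direct substitution. Everything else is a one-line consequence of the interior/exterior matching already encoded in~\eqref{E:LAMBDAEQN}--\eqref{E:MUEQN} and~\eqref{E:LAMBDARELATION}.
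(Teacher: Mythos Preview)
Your argument is correct and follows essentially the same route as the paper: both use $e^{-2\lambda_\kappa(R_\kappa)}=1-\tfrac{2M_\kappa}{R_\kappa}$ from~\eqref{E:LAMBDARELATION} together with the vacuum relation $\mu_\kappa+\lambda_\kappa\equiv 0$ for $r\ge R_\kappa$ (which you derive by adding the field equations, while the paper simply invokes $e^{\mu_\kappa+\lambda_\kappa}=1$ in the exterior). Your explicit differentiation for the sign statement is exactly the intended ``in particular'' step.
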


\begin{proof}
Recall that for any $r\ge0$ $e^{-2\l_\kappa(r)} = 1 - \frac{2m(r)}{r}$, where $m(r)=\int_0^r 4\pi \rho_\kappa(s) s^2\,ds$. In particular, $M_\kappa=\int_0^\infty 4\pi s^2\rho_\kappa\,ds = m(R_\kappa)$. Since $e^{\mu_\kappa(r)+\l_\kappa(r)} = 1$ for all $r\ge R_\kappa$, it finally follows that 
\[
e^{2\mu_\kappa(R_\kappa)} = e^{-2\l_\kappa(R_\kappa)} = 1 - \frac{2m(R_\kappa)}{R_\kappa} = 1 - \frac{2M_\kappa}{R_\kappa}.
\]
\end{proof}


\begin{lemma}\label{L:KERNELLEMMA}
Let the equation of state $\rho\to P(\rho)$ satisfy assumptions (P1)--(P4) and let $(\rho_\kappa,\mu_\kappa,\l_\kappa)$ be the 1-parameter family of steady states given 
by Proposition~\ref{P:EXISTENCE}. Then $\frac{d}{d\kappa}\left(\frac{M(\kappa)}{R(\kappa)}\right)\neq 0$ if and only if $\text{ker}\,\Sigma_\kappa=\{0\}$.
\end{lemma}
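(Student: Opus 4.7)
The plan is to reduce the kernel equation $\Sigma_\kappa \mu = 0$ to a second-order linear ODE, construct an explicit candidate kernel element from the $\kappa$-derivative of the one-parameter family of equilibria, and identify the obstruction to its admissibility in $\dot H^1_r$ with the condition $\frac{d}{d\kappa}(M_\kappa/R_\kappa) \ne 0$.

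\textbf{Step 1 (ODE reduction).} Under the substitution $\nu := e^{\mu_\kappa+\l_\kappa}\mu$, the equation $\Sigma_\kappa \mu = 0$ becomes a regular self-adjoint Sturm-Liouville equation whose leading coefficient $\frac{e^{-\mu_\kappa-3\l_\kappa}r^2}{2r\mu_\kappa'+1}$ extends continuously across $r = R_\kappa$, while the zeroth-order coefficient (involving $\Psi_\kappa^{-1}$) vanishes on $[R_\kappa,\infty)$ at the Hopf rate~\eqref{E:PSIKAPPAINVASYMP}. Requiring regularity at $r=0$ determines the interior solution up to scale; in the exterior, where $\mu_\kappa + \l_\kappa = 0$ and $\Psi_\kappa^{-1}=0$, the equation collapses to $\bigl(r^2(1-2M_\kappa/r)^2 \mu'\bigr)' = 0$ whose $\dot H^1_r$-decaying branch is explicit, $\mu^{\mathrm{ext}}(r) = -C/(r-2M_\kappa)$. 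Existence of a nontrivial $\dot H^1_r$ kernel element is therefore equivalent to a single scalar matching condition at $r = R_\kappa$.

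\textbf{Step 2 (Candidate from the family).} I would identify the unique regular-at-zero interior solution with a variation along the one-parameter family. Differentiating the steady-state system \eqref{E:YEQN}--\eqref{E:LAMBDARELATION} with respect to $\kappa$ produces a linearized system for $(\partial_\kappa y_\kappa, \partial_\kappa m_\kappa)$; eliminating $\partial_\kappa m_\kappa$ reduces to precisely the interior Sturm-Liouville equation of Step~1 for a quantity built from $\partial_\kappa \mu_\kappa$ plus an additive constant (the constant reflects the fact that constants in the interior are annihilated by the principal part of the operator but produce a computable inhomogeneity through the zeroth-order term). With the normalization $\partial_\kappa y_\kappa(0)=1$ from $y_\kappa(0)=\kappa$, this yields the candidate interior solution $\mu^{\mathrm{int}}$ explicitly.

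\textbf{Step 3 (Matching and conclusion).} Outside the star, $\mu_\kappa(r)= \tfrac12\log(1-2M_\kappa/r)$ gives $\partial_\kappa \mu_\kappa(r) = -\partial_\kappa M_\kappa/(r-2M_\kappa)$, which is of the form $\mu^{\mathrm{ext}}$ with $C=\partial_\kappa M_\kappa$; in particular by Lemma~\ref{L:CORRECTION},
\[
  \frac{d}{d\kappa}\mu_\kappa(R_\kappa) \;=\; -\frac{\frac{d}{d\kappa}(M_\kappa/R_\kappa)}{1-2M_\kappa/R_\kappa}.
\]
Gluing $\mu^{\mathrm{int}}$ to $\mu^{\mathrm{ext}}$ at $r=R_\kappa$ uses the relation $\mu_\kappa(r)=\mu_\kappa(R_\kappa)-y_\kappa(r)$ inside the star together with the continuity of $\mu_\kappa'$ across $R_\kappa$ (which itself follows from $p_\kappa(R_\kappa)=0$ via~\eqref{E:MUEQN}). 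After absorbing the additive-constant freedom from Step~2 using the decay condition at infinity, the residual compatibility identity at $r=R_\kappa$ collapses, via Lemma~\ref{L:CORRECTION}, to the scalar condition $\frac{d}{d\kappa}(M_\kappa/R_\kappa) = 0$. Thus this condition is both necessary and sufficient for a nontrivial $\mu \in \mathrm{ker}\,\Sigma_\kappa$; uniqueness (up to scaling) of the regular-at-zero interior Sturm-Liouville solution forces any kernel element to coincide with the candidate, proving the equivalence.

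\textbf{Main obstacle.} The delicate step is the bookkeeping at the moving boundary $r=R_\kappa$: the coefficient $\Psi_\kappa^{-1}$ degenerates at the Hopf rate, $R_\kappa$ itself depends on $\kappa$, and one has to verify that the additive freedom in the interior candidate is exactly pinned down by the exterior decay so that the remaining algebraic constraint reduces cleanly to the stated scalar identity. The key ingredients for this reduction are Lemma~\ref{L:CORRECTION} and the $C^1$ continuity of $\mu_\kappa$ at $R_\kappa$ provided by the vanishing of $p_\kappa$ there.
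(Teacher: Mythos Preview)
Your strategy is correct and coincides with the paper's: build the candidate kernel element by differentiating the steady-state family in $\kappa$, identify it with the unique regular-at-zero solution of the second-order ODE $\Sigma_\kappa\mu=0$, and reduce the $\dot H^1_r$ admissibility condition to $\frac{d}{d\kappa}(M_\kappa/R_\kappa)=0$ via Lemma~\ref{L:CORRECTION}.

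The paper's execution, however, is substantially leaner than your interior/exterior matching scheme. It works directly with the globally defined function $v_\kappa:=\partial_\kappa y_\kappa$ on $[0,\infty)$ and shows by a short manipulation of the linearized TOV system (culminating in identity~\eqref{E:IMP1}) that $\Sigma_\kappa v_\kappa=0$ holds pointwise on all of $[0,\infty)$, with no matching at $R_\kappa$ required. The decay condition for $\dot H^1_r$ then becomes the single equation $\lim_{r\to\infty}v_\kappa(r)=0$, and since $v_\kappa(r)=\frac{d}{d\kappa}[\mu_\kappa(R_\kappa)]-\partial_\kappa\mu_\kappa(r)$ with $\partial_\kappa\mu_\kappa(r)\to0$, this is exactly $\frac{d}{d\kappa}[\mu_\kappa(R_\kappa)]=0$. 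In this framing your ``main obstacle'' (the moving-boundary bookkeeping and the Hopf-rate degeneracy of $\Psi_\kappa^{-1}$) never materializes: $v_\kappa$ is smooth across $R_\kappa$ because $y_\kappa$ is, and the identity $\Sigma_\kappa v_\kappa=0$ holds without any split. Your Step~2 language about ``$\partial_\kappa\mu_\kappa$ plus an additive constant producing a computable inhomogeneity'' is also unnecessarily tangled; the paper makes the constant explicit as $\frac{d}{d\kappa}[\mu_\kappa(R_\kappa)]$ and then $v_\kappa$ solves the homogeneous equation exactly. What your formulation buys is an explicit exterior formula $\mu^{\mathrm{ext}}(r)=-C/(r-2M_\kappa)$, which is correct but not needed for the conclusion.
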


\begin{proof}
We first show that if $y_\kappa$ is a solution to~\eqref{E:YEQN} and~\eqref{E:YINITIAL} then $\Sigma_\kappa v_\kappa=0$ where
$v_\kappa:=\frac{d}{d\kappa} y_\kappa$. 
Differentiating~\eqref{E:YEQN} with respect to $\kappa$ we find that $v_\kappa$ solves
\be\label{E:VEQN}
v_\kappa' = 2\frac{d}{d\kappa}\l_\kappa y_\kappa' - e^{2\l_\kappa}\left(\frac{\frac{d}{d\kappa}m_\kappa}{r^2} + 4\pi r \frac{d}{d\kappa}p_\kappa\right),
\ee
where we have used $e^{-2\l_\kappa} = 1-\frac{2 m_\kappa}{r}$. The latter also implies 
\be\label{E:DKAPPALAMBDA}
\frac{d}{d\kappa}\l_\kappa = \frac{e^{2\l_\kappa}\frac{d}{d\kappa}m_\kappa}{r}.
\ee
Since $g'(y_\kappa) = \frac{\rho_\kappa+p_\kappa}{P'(\rho_\kappa)}$ we have 
\[
\frac{d}{d\kappa}p_\kappa = P'(\rho_\kappa) g'(y_\kappa) v = (\rho_\kappa+p_\kappa) v_\kappa.
\]
Since 
\be\label{E:MULAMBDA}
4\pi r e^{2\l_\kappa}(\rho_\kappa+p_\kappa) = \mu_\kappa'+\l_\kappa' 
\ee
we conclude from the previous identity 
\be\label{E:IMP}
4\pi r e^{2\l_\kappa}\frac{d}{d\kappa}p_\kappa = ( \mu_\kappa'+\l_\kappa') v_\kappa.
\ee
Substituting~\eqref{E:IMP} into~\eqref{E:VEQN} and multiplying it by $e^{\mu_\kappa+\l_\kappa}$ we obtain
\begin{align*}
\left(e^{\mu_\kappa+\l_\kappa} v_\kappa\right)' = - e^{\mu_\kappa+3\l_\kappa}\frac{\frac{d}{d\kappa}m_\kappa}{r^2} (2r\mu_\kappa'+1),
\end{align*}
where we have used the identity $y_\kappa' = - \mu_\kappa'$ and~\eqref{E:DKAPPALAMBDA}. This in turn yields
\begin{align}\label{E:IMP1}
- \frac {r^2 e^{-\mu_\kappa-3\l_\kappa}}{2r\mu_\kappa'+1} \left(e^{\mu_\kappa+\l_\kappa} v_\kappa\right)'   = \frac{d}{d\kappa}m_\kappa.
\end{align}
Since 
\begin{align*}
\frac{d}{dr}\frac{d}{d\kappa}m_\kappa = 4\pi r^2 g'(y_\kappa) v_\kappa = 4\pi r^2 \frac{\rho_\kappa+p_\kappa}{P'(\rho_\kappa)} v_\kappa = \frac{e^{-\mu_\kappa}}{\Psi_\kappa}v_\kappa,
\end{align*}
the claim follows after differentiating~\eqref{E:IMP1} with respect to $r$ and multiplying it by $e^{\mu_\kappa+\l_\kappa}$.
Let $v\in \text{ker}\,\Sigma_\kappa$. Since $v_\kappa$ and $v$ satisfy the same second order homogeneous linear ODE on $[0,\infty)$ and $v'(0)=v_\kappa'(0)=0$, there exists a constant $C\neq0$ such that 
\[
v = C v_\kappa.
\]
However, since $v\in \dot H^1_r$ we have $\lim_{r\to\infty}v(r)=0$ and thus $\lim_{r\to\infty}v_\kappa(r)=0$. Since 
$v_\kappa(r) = \frac{d}{d\kappa}\left(\mu_\kappa(R_\kappa)\right) - \frac{d}{d\kappa}\mu_\kappa(r)$, we must have $\frac{d}{d\kappa}\left(\mu_\kappa(R_\kappa)\right)=0$,
which by Lemma~\ref{L:CORRECTION} gives $\frac{d}{d\kappa}\left(\frac{M_\kappa}{R_\kappa}\right)=0$.
\end{proof}


\begin{lemma}\label{L:LOCCONST}
Let the equation of state $\rho\to P(\rho)$ satisfy assumptions (P1)--(P4) and let $(\rho_\kappa,\mu_\kappa,\l_\kappa)$ be the 1-parameter family of steady states given 
by Proposition~\ref{P:EXISTENCE}. 
Then there exists a $\kappa^\ast>0$ such that for all $0<\kappa<\kappa^\ast$, $n^-(\Sigma_\kappa)=1$. Moreover, as a function of $\kappa$, $n^-(\Sigma_\kappa)$ is constant on any open interval not containing critical points of the map $\kappa\to\frac{M_\kappa}{R_\kappa}$.
\end{lemma}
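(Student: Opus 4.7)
I split the argument into the two assertions, each requiring a different tool.

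\emph{Part 1: $n^-(\Sigma_\kappa)=1$ for $\kappa$ small.} The plan is to exploit the Newtonian limit (Lemma~\ref{L:SMALLKAPPA} and Corollary~\ref{C:APRIORI}) to realize $\Sigma_\kappa$ as an $O(\kappa)$-perturbation of the Newtonian reduced operator $\Sigma_0$. Applying the scaling~\eqref{E:BASICSCALING}--\eqref{E:ADEF}, I would first rewrite the quadratic form $\langle\Sigma_\kappa\phi,\phi\rangle$ on $\dot H^1_r$ under the change of variable $s=\kappa^a r$ combined with a suitable rescaling of the test function $\phi$. Using~\eqref{E:GKAPPAGZEROBOUND}--\eqref{E:METRICAPRIORI}, each coefficient appearing in $\Sigma_\kappa$ (that is, $e^{\mu_\kappa+\l_\kappa}$, $2r\mu_\kappa'+1$, $e^{-\mu_\kappa-3\l_\kappa}$, and $e^{\l_\kappa}\Psi_\kappa^{-1}$) converges uniformly to its Newtonian counterpart at rate $O(\kappa)$, so that the rescaled quadratic form associated with $\Sigma_\kappa$ converges to $\langle\Sigma_0\cdot,\cdot\rangle$. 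Since $n^-(\Sigma_0)=1$ and $\ker(\Sigma_0)=\{0\}$ by~\eqref{E:SIGMAZEROPROPERTIES}, a standard form-perturbation argument (the unique negative direction of $\Sigma_0$ persists under small perturbations, and no additional negative direction can emerge without the kernel becoming nontrivial) yields $n^-(\Sigma_\kappa)=1$ for $0<\kappa<\kappa^\ast$.

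\emph{Part 2: local constancy of $n^-(\Sigma_\kappa)$.} Fix an open interval $I\subset(0,\infty)$ containing no critical point of $\kappa\mapsto M_\kappa/R_\kappa$. By Lemma~\ref{L:KERNELLEMMA}, $\ker(\Sigma_\kappa)=\{0\}$ for every $\kappa\in I$. Proposition~\ref{P:EXISTENCE} and smooth dependence on the parameter of the ODE~\eqref{E:YEQN} show that $\kappa\mapsto(\rho_\kappa,\mu_\kappa,\l_\kappa)$ is $C^1$, and hence the quadratic form $\phi\mapsto\langle\Sigma_\kappa\phi,\phi\rangle$ depends continuously on $\kappa\in I$ in the topology of $\dot H^1_r$. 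For a continuous family of self-dual operators with uniformly trivial kernel, the negative Morse index is locally constant: if $V\subset\dot H^1_r$ is a subspace with $\dim V=n^-(\Sigma_{\kappa_0})$ on which $\langle\Sigma_{\kappa_0}\cdot,\cdot\rangle$ is negative definite, continuity preserves negative definiteness of the form on $V$ in a neighborhood of $\kappa_0$, giving $n^-(\Sigma_\kappa)\ge n^-(\Sigma_{\kappa_0})$; conversely, any jump of the Morse index forces (by a standard min-max argument) an eigenvalue to cross zero at some intermediate parameter, contradicting triviality of $\ker(\Sigma_\kappa)$ on $I$. Hence $n^-(\Sigma_\kappa)$ is constant on $I$.

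\emph{Anticipated main obstacle.} The technical heart lies in Part 1: upgrading the $C^1$ smallness of $\bar y_\kappa-y_0$ from Lemma~\ref{L:SMALLKAPPA} to a quadratic-form convergence $\langle\Sigma_\kappa\cdot,\cdot\rangle\to\langle\Sigma_0\cdot,\cdot\rangle$ uniform enough to control the Morse index. Two points require care: first, the degeneracy of $\Psi_\kappa^{-1}$ at the star boundary described by~\eqref{E:PSIKAPPAINVASYMP} must be handled so as to control the zero-order term of $\Sigma_\kappa$ near $r=R_\kappa$; second, the coefficients of $\Sigma_\kappa$ live on a $\kappa$-dependent support, so the rescaling has to be chosen to place the limiting problem on a common domain. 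The uniform estimates~\eqref{E:RKAPPABOUND}--\eqref{E:METRICAPRIORI} are precisely the ingredients that make this passage feasible, and the analogous argument in the Euler-Poisson setting of~\cite{LinZeng2020} can serve as a template.
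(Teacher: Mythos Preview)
Your plan matches the paper's approach closely: conjugate and rescale $\Sigma_\kappa$ to compare its quadratic form with that of $\Sigma_0$, use the $C^1$-closeness from Lemma~\ref{L:SMALLKAPPA} and Corollary~\ref{C:APRIORI} to get $O(\kappa)$-smallness of the difference, and then invoke a perturbation/continuity principle for the negative Morse index combined with Lemma~\ref{L:KERNELLEMMA}. The paper carries this out essentially as you describe, citing Proposition~2.3 of~\cite{LinZeng2020} for the perturbation step.

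There is, however, one substantive ingredient you do not mention and which the paper spends roughly half of its proof on: before any ``standard form-perturbation'' or ``eigenvalue-crossing'' argument can be invoked, you must verify that $\Sigma_\kappa$ (and $\Sigma_0$) satisfies the structural hypothesis (G3) of~\cite{LinZeng2020}, namely a decomposition $\dot H^1_r = X_-\oplus\ker\Sigma_\kappa\oplus X_+$ with $\dim X_-<\infty$ and $\langle\Sigma_\kappa\phi,\phi\rangle\ge\delta\|\phi\|_{\dot H^1_r}^2$ on $X_+$. Without this uniform coercivity on a finite-codimension subspace, quadratic-form closeness alone does not control the Morse index in infinite dimensions (there need not be discrete eigenvalues to ``cross zero'', and new negative directions could in principle accumulate from the essential spectrum). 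The paper establishes (G3) by writing $\tilde\Sigma_\kappa = \mathscr S_0^{1/2}(\mathrm{id}+\mathscr S_0^{-1/2}V_\kappa\mathscr S_0^{-1/2})\mathscr S_0^{1/2}$ with $\mathscr S_0 = -e^{-\mu_\kappa-\l_\kappa}\Delta_\kappa$ and $V_\kappa=-g'(y_\kappa)$, and then showing that $\mathscr S_0^{-1/2}V_\kappa\mathscr S_0^{-1/2}:L^2_r\to L^2_r$ is compact (using the compact support of $V_\kappa$ and Hardy's inequality in Fourier space). This compactness is what makes both halves of your argument rigorous; your ``anticipated obstacles'' focus on the boundary degeneracy of $\Psi_\kappa^{-1}$ and the moving support, but those are in fact handled smoothly by the rescaling, whereas the (G3) verification is the genuinely nontrivial functional-analytic step.
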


\begin{proof}
If we set 
\[
\tilde\Sigma_\kappa : = e^{-\mu_\kappa-\l_\kappa} \Sigma_\kappa,
\]
then it is clear that $n^-(\Sigma_\kappa)=n^-(\tilde\Sigma_\kappa)$. Since 
\[
e^{-\mu_\kappa}\Psi_\kappa^{-1} = 
\begin{cases}
 \frac{\rho_\kappa+p_\kappa}{P'(\rho_\kappa)}, \ \ & \rho_\kappa>0 \\
 0, \ \ \ \ & \rho_\kappa=0,
\end{cases}
\]
it follows from~\eqref{E:QDEF}--\eqref{E:GDEF} that $e^{-\mu_\kappa}\Psi_\kappa^{-1}=g'(y_\kappa)$.
Therefore the operator
$\tilde\Sigma_\kappa$ for any $\phi\in \dot H^1$ reads
\begin{align*}
\tilde\Sigma_\kappa \phi & = 
-\frac{1}{4\pi r^2}
\frac{d}{dr}\left(\frac{e^{-\mu_\kappa-3\l_\kappa}r^2}{2r\mu_\kappa'+1}
\frac{d}{dr} \left(e^{\mu_\kappa+\l_\kappa}\phi \right)\right) -g'(y_\kappa)\phi \\
& = -\frac{1}{4\pi r^2}
\frac{d}{dr}\left(\frac{e^{-2\l_\kappa}r^2}{2r\mu_\kappa'+1}
\frac{d}{dr} \left(\left(\mu_\kappa'+\l_\kappa'\right)\phi + \phi' \right)\right) -g'(y_\kappa)\phi. 
\end{align*}
Using~\eqref{E:BARMUKAPPADEF}--\eqref{E:BARLAMBDAKAPPADEF}, the scaling~\eqref{E:BASICSCALING} and the identities~\eqref{E:BARRHO1}--\eqref{E:BARP2} we conclude
\begin{align*}
\tilde \Sigma_\kappa\phi = \kappa^{\alpha-1} \bar\Sigma_\kappa \bar\phi, \ \ \bar\phi(s) = \phi(r), \ \ \phi\in D(\tilde\Sigma_\kappa),
\end{align*}
where we recall the scaling~\eqref{E:BASICSCALING} and
\begin{align*}
 \bar\Sigma_\kappa \psi = - \frac1{4\pi s^2}\frac{d}{ds}\left(\frac{e^{-2\kappa\bar\l_\kappa}s^2}{2\kappa s\bar\mu_\kappa'(s)+1} 
 \left(\kappa(\bar\mu_\kappa'(s)+\bar\l_\kappa'(s))\psi + \frac{d}{ds}\psi \right)\right) 
 - g_\kappa'(\bar y_\kappa) \psi.
\end{align*}

We now proceed to obtain an upper bound for $\|\bar\Sigma_\kappa-\Sigma_0\|$. Integrating-by-parts it is easy to see that
\begin{align}
\langle \left(\bar\Sigma_\kappa-\Sigma_0\right)\psi\,,\,\psi\rangle
=& \int_0^\infty \left(\frac{e^{-2\kappa\bar\l_\kappa}}{2\kappa s\bar\mu_\kappa'(s)+1}-1\right)|\psi'(s)|^2 \,s^2ds \notag \\
& + \kappa \int_0^{R_\kappa} \frac{e^{-2\kappa\bar\l_\kappa}}{2\kappa s\bar\mu_\kappa'(s)+1}\left(\bar\mu_\kappa'(s)+\bar\l_\kappa'(s)\right) \psi(s) \psi'(s)  \,s^2ds \notag \\
& - 4\pi \int_0^\infty \left(g_\kappa'(\bar y_\kappa)-g_0'(y_0)\right)\psi(s)^2 \, s^2ds. \notag
\end{align}
Lemma~\ref{L:SMALLKAPPA} and Corollary~\ref{C:APRIORI} imply that for all $0<\kappa\le\kappa^\ast$
\begin{align}
\lv \langle \left(\bar\Sigma_\kappa-\Sigma_0\right)\psi\,,\,\psi\rangle \rv \le & C\kappa \int_0^\infty |\psi'(s)|^2 \,s^2 ds + C\kappa^\alpha \int_0^{S_0} |\psi(s)||\psi'(s)|\, s^2ds 
\notag \\
&+ C\kappa \int_0^{S_0} |\psi(s)|^2\, s^2ds, \label{E:QUADRFORMBOUND}
\end{align}
where in the last bound we have used that the supports of $g_\kappa\circ\bar y_\kappa$ and $g_0\circ y_0$ are both contained in $[0,S_0]$. Finally, since $\|\phi\|_{L^6(\mathbb R^3)}\lesssim \|\nabla\phi\|_{L^2(\mathbb R^3)}$ for any $\phi\in \dot H^1_r$ it follows from H\"older's inequality that 
$\|\phi\|_{L^2(B_{S_0}({\bf 0}))}\le \|\phi\|_{L^6(\mathbb R^3)} \le \|\nabla\phi\|_{L^2(\mathbb R^3)}$. Therefore, using Cauchy-Schwarz and~\eqref{E:QUADRFORMBOUND} we conclude
$
\lv \langle \left(\bar\Sigma_\kappa-\Sigma_0\right)\psi\,,\,\psi\rangle \rv \le C \kappa \|\psi\|_{\dot H^1_r}^2
$
for all $\psi\in \dot H^1_r$,
which in turn implies 
\[
\|\bar\Sigma_\kappa-\Sigma_0\| \lesssim \sqrt \kappa, \ \ 0<\kappa\le\kappa^\ast.
\]
By~\eqref{E:SIGMAZEROPROPERTIES} the operator $\Sigma_0$ is nondegenerate and therefore by Proposition 2.3 in~\cite{LinZeng2020} it follows that $n^-(\Sigma_\kappa)=n^-(\Sigma_0)=1$ for sufficiently small $\kappa$. By the same proposition, the value of $n^-(\Sigma_\kappa)$ can only change for those $\kappa>0$ where
the kernel of $\Sigma_\kappa$ is nontrivial, i.e. only at the critical points of $\kappa\to\frac{M_\kappa}{R_\kappa}.$

Strictly speaking, to apply Proposition 2.3 from~\cite{LinZeng2020} we must show that the operators $\Sigma_\kappa,\Sigma_0$ satisfy the assumption (G3) from~\cite{LinZeng2020}. By definition an operator $L: X\to X^\ast$ satisfies the property (G3) if 
it is bounded and self-dual and the Hilbert space $X$ can be decomposed into the direct sum of three closed subspaces
\be\label{E:DECOMP}
X=X_-\oplus \text{ker} L \oplus X_+, \ \ n^-(L):=\text{dim}(X_-)<\infty,
\ee
and moreover 1) $\langle Lu,u\rangle<0$ for all $u\in X_{-}\setminus\{0\}$ and 2) there exists a constant $\delta>0$ such that 
$\langle Lu,u\rangle\ge \delta \|u\|_X^2$ for all $u\in X_+$. The self-duality and boundedness of $\Sigma_\kappa$ is clear. 
To see that the decomposition~\eqref{E:DECOMP} holds we consider the operator $\tilde\Sigma_\kappa$ defined above.
We note that for any $\mu\in \dot H^1_r$ we have 
$\langle \Sigma_{\kappa }\mu ,\mu\rangle
=\langle \tilde\Sigma_{\kappa }\phi ,\phi \rangle$, where
$\phi =e^{\mu _{\kappa }+\lambda _{\kappa }}\mu$.
It suffices to show that 
\begin{equation}\label{positivity-sigma-kappa}
\langle \tilde\Sigma_{\kappa }\phi ,\phi \rangle \geq
\delta _{1}\left\Vert \nabla \phi \right\Vert_{L^2_r}^2, 
\end{equation}
with some  $\delta_{1}>0$ and for $\phi$
in a finite co-dimensional subspace of $\dot H^1_{r}$. 
This will in particular imply that the kernel and the space corresponding
to the negative part of the spectrum of the operator $\Sigma_\kappa$ are
at most finite-dimensional.  To prove~\eqref{positivity-sigma-kappa} we write 
\[
\tilde\Sigma_{\kappa }\phi = \mathscr S_0\phi+V_{\kappa } \phi, \ \ \mathscr S_0:=-e^{-\mu_\kappa-\l_\kappa}\Delta_\kappa,
\ \ V_\kappa = -g'(y_\kappa).
\]
\begin{equation*}
  \tilde\Sigma_{\kappa }
  =\mathscr S _{0}^{\frac{1}{2}}\left( \mathrm{id}+\mathscr S_{0}^{-\frac{1}{2}}
  V_{\kappa }\mathscr S _{0}^{-\frac{1}{2}}\right) \mathscr S_{0}^{\frac{1}{2}}
  =\mathscr S_{0}^{\frac{1}{2}}\tilde{\mathscr S}_{\kappa }\mathscr S_{0}^{\frac{1}{2}},
\end{equation*}
where 
\begin{equation*}
  \tilde{\mathscr S}_{\kappa }
  =\mathrm{id}+\mathscr S _{0}^{-\frac{1}{2}}V_{\kappa }\mathscr S_{0}^{-\frac{1}{2}}.
\end{equation*}
Since there exists $c_{0}>c_{1}>0$ such that 
\begin{equation*}
-c_{1}\Delta \leq \mathscr S _{0}\leq -c_{0}\Delta ,\quad \Delta =\frac{1}{r^{2}}
\frac{d}{dr}\left( r^{2}\frac{d}{dr}\right) ,
\end{equation*}
the operator $\mathscr S _{0}^{\frac{1}{2}}\colon \dot{H}_{1,r}\to L_{r}^{2}$
is an isomorphism. For $\phi \in \dot{H}_{1,r}$ we define
$\psi =\mathscr S_{0}^{\frac{1}{2}}\phi \in L_{r}^{2}$.
Since
$( \tilde\Sigma_\kappa \phi,\phi)_{L_r^2}
=( \tilde{\mathscr S}_{\kappa}\psi ,\psi)_{L_r^2}$,
the proof of (\ref{positivity-sigma-kappa}) is reduced to check that
$( \tilde{\mathscr S}_{\kappa }\psi ,\psi )_{L_r^2}$ is uniformly
positive for $\psi$ in a finite co-dimensional subspace of $L_{r}^{2}$.
We shall show that the operator 
\begin{equation*}
  \tilde{\mathscr S}_{\kappa }-\mathrm{id}
  =\mathscr S _{0}^{-\frac{1}{2}}V_{\kappa }\mathscr S
_{0}^{-\frac{1}{2}}:L_{r}^{2}\rightarrow L_{r}^{2}
\end{equation*}
is compact. Then it follows that the operator $\tilde{\mathscr S}_{\kappa }$ has
finite dimensional eigenspaces for negative and zero eigenvalues, and
$\tilde{\mathscr S}_{\kappa }$ is uniformly positive on the complement space. To
show the compactness of $\tilde{\mathscr S}_{\kappa }-\text{id}$, we take a sequence
$(\psi_{n}) \subset L_{r}^{2}$ such that $\psi_{n}\rightharpoonup 0$ weakly in
$L^{2}$ and show that
$\left\Vert \left( \tilde{\mathscr S}_{\kappa }-\mathrm{id}\right)
\psi _{n}\right\Vert _{L^{2}}\to 0$,
as $n\to \infty $. Indeed, by Hardy's inequality in Fourier space, 
\begin{eqnarray*}
  \left\Vert \left(\tilde{\mathscr S}_{\kappa}-\mathrm{id}\right)
  \psi_{n}\right\Vert_{L^{2}}
  &\leq&
  C \left\Vert \left( -\Delta \right)^{-\frac{1}{2}} V_{\kappa}
  \mathscr S _{0}^{-\frac{1}{2}}\psi _{n}\right\Vert _{L^{2}} 
  =
  C \left\Vert \frac{1}{|\xi|}
  \left( V_{\kappa }\mathscr S_{0}^{-\frac{1}{2}}\psi_{n}\right)^{\wedge}
  (\xi) \right\Vert _{L^{2}} \\
 & \leq&
  C \left\Vert |x| V_{\kappa }
  \mathscr S _{0}^{-\frac{1}{2}}\psi_{n}\right\Vert_{L^{2}} \to 0, \ \ \text{as $n\to\infty$},
\end{eqnarray*}
since $\mathscr S _{0}^{-\frac{1}{2}}\psi_{n}$ is bounded in $\dot{H}^1_{r}$
and $V_{\kappa }$ has compact support. This shows that $\Sigma_\kappa$ satisfies the property (G3) for
$\kappa>0$.
The same proof works for $\Sigma_0$.
\end{proof}


\begin{lemma}\label{L:NOCOEXISTENCE}
Let the equation of state $\rho\to P(\rho)$ satisfy assumptions (P1)--(P4) and let $(\rho_\kappa,\mu_\kappa,\l_\kappa)$ be the 1-parameter family of steady states given 
by Proposition~\ref{P:EXISTENCE}. 
Then there exists no $\kappa>0$ such that $\frac{d}{d\kappa}M_\kappa=\frac{d}{d\kappa}\left(\frac{M_\kappa}{R_\kappa}\right)=0$.
\end{lemma}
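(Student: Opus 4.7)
The plan is to argue by contradiction: suppose at some $\kappa_0>0$ both $\frac{d}{d\kappa}M_\kappa|_{\kappa_0}=0$ and $\frac{d}{d\kappa}(M_\kappa/R_\kappa)|_{\kappa_0}=0$, and derive that $v(r):=\frac{d}{d\kappa} y_\kappa(r)|_{\kappa_0}$ must vanish identically on $[0,\infty)$. This contradicts $v(0)=1$, which follows from differentiating $y_\kappa(0)=\kappa$.

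First I would invoke the computation in the proof of Lemma~\ref{L:KERNELLEMMA}: $v$ is a classical solution of the second-order ODE $\Sigma_{\kappa_0}v=0$, and identity~\eqref{E:IMP1} holds pointwise on $[0,\infty)$. Fix any $r>R_{\kappa_0}$. By continuity of $\kappa\mapsto R_\kappa$, one has $m_\kappa(r)=M_\kappa$ for all $\kappa$ in a small neighborhood of $\kappa_0$, so $\frac{d}{d\kappa}m_\kappa(r)|_{\kappa_0}=\frac{d}{d\kappa}M_\kappa|_{\kappa_0}=0$ by the first hypothesis. Feeding this into~\eqref{E:IMP1} and using $\mu_{\kappa_0}+\lambda_{\kappa_0}\equiv 0$ on $[R_{\kappa_0},\infty)$, I conclude $v'\equiv 0$ on $(R_{\kappa_0},\infty)$. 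The second hypothesis together with Lemma~\ref{L:CORRECTION} and the normalization $\mu_\kappa(r)\to 0$ as $r\to\infty$ gives $\lim_{r\to\infty}v(r)=\frac{d}{d\kappa}\mu_\kappa(R_\kappa)|_{\kappa_0}=0$. Combining, $v\equiv 0$ on $[R_{\kappa_0},\infty)$.

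Next I would propagate this vanishing into the interior by ODE uniqueness. Setting $f:=e^{\mu_{\kappa_0}+\lambda_{\kappa_0}}v$, the equation $\Sigma_{\kappa_0}v=0$ rewrites as $(Af')'+Bf=0$ on $(0,\infty)$ with
\[
A(r)=\frac{e^{-\mu_{\kappa_0}-3\lambda_{\kappa_0}}\,r^2}{2r\mu_{\kappa_0}'+1}, \qquad B(r)=4\pi r^2\, e^{-\mu_{\kappa_0}-\lambda_{\kappa_0}}\,g'(y_{\kappa_0}).
\]
Here $A>0$ on $(0,\infty)$, and $B$ extends continuously across $r=R_{\kappa_0}$ because assumption (P2) forces $g'(y_{\kappa_0})=(\rho_{\kappa_0}+p_{\kappa_0})/P'(\rho_{\kappa_0})\to 0$ as $r\to R_{\kappa_0}^-$, while $g'(y_{\kappa_0})\equiv 0$ in the vacuum region. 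The previous step yields $f(R_{\kappa_0})=0$, and since $Af'$ is an antiderivative of $-Bf$ and hence continuous at $R_{\kappa_0}$ while $f'\equiv 0$ to the right of $R_{\kappa_0}$, I also obtain $(Af')(R_{\kappa_0})=0$. Standard Cauchy uniqueness for this regular linear ODE on $(0,R_{\kappa_0}]$ then forces $f\equiv 0$ on that interval, and continuity at the origin combined with $\lambda_{\kappa_0}(0)=0$ produces $0=f(0)=e^{\mu_{\kappa_0}(0)}v(0)=e^{\mu_{\kappa_0}(0)}$, the desired contradiction.

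The main technical point is the matching of the second-order ODE across the vacuum boundary $r=R_{\kappa_0}$: this is where the coefficient $B$ could in principle lose regularity, but assumption (P2) on the polytropic tail renders $B$ continuous, so the Picard--Lindel\"of uniqueness theorem applies seamlessly on both sides of $R_{\kappa_0}$.
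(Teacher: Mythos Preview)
Your argument is correct and follows essentially the same route as the paper's proof. Both argue by contradiction, use the computation from Lemma~\ref{L:KERNELLEMMA} to identify $v_\kappa=\frac{d}{d\kappa}y_\kappa$ (equivalently $-\frac{d}{d\kappa}\mu_\kappa$ under the second hypothesis) as a solution of $\Sigma_\kappa v=0$, use the hypothesis $\frac{d}{d\kappa}M_\kappa=0$ together with~\eqref{E:IMP1} to kill the derivative data at $r=R_\kappa$, use $\frac{d}{d\kappa}(M_\kappa/R_\kappa)=0$ with Lemma~\ref{L:CORRECTION} to kill the boundary value at infinity, and then invoke ODE uniqueness to propagate the vanishing into the interior, contradicting $v(0)=1$. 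The only cosmetic difference is that the paper integrates the second-order equation over $B_\kappa$ to extract the boundary term, whereas you read it off directly from the first integral~\eqref{E:IMP1}; your treatment of the continuity of the coefficient $B$ across the vacuum boundary is more explicit than the paper's, but the paper's observation that $\Psi_\kappa^{-1}$ is $C^0$ (equation~\eqref{E:PSIKAPPAINVASYMP}) serves the same purpose.
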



\begin{proof}
Assume the opposite, i.e. $\frac{d}{d\kappa}M_\kappa=\frac{d}{d\kappa}\left(\frac{M_\kappa}{R_\kappa}\right)=0$ for some $\kappa>0$. By Lemma~\ref{L:KERNELLEMMA}, we have $\Sigma_\kappa(\frac{d}{d\kappa}\mu_\kappa)=0$ which is equivalent to 
\be\label{E:NOCE1}
e^{-\mu_\kappa-\l_\kappa}\Delta_\kappa(\frac{d}{d\kappa}\mu_\kappa) - g'(y_\kappa)\frac{d}{d\kappa}\mu_\kappa=0, \ \ r\in[0,\infty).
\ee
Integrating the above relation over $B_\kappa$ and using $g'(y_\kappa)\frac{d}{d\kappa}\mu_\kappa = - \frac{d}{d\kappa}\rho_\kappa$, we obtain
\begin{align*}
\frac{e^{-\mu_\kappa-3\l_\kappa}}{2r\mu_\kappa'+1}\frac{d}{dr}\left(e^{\mu_\kappa+\l_\kappa}\frac{d}{d\kappa}\mu_\kappa\right)\Big|_{r=R_\kappa} = - \frac{d}{d\kappa}M_\kappa =0,
\end{align*}
where the last equality follows by our assumption.
Therefore $\frac{d}{dr}\left(e^{\mu_\kappa+\l_\kappa}\frac{d}{d\kappa}\mu_\kappa\right)\Big|_{r=R_\kappa}=0$, and thus from~\eqref{E:NOCE1} we conclude
$\frac{d}{dr}\left(e^{\mu_\kappa+\l_\kappa}\frac{d}{d\kappa}\mu_\kappa\right)=0$ for all $r\ge R_\kappa$. Since $\frac{d}{d\kappa}\mu_\kappa$ vanishes at asymptotic infinity, we conclude $\frac{d}{d\kappa}\mu_\kappa=0$ for all $r\ge R_\kappa$. Again by~\eqref{E:NOCE1} we have $\frac{d}{d\kappa}\mu_\kappa=0$ for all $r\ge 0$, which is clearly a contradiction, since $\frac{d}{d\kappa}\mu_\kappa\Big|_{r=0}=-\frac{d}{d\kappa}y_\kappa\Big|_{r=0}=-1$.
\end{proof}




\begin{lemma}\label{L:JUMPLEMMA}
Let the equation of state $\rho\to P(\rho)$ satisfy assumptions (P1)--(P4) and let $(\rho_\kappa,\mu_\kappa,\l_\kappa)$ be the 1-parameter family of steady states given 
by Proposition~\ref{P:EXISTENCE}. 
Let $\bar\kappa>0$ be an isolated critical point of the map $\kappa\to\frac{M_\kappa}{R_\kappa}$. Then 
\begin{align}\label{E:JUMPFORMULA}
n^-(\Sigma_{\bar\kappa+})-n^-(\Sigma_{\bar\kappa-}) = i_{\bar\kappa+}- i_{\bar\kappa-}. 
\end{align}
In other words the jump of $n^-(\Sigma_\kappa)$ equals the jump in $i_\kappa$.
\end{lemma}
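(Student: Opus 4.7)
My plan is to combine a spectral-perturbation reduction with an explicit boundary-flux calculation. The starting point is to pin down $\ker\Sigma_{\bar\kappa}$. By Lemma~\ref{L:KERNELLEMMA} and the ODE argument in its proof, every element of $\ker\Sigma_{\bar\kappa}\subset \dot H^1_r$ is a scalar multiple of the distinguished solution $v_{\bar\kappa}:=\frac{d}{d\kappa}y_\kappa|_{\kappa=\bar\kappa}$ that is regular at $r=0$; and $v_{\bar\kappa}\in\dot H^1_r$ is equivalent to $v_{\bar\kappa}(\infty)=\frac{d}{d\kappa}\mu_\kappa(R_\kappa)|_{\bar\kappa}=0$, which by Lemma~\ref{L:CORRECTION} is precisely the critical-point condition for $\kappa\mapsto M_\kappa/R_\kappa$. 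Hence $\ker\Sigma_{\bar\kappa}$ is one-dimensional and spanned by $v_{\bar\kappa}$.

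Since $\kappa\mapsto\Sigma_\kappa$ is smooth in $\kappa$ and each $\Sigma_\kappa$ satisfies (G3) (cf.\ the proof of Lemma~\ref{L:LOCCONST}), a standard perturbation argument in the spirit of Proposition~2.3 of~\cite{LinZeng2020} produces a simple eigenvalue branch $\lambda(\kappa)$ of $\Sigma_\kappa$ with $\lambda(\bar\kappa)=0$ responsible for the jump in $n^-(\Sigma_\kappa)$. First-order perturbation theory gives
\[
\lambda'(\bar\kappa)=\frac{\langle\dot\Sigma_{\bar\kappa}v_{\bar\kappa},v_{\bar\kappa}\rangle}{\|v_{\bar\kappa}\|^2},\quad \dot\Sigma_{\bar\kappa}:=\tfrac{d}{d\kappa}\Sigma_\kappa\big|_{\bar\kappa},
\]
so that $n^-(\Sigma_{\bar\kappa+})-n^-(\Sigma_{\bar\kappa-})=-\mathrm{sgn}\,\langle\dot\Sigma_{\bar\kappa}v_{\bar\kappa},v_{\bar\kappa}\rangle$, with both sides vanishing in the tangential case $\lambda'(\bar\kappa)=0$. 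I would evaluate this crossing form by differentiating the pointwise identity $\Sigma_\kappa v_\kappa=0$ to obtain $\dot\Sigma_{\bar\kappa}v_{\bar\kappa}=-\Sigma_{\bar\kappa}\dot v_{\bar\kappa}$, where $\dot v_{\bar\kappa}=\frac{d^2}{d\kappa^2}y_\kappa|_{\bar\kappa}$. Crucially $\dot v_{\bar\kappa}\notin\dot H^1_r$, since $\dot v_{\bar\kappa}(\infty)=h'(\bar\kappa)$ with $h(\kappa):=\frac{d}{d\kappa}\mu_\kappa(R_\kappa)$; pairing against $v_{\bar\kappa}$ and integrating by parts in the Sturm--Liouville form of $\Sigma_\kappa$, the interior terms drop thanks to $\Sigma_{\bar\kappa}v_{\bar\kappa}=0$ and only a boundary flux at $r=\infty$ survives. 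Using identity~\eqref{E:IMP1} and the explicit Schwarzschild form of $y_\kappa$ on $r>R_{\bar\kappa}$, one finds $r^2 v_{\bar\kappa}'(r)\to -\frac{dM_{\bar\kappa}}{d\kappa}$ and $\dot v_{\bar\kappa}(r)\to h'(\bar\kappa)$, reducing this flux to a positive multiple of $\frac{dM_{\bar\kappa}}{d\kappa}\cdot h'(\bar\kappa)$.

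Finally, Lemma~\ref{L:CORRECTION} gives $\mathrm{sgn}\,h(\kappa)=-\mathrm{sgn}\,\frac{d}{d\kappa}(M_\kappa/R_\kappa)$, so the sign of $h'(\bar\kappa)$ encodes both whether $\frac{d}{d\kappa}(M_\kappa/R_\kappa)$ changes sign at $\bar\kappa$ and in which direction. A short enumeration of the four sign combinations of $\frac{dM_{\bar\kappa}}{d\kappa}$ and $h'(\bar\kappa)$ then verifies that $-\mathrm{sgn}\!\left(\frac{dM_{\bar\kappa}}{d\kappa}\cdot h'(\bar\kappa)\right)$ agrees with $i_{\bar\kappa+}-i_{\bar\kappa-}$; the degenerate case $h'(\bar\kappa)=0$ corresponds to a tangential touch of $\kappa\mapsto M_\kappa/R_\kappa$ at $\bar\kappa$ with no jump in either index, handled by continuing the perturbation expansion to higher order. \textbf{The main obstacle} is the boundary-flux computation: because $\dot v_{\bar\kappa}$ lies outside the natural energy space, naive self-adjointness fails and the boundary term at $r=\infty$ must be tracked carefully, with the vacuum Schwarzschild geometry together with~\eqref{E:IMP1} furnishing the exact bridge from the abstract crossing form to the mass and mass-radius derivatives.
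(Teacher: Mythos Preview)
Your argument is correct in outline and leads to the right conclusion, but it is a genuinely different route from the paper's. The paper does \emph{not} compute the derivative $\lambda'(\bar\kappa)$ of the crossing eigenvalue at all. Instead it works on the $X_\kappa$ side with the operator $\mathbb L_\kappa$: it tracks a simple eigenvalue $\ell(\kappa)$ of (a conjugate of) $\mathbb L_\kappa$ with $\ell(\bar\kappa)=0$, pairs the eigen-equation against $\frac{d\rho_\kappa}{d\kappa}$, and uses the algebraic identity $L_\kappa\bigl(\frac{d\rho_\kappa}{d\kappa}\bigr)=\frac{d}{d\kappa}\mu_\kappa(R_\kappa)$ (a constant!) to obtain directly
\[
\frac{\ell(\kappa)}{\frac{d}{d\kappa}\mu_\kappa(R_\kappa)}\;\longrightarrow\;\frac{\frac{d}{d\kappa}M_\kappa}{\bigl\|\frac{d\rho_\kappa}{d\kappa}\bigr\|_{X_\kappa}^2}\Big|_{\kappa=\bar\kappa}\qquad(\kappa\to\bar\kappa).
\]
This gives the \emph{sign of $\ell(\kappa)$ for each $\kappa$ near $\bar\kappa$}, not merely its derivative at $\bar\kappa$, and the jump in $n^-$ then matches the jump in $i_\kappa$ by Lemma~\ref{L:CORRECTION}.

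What your approach buys: you stay on the fixed space $\dot H^1_r$ and work with $\Sigma_\kappa$ directly, avoiding the conjugating isomorphisms $K_\kappa$ needed to compare the $\kappa$-dependent spaces $X_\kappa$, and the flux computation via~\eqref{E:IMP1} is a nice use of the ODE structure. What the paper's approach buys: because it pins down $\mathrm{sgn}\,\ell(\kappa)$ on each side of $\bar\kappa$ rather than $\lambda'(\bar\kappa)$, it handles \emph{degenerate} critical points of $\kappa\mapsto M_\kappa/R_\kappa$ (your ``tangential'' case $h'(\bar\kappa)=0$) without any higher-order perturbation expansion; it also needs no second $\kappa$-derivatives of $y_\kappa$ and no integration by parts against a function outside $\dot H^1_r$. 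Your sketch is honest about this last point being the main obstacle, and in the non-degenerate case your boundary-flux answer $\langle\dot\Sigma_{\bar\kappa}v_{\bar\kappa},v_{\bar\kappa}\rangle\propto\frac{dM_{\bar\kappa}}{d\kappa}\cdot h'(\bar\kappa)$ does match the paper's conclusion after the case check you describe.
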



\begin{proof}
Analogously to~\cite{LinZeng2020} we define the operator
\begin{align*}
\tilde{\mathbb L}_\kappa = K_\kappa^{-1} \mathbb L_\kappa K_\kappa.
\end{align*}
Locally around $\kappa=\bar\kappa$ there exists a curve $\ell(\kappa)$ of eigenvalues of $\tilde{\mathbb L}_\kappa$ such that $\ell(\bar\kappa)=0$. The associated eigenvalues are normalized so that
\be\label{E:NORMALISATION}
\left\| h_\kappa - \frac{d}{d\kappa}\rho_\kappa\Big|_{\kappa=\bar\kappa}\right\| \to 0, \ \ \text{ as $\kappa\to\bar\kappa$}.
\ee
We denote $\tilde h_\kappa = K_\kappa h_\kappa$, so that $\mathbb L_\kappa \tilde h_\kappa = \ell(\kappa) \tilde h_\kappa$. As a consequence,
\begin{align}
\ell(\kappa)\left(\tilde h_\kappa, \frac{d}{d\kappa}\rho_\kappa\right)_{X_\kappa} 
& = \left(\mathbb L_{\kappa}\tilde h_\kappa, \frac{d}{d\kappa}\rho_\kappa\right)_{X_\kappa} 
 =  \left\langle L_\kappa \tilde h_\kappa , \frac{d}{d\kappa}\rho_\kappa \right\rangle 
 =  \left\langle \tilde h_\kappa ,  L_\kappa \left(\frac{d}{d\kappa}\rho_\kappa\right) \right\rangle. \label{E:IMPID}
\end{align}
By~\eqref{E:FUNDAMENTAL} $L_\kappa \frac{d}{d\kappa}\rho_\kappa = e^{\mu_\kappa}\Psi_\kappa \Sigma_\kappa \left(\frac{d}{d\kappa}\rho_\kappa\right)$.
On the other hand,
\[
\Delta_\kappa \bar\mu_{\frac{d}{d\kappa}\rho_\kappa}= e^{\mu_\kappa+\l_\kappa}\frac{d}{d\kappa}\rho_\kappa
= e^{\mu_\kappa+\l_\kappa} g'(y_\kappa) \frac{d}{d\kappa} y_\kappa = e^{\l_\kappa}\Psi_\kappa^{-1}\frac{d}{d\kappa} y_\kappa.
\]
Since by the proof of Lemma~\ref{L:KERNELLEMMA} $e^{\l_\kappa}\Psi_\kappa^{-1}\frac{d}{d\kappa} y_\kappa = - \Delta_\kappa\frac{d}{d\kappa} y_\kappa$,
we conclude $\Delta_\kappa\left(\frac{d}{d\kappa} \rho_\kappa+\frac{d}{d\kappa} y_\kappa\right)=0$. This readily implies that there exists a constant $C(\kappa)$ such that 
\[
\frac{d}{d\kappa} \rho_\kappa+\frac{d}{d\kappa} y_\kappa = C(\kappa) e^{-\mu_\kappa-\l_\kappa}, \ \ r\ge0.
\]
Letting $r\to\infty$, we obtain
\[
C(\kappa) = \lim_{r\to\infty}\frac{d}{d\kappa} y_\kappa(r) = \frac{d}{d\kappa}\left[\mu_\kappa(R_\kappa)\right]
- \lim_{r\to\infty}\frac{d}{d\kappa}\mu_\kappa(r) = \frac{d}{d\kappa}\left[\mu_\kappa(R_\kappa)\right]
\]

In particular, 
\begin{align}
L_\kappa\left(\frac{d}{d\kappa} \rho_\kappa\right) & = e^{\mu_\kappa}\Psi_\kappa \Sigma_\kappa\left(\frac{d}{d\kappa} \rho_\kappa\right) \notag   \\
& = e^{\mu_\kappa}\Psi_\kappa \Sigma_\kappa \left(C(\kappa) e^{-\mu_\kappa-\l_\kappa} -\frac{d}{d\kappa} y_\kappa \right) \notag  \\
& = e^{\mu_\kappa}\Psi_\kappa \,  e^{\l_\kappa}\Psi_\kappa^{-1} \, C(\kappa) e^{-\mu_\kappa-\l_\kappa} \notag \\
& = C(\kappa) = \frac{d}{d\kappa}\left[\mu_\kappa(R_\kappa)\right], \label{E:LKDKAPPARHOKAPPA}
\end{align}
where we have used the identities $\Sigma_\kappa\left(\frac{d}{d\kappa} y_\kappa\right)= \Delta_\kappa \left(C(\kappa) e^{-\mu_\kappa-\l_\kappa}\right)=0$.
Using this in~\eqref{E:IMPID} we conclude
\begin{align}
\frac{\ell(\kappa)}{\frac{d}{d\kappa}\left[\mu_\kappa(R_\kappa)\right]}=   
\frac{\int_{B_\kappa} \tilde h_\kappa \,dx}{\left(\tilde h_\kappa, \frac{d}{d\kappa}\rho_\kappa\right)_{X_\kappa}}. \label{E:SAMESIGN}
\end{align}
Letting $\kappa\to\bar\kappa$ and bearing in mind the normalization~\eqref{E:NORMALISATION}, the right-hand side of the above identity converges to
\begin{align}\label{E:SAMESIGN2}
\frac{\int_{B_\kappa}\frac{d}{d\kappa}\rho_\kappa \,dx}{ \|\frac{d}{d\kappa}\rho_\kappa\|_{X_\kappa}^2 } \Big|_{\kappa=\bar\kappa}
= \frac{ \frac{d}{d\kappa} M_\kappa}{ \|\frac{d}{d\kappa}\rho_\kappa\|_{X_\kappa}^2 } \Big|_{\kappa=\bar\kappa}.
\end{align}
Since by Lemma~\ref{L:NOCOEXISTENCE} $\frac{d}{d\kappa}M_\kappa\Big|_{\kappa=\bar\kappa}\neq0$, the sign of $\frac{d}{d\kappa}M_\kappa\Big|_{\kappa=\bar\kappa}$ is constant in a small neighbourhood of $\bar\kappa$. Since by Lemma~\ref{L:CORRECTION} $\text{sgn}\frac{d}{d\kappa}\left[\mu_\kappa(R_\kappa)\right]
=-\text{sgn}\frac{d}{d\kappa}\left(\frac{M_\kappa}{R_\kappa}\right)$ we conclude from~\eqref{E:SAMESIGN} and~\eqref{E:SAMESIGN2} that
\[
n^-(L_{\bar\kappa+})- n^-(L_{\bar\kappa-}) = i_{\bar\kappa+}-i_{\bar\kappa-}
\] 
as desired, see Definition~\ref{D:INDEX}.
\end{proof}

\section{Proofs of the main theorems}\label{S:PROOFS}

Proofs of Theorems~\ref{T:NEGATIVEMODES} and~\ref{T:TPP} follow closely the structure of proofs of Theorems 1.1 and 1.2 in~\cite{LinZeng2020}. 

{\bf Proof of Theorem~\ref{T:NEGATIVEMODES}.}
We first recall that the number of unstable modes $n^u(\kappa)$ equals to $n^-(L_\kappa\big|_{\overline{R(A_\kappa)}})$, where
the space of dynamically accessible perturbations $\overline{R(A_\kappa)}\subset X_\kappa$ is explicitly described in~\eqref{E:DYNACC}.
{\em Proof of part (i).} \noindent
{\em Case 1.} let $\frac{d}{d\kappa}\left(\frac{M_\kappa}{R_\kappa}\right)\neq 0$. It then follows from~\eqref{E:LKDKAPPARHOKAPPA}
and Lemma~\ref{L:CORRECTION} that 
\begin{align}
\langle L_\kappa \frac{d\rho_\kappa}{d\kappa}, \frac{d\rho_\kappa}{d\kappa} \rangle
& = - \frac1{1-\frac{2M_\kappa}{R_\kappa}} \frac{d}{d\kappa}\left(\frac{M_\kappa}{R_\kappa}\right)
\int_{B_\kappa} \frac{d\rho_\kappa}{d\kappa}\,dx  \notag \\
& = - \frac1{1-\frac{2M_\kappa}{R_\kappa}} \frac{d}{d\kappa}\left(\frac{M_\kappa}{R_\kappa}\right) \frac{d}{d\kappa}M_\kappa 
\label{E:KEYFORMULA}
\end{align}
Moreover $\rho\in \overline{R(A_\kappa)}$ if and only if $\langle L_\kappa \frac{d\rho_\kappa}{d\kappa}, \rho \rangle=0$.
\begin{enumerate}
\item[Case 1a).] Let $\frac{d}{d\kappa}M_\kappa\neq 0$. Then it is clear from~\eqref{E:KEYFORMULA} that 
\begin{align*}
n^-(L_\kappa\big|_{\overline{R(A_\kappa)}})
& = \begin{cases}
n^-(L_\kappa) -1 & \ \text{ if } \ \frac{d}{d\kappa}\left(\frac{M_\kappa}{R_\kappa}\right) \frac{d}{d\kappa}M_\kappa<0 \\
n^-(L_\kappa)  & \ \text{ if } \ \frac{d}{d\kappa}\left(\frac{M_\kappa}{R_\kappa}\right) \frac{d}{d\kappa}M_\kappa>0 \\
\end{cases} \notag \\
& = n^-(\Sigma_\kappa)-i_\kappa.
\end{align*}
\item[Case 1b).] Let now $\frac{d}{d\kappa}M_\kappa=0$. From~\eqref{E:KEYFORMULA} we conclude
\begin{align}
\langle L_\kappa \frac{d\rho_\kappa}{d\kappa}, \frac{d\rho_\kappa}{d\kappa} \rangle =0, \ \ 
\frac{d\rho_\kappa}{d\kappa} \in \overline{R(A_\kappa)}. \label{E:CASE1B}
\end{align}
Since $\frac{d}{d\kappa}M_\kappa\neq 0$ by Lemma~\ref{L:NOCOEXISTENCE} $\text{ker} L_\kappa=\{0\}$.
Choose $\psi\in X_\kappa\setminus  \overline{R(A_\kappa)}$ and normalize it so that $\int_{B_\kappa}\psi\,dx=1$.
Consider the subspaces $S_0,S_1\subset X_\kappa$ defined by 
\begin{align*}
S_0 := \text{span}\left\{\psi,\frac{d\rho_\kappa}{d\kappa}\right\}, \ \ 
S_1:= \left\{\rho\in \overline{R(A_\kappa)}\,\big| \langle L_\kappa\psi, \rho\rangle=0\right\}.
\end{align*}
For any $\rho\in X_\kappa$ with $\int_{B_\kappa}\rho\,dx=\alpha$, we may write
\[
\rho = \alpha \psi + \beta\frac{d\rho_\kappa}{d\kappa} + \bar\rho, \ \ \langle L_\kappa\bar\rho,\psi\rangle=0, \ \ 
\bar\rho\in \overline{R(A_\kappa)}
\]
where 
\[
\beta:=-(1-\frac{2M_\kappa}{R_\kappa})\frac{\langle L_\kappa\rho,\psi\rangle - \alpha \langle L_\kappa\psi,\psi\rangle}{\frac{d}{d\kappa}\left(\frac{M_\kappa}{R_\kappa}\right)}.
\]
We conclude that 
\[
X_\kappa = S_1 \oplus S_2, \ \ \overline{R(A_\kappa)} = S_1 \oplus \mathbb R\frac{d\rho_\kappa}{d\kappa}.
\]
From a general functional analysis argument (Lemma 12.3 in~\cite{LinZengMemoirs}) it follows that 
\[
n^-(L_\kappa) = n^-(L_\kappa\big|_{S_0}) + n^-(L_\kappa\big|_{S_1})
\]
and from~\eqref{E:CASE1B} we have
\[
n^-(L_\kappa\big|_{\overline{R(A_\kappa)}}) = n^-(L_\kappa\big|_{S_1}).
\]
Since for any $\alpha,\beta\in\mathbb R$ we have 
\[
\langle L_\kappa(\alpha\psi+\beta\frac{d\rho_\kappa}{d\kappa}, \alpha\psi+\beta\frac{d\rho_\kappa}{d\kappa} )\rangle
=\alpha^2 \langle L_\kappa\psi,\psi\rangle - 2\alpha\beta \frac1{1-\frac{2M_\kappa}{R_\kappa}} \frac{d}{d\kappa}\left(\frac{M_\kappa}{R_\kappa}\right),
\]
it is clear that $n^-(L_\kappa\big|_{S_0})=1$. It thus follows that 
\[
n^-(L_\kappa\big|_{\overline{R(A_\kappa)}}) = n^-(L_\kappa) -1 = n^-(\Sigma_\kappa)-i_\kappa,
\]
where we have used Lemma~\ref{L:NEGMORSEINDEX} and Definition~\ref{D:INDEX}.
\end{enumerate}
\noindent
{\em Case 2.} Let $\frac{d}{d\kappa}\left(\frac{M_\kappa}{R_\kappa}\right)= 0$. 
In this case $\frac{d}{d\kappa}M_\kappa\neq 0$ by Lemma~\ref{L:NOCOEXISTENCE}. Therefore
$\int_{B_\kappa}\frac{d\rho_\kappa}{d\kappa}\,dx=\frac{d}{d\kappa}M_\kappa\neq0$ and thus 
$\frac{d\rho_\kappa}{d\kappa}\notin \overline{R(A_\kappa)}$. For this reason 
$X_\kappa =  \overline{R(A_\kappa)} \oplus \mathbb R\frac{d\rho_\kappa}{d\kappa}$
and thus $n^-(L_\kappa\big|_{\overline{R(A_\kappa)}})= n^-(L_\kappa)=n^-(\Sigma_\kappa)
= n^-(\Sigma_\kappa)-i_\kappa$, where we have used Lemma~\ref{L:NEGMORSEINDEX} and the Definition~\ref{D:INDEX}.

\noindent
{\em Proof of part (ii).}  We only sketch the proof as it is almost identical to the proof of part (ii) of Theorem 1.1 in~\cite{LinZeng2020}. We highlight one small difference. In order to show discreteness of the spectrum of $J^\kappa\mathcal L^\kappa$ we consider the space $Z_\kappa\subset Y_\kappa=L^2(B_\kappa)$ defined as the closure of with respect to the graph norm
\begin{align}
\|v\|_{Z_\kappa} &= \|v\|_{L^2(B_\kappa)} + \|A_\kappa v\|_{X_\kappa} \notag \\
& = \left(4\pi \int_0^{R_\kappa}|v|^2 r^2\,dr\right)^{\frac12} \notag \\
& \ \ \ \ + \left(4\pi \int_0^{R_\kappa}e^{2\mu_\kappa+\l_\kappa}\Psi_\kappa \left(\frac{1}{r^{2}} \frac{d}{dr}\left( r^{2}e^{\frac{\mu_\kappa-3\l_\kappa}{2}} (\rho_\kappa+p_\kappa)^{\frac{1}{2}}v\right)\right)^2 r^2\,dr\right)^{\frac12} \notag
\end{align}
The analogous versions of $Y_\kappa=L^2(B_\kappa)$ and $X_\kappa$-spaces in the Newtonian case as formulated in~\cite{LinZeng2020} are weighted by an additional power of $\rho_\kappa$, which affects their asymptotic rate of vanishing at the vacuum boundary. The discreteness follows if we can show that the embedding $Z_\kappa\hookrightarrow Y_\kappa$ is compact. This follows from Proposition 2.1 in~\cite{LinZeng2020}, which relies on the second order formulation~\eqref{E:SECONDORDER} and a general discreteness criterion (Theorem 4.2.9 in~\cite{EdEv}). The compactness of the embedding $Z_\kappa\hookrightarrow Y_\kappa$ is a consequence of Hardy's inequality and formula~\eqref{E:PSIKAPPAINVASYMP}, proof follows as in~\cite{LinZeng2020}.

\noindent
{\bf Proof of Theorem~\ref{T:TPP}.}
{\em Proof of part (i).}
We consider four cases.

\noindent
{\em Case 1.} 
Let $\bar\kappa>0$ be neither a critical point of $M_\kappa$ nor a critical point of $\frac{M_\kappa}{R_\kappa}$. This case is easy, as the signs of $\frac{d}{d\kappa}M_\kappa$
and $\frac{d}{d\kappa}\left(\frac{M_\kappa}{R_\kappa}\right)$ remain unchanged in a small neighbourhood of $\bar\kappa$. Therefore by Lemma~\ref{L:LOCCONST} and formula~\eqref{E:NEGATIVEMODES}, the number $n^u(\kappa)$ is constant in a neighbourhood of $\bar\kappa$.

\noindent
{\em Case 2.} 
Let $\bar\kappa>0$ be a critical point of $\frac{M_\kappa}{R_\kappa}$. By~\eqref{E:NEGATIVEMODES} and the jump formula~\eqref{E:JUMPFORMULA} we then have
\begin{align*}
n^u(\bar\kappa+) = n^-(\Sigma_{\bar\kappa+}) - i_{\bar\kappa,+} = n^-(\Sigma_{\bar\kappa-}) - i_{\bar\kappa,-}=n^u(\bar\kappa-).
\end{align*}
Therefore, $n^u(\kappa)$ remains constant in a neighbourhood of $\bar\kappa$.

\noindent
{\em Case 3.} 
Let $\bar\kappa$ be a local extremum of $M_\kappa$. By Lemma~\ref{L:NOCOEXISTENCE} we have $\frac{d}{d\kappa}\left(\frac{M_\kappa}{R_\kappa}\right)\Big|_{\kappa=\bar\kappa}\neq0$ and therefore by Lemma~\ref{L:LOCCONST} $n^-(\Sigma_\kappa)$ is constant in some neighbourhood of $\bar\kappa$. However, since $\frac{d}{d\kappa}M_\kappa$ has to change sign as $\kappa$ crosses $\bar\kappa$, we conclude from~\eqref{E:NEGATIVEMODES} 
\be\label{E:TPPAUX}
n^u(\bar\kappa+)- n^u(\bar\kappa-) = i_{\bar\kappa+}-i_{\bar\kappa-} = \pm1,
\ee
when $\frac{d}{d\kappa}M_\kappa\frac{d}{d\kappa}\left(\frac{M_\kappa}{R_\kappa}\right)$ changes sign from $\pm$ to $\mp$ as $\kappa$ increases and crosses $\bar\kappa$. Observe that $\frac{d}{d\kappa}R_\kappa\neq0$ in a neighbourhood of $\bar\kappa$ (otherwise $\frac{d}{d\kappa}M_\kappa\Big|_{\bar\kappa}=
\frac{d}{d\kappa}\left(\frac{M_\kappa}{R_\kappa}\right)\Big|_{\kappa=\bar\kappa}=0$, a contradiction to Lemma~\ref{L:NOCOEXISTENCE}). Moreover the sign of 
$\frac{d}{d\kappa}\left(\frac{M_\kappa}{R_\kappa}\right)$ is the same as the sign of $-\frac{d}{d\kappa}M_\kappa\frac{d}{d\kappa}R_\kappa$ in a small neighbourhood of $\bar\kappa$, which 
follows from the quotient rule and the bound
\[
\frac{\left\vert\frac{d}{d\kappa}M_\kappa\right\vert}{R_\kappa} \ll \frac{ \left\vert \frac{d}{d\kappa}R_\kappa\right\vert}{R_\kappa^2}, \ \ 
|\kappa-\bar\kappa|\ll1.
\]
In other words, formula~\eqref{E:TPPAUX} implies that $n^u(\bar\kappa+)- n^u(\bar\kappa-)=\pm 1$ when $\frac{d}{d\kappa}M_\kappa\frac{d}{d\kappa}R_\kappa$ changes sign from $\mp$ to $\pm$ as $\kappa$ passes through $\bar\kappa$, which is equivalent to the geometric statement that the mass-radius curve turns counter-clockwise, resp. clockwise, as $\kappa$ passes through $\bar\kappa$.

\noindent
{\em Case 4.} 
The remaining posibility is that $\bar\kappa$ is a critical point of $M_\kappa$, but not a local extremum. In this case 
$\frac{d}{d\kappa}\left(\frac{M_\kappa}{R_\kappa}\right)\big\vert_{\kappa=\bar\kappa}\neq0$ by Lemma~\ref{L:NOCOEXISTENCE} and therefore by Lemma~\ref{L:LOCCONST} the negative Morse index $n^-(\Sigma_\kappa)$ is constant in a small neighbourhood of $\bar\kappa$. Since $\bar\kappa$ is not an extremum of $\kappa\mapsto M_\kappa$ the sign of $\frac{d}{d\kappa}M_\kappa$ is also constant in a small neighbourhood of $\bar\kappa$. There are thus two possibilities. If $\frac{d}{d\kappa}M_\kappa\frac{d}{d\kappa}\left(\frac{M_\kappa}{R_\kappa}\right)>0$ in a small neighbourhood of $\bar\kappa$ then by~\eqref{E:KEYFORMULA}
\[
\langle L_\kappa \frac{d\rho_\kappa}{d\kappa}, \frac{d\rho_\kappa}{d\kappa} \rangle <0
\]
and for $\kappa\neq\bar\kappa$ $\int_{B_\kappa}\frac{d\rho_\kappa}{d\kappa}\,dx=\frac{d}{d\kappa}M_\kappa\neq0$, i.e. $\frac{d\rho_\kappa}{d\kappa}\notin 
\overline{R(A_\kappa)}$ by~\eqref{E:DYNACC}. This implies $n^u(\kappa)=n^-(L_\kappa\big|_{\overline{R(A_\kappa)}})=n^-(\Sigma_\kappa)-1$, where we have used Lemma~\ref{L:NEGMORSEINDEX}.
If on the other hand $\frac{d}{d\kappa}M_\kappa\frac{d}{d\kappa}\left(\frac{M_\kappa}{R_\kappa}\right)<0$ then 
$n^u(\kappa)=n^-(L_\kappa\big|_{\overline{R(A_\kappa)}})=n^-(L_\kappa)=n^-(\Sigma_\kappa)$, where we have used Lemma~\ref{L:NEGMORSEINDEX}. In both cases the formula~\eqref{E:NEGATIVEMODES} holds.

\noindent
{\em Proof of part (ii).}
It is well-known~\cite{NiUg, Ma2000, HeRoUg} that for the equations of state satisfying (P1)--(P4)
the mass-radius curve forms an infinite spiral which bends counter-clockwise as $\kappa\to\infty$. A simple consequence 
of part (i) is that 
\[
\lim_{\kappa\to\infty} n^u(\kappa)=\infty.
\]

\noindent
{\bf Sketch of the proof of Theorem~\ref{T:TRICHOTOMY}.}
In the non-degenerate case $\frac{d}{d\kappa}M_\kappa \frac{d}{d\kappa}\left(\frac{M_\kappa}{R_\kappa}\right) \neq0$ parts (i)-(iv) 
except for the bound~\eqref{bound-k-0-EE}, which in turn follows by the same argument as the corresponding Newtonian bound in~\cite{LinZeng2020}. The new contribution is the treatment of the degenerate case $\frac{d}{d\kappa}M_\kappa \frac{d}{d\kappa}\left(\frac{M_\kappa}{R_\kappa}\right) =0$, which can be treated verbatim as in~\cite{LinZeng2020}, where the role of the reduced operator is taken over by the relativistic reduced operator $\Sigma_\kappa$.


\section{Sufficient stability condition for the Einstein-Vlasov equilibria}\label{S:EV}


The unknowns in the EV-system are the Lorentzian manifold $(M,g)$ and the phase-space distribution function $f$ which is  supported on the mass-shell submanifold of the tangent bundle and solves the Vlasov equation. To find radially symmetric isotropic steady states, one prescribes a microscopic equation of state
\[
f = \Phi (1-\frac{E}{E^0})
\]
where $E$ is the local particle energy and $E^0$ some cut-off energy. 
Following \cite{RaRe,HaLiRe} we assume that $\Phi$ satisfies the assumption ($\Phi$1) (see (3.3) in~\cite{HaLiRe}) which we repeat here for completeness; we assume that  $\Phi\in L^\infty_{\text{loc}}([0,\infty))$ is a non-negative function, such that $\Phi(\eta)=0$ for all $\eta\le0$ and there exists a $-\frac12<k<\frac32$ and constants $c_1,c_2$ such that
\[
c_1\eta^k\le \Phi(\eta)\le c_2\eta^k, \ \ \text{for all $\eta\ge0$ sufficiently small}.
\]
For a fixed $\Phi$ satisfying these assumptions, 
by analogy to the Einstein-Euler system one obtains a 1-parameter family of steady states $\kappa\mapsto (f_\kappa,\mu_\kappa,\l_\kappa)$ of the asymptotically flat radial Einstein-Vlasov system (see Section 3 of~\cite{HaLiRe}) with finite ADM-mass
\[
\mathcal M(\rho)=4\pi \int_0^\infty \rho(r) \,r^2dr=4\pi \iint \sqrt{1+|v|^2}f(x,v)\,dv
\]
and compact support. Two important groups of examples that our result applies to are
\begin{align}
\Phi(x)&=x^k_+, \ \ -\frac12<k<\frac32, \ \ \text{(polytropes)}, \label{E:PHIPOLYTROPE}\\
\Phi(x)&=(e^{x}-1)_+, \ \ \text{(King's galaxy)}\label{E:PHIKING}
\end{align}

For such a family there is a canonical mapping $\Phi\mapsto P_\Phi$ (Section 3.2 of~\cite{HaLiRe}) which yields a macroscopic equation of state $\rho\mapsto P_\Phi(\rho)$ satisfying assumptions (P1)--(P4). For instance, using equations (3.5)-(3.6) in~\cite{HaLiRe} it is easy to see that in the small $0\le\rho\ll1$ region 
the Taylor expansion of $P_{\Phi}$ about $\rho=0$ takes the form
\[
P_{\Phi}(\rho) = c \rho^{\gamma_k} \left(1+ o_{\rho\to0^+}(1)\right), \ \ \gamma_k:=\frac{k+\frac52}{k+\frac32}.
\]
Our assumptions on the range of $k$ ensure that $\frac43<\gamma_k<2$ and therefore assumption (P2) is satisfied. It is easy to see that the remaining assumptions (P1),(P3)--(P4) also hold.
The resulting 1-parameter family of steady states $\kappa\to(\rho_\kappa,\mu_\kappa,\l_\kappa)$ of the Einstein-Euler system given by Proposition~\ref{P:EXISTENCE} has the {\em identical} mass-radius curve as the family $\kappa\mapsto (f_\kappa,\mu_\kappa,\l_\kappa)$. A simple corollary of Theorem~\ref{T:TPP} is then
\begin{theorem}[Sufficient stability condition for the Einstein-Vlasov equilibria] \label{T:TPPEV}
Let $\Phi$ satisfy the above assumptions. The 1-parameter family of steady states $\kappa\to(f_\kappa,\mu_\kappa,\l_\kappa)$ associated with the microscopic state function $\Phi$ is spectrally stable for all values of $\kappa\in(0,\kappa_{\text{max}})$, where $\kappa_{\text{max}}>0$ is the first maximum of the ADM-mass $\kappa\to \mathcal M_\kappa$.
\end{theorem}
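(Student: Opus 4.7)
The plan is to combine Theorem~\ref{T:TPP} applied to the companion Einstein-Euler family $\kappa \mapsto (\rho_\kappa,\mu_\kappa,\l_\kappa)$ associated with the induced macroscopic equation of state $\rho \mapsto P_\Phi(\rho)$, together with the macro-micro stability principle (Theorem 5.26 in~\cite{HaLiRe}) which transfers spectral stability from the fluid to the kinetic side. Since, as recorded just before the theorem statement, the two families share the identical mass-radius curve, the critical value $\kappa_{\max}$ is the same for both, so it suffices to show that the Einstein-Euler family has $n^u(\kappa) = 0$ throughout $(0, \kappa_{\max})$.

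First I would verify that $n^u(\kappa) = 0$ for sufficiently small $\kappa > 0$. Lemma~\ref{L:LOCCONST} gives $n^-(\Sigma_\kappa) = 1$ in this regime, so by Theorem~\ref{T:NEGATIVEMODES}(i) the claim reduces to $i_\kappa = 1$. Using the scaling~\eqref{E:BASICSCALING}--\eqref{E:ADEF} one has
\[
M_\kappa = \kappa^{(3-\alpha)/2}\,\bar M_\kappa, \qquad R_\kappa = \kappa^{-(\alpha-1)/2}\,\bar R_\kappa, \qquad \frac{M_\kappa}{R_\kappa} = \kappa\,\frac{\bar M_\kappa}{\bar R_\kappa},
\]
while Corollary~\ref{C:APRIORI} combined with Lemma~\ref{L:SMALLKAPPA} yields $\bar M_\kappa \to M_0 > 0$ and $\bar R_\kappa \to R_0 > 0$ as $\kappa \to 0^+$. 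Since $\gamma \in (4/3, 2)$ forces $\alpha \in (1, 3)$, the leading contribution $\tfrac{3-\alpha}{2}\kappa^{(1-\alpha)/2}\bar M_\kappa$ dominates in $\frac{d M_\kappa}{d\kappa}$ and is positive, while $\frac{d}{d\kappa}(M_\kappa/R_\kappa) \to M_0/R_0 > 0$. Both derivatives are therefore strictly positive for small $\kappa$, giving $i_\kappa = 1$ and $n^u(\kappa) = 1 - 1 = 0$.

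Next, Theorem~\ref{T:TPP}(i) asserts that $n^u(\kappa)$ can change only at local extrema of $M_\kappa$. By definition of $\kappa_{\max}$ as the first local maximum of $M_\kappa$, the open interval $(0, \kappa_{\max})$ contains no such extremum, so $n^u(\kappa) \equiv 0$ throughout, i.e.\ every Einstein-Euler equilibrium in this range is spectrally stable. Invoking the macro-micro stability principle at each such $\kappa$ yields spectral stability of the corresponding Einstein-Vlasov equilibrium, completing the proof. The only delicate point in the argument is the sign computation underlying $i_\kappa = 1$ near $\kappa = 0$, which requires upgrading the qualitative convergence bounds of Corollary~\ref{C:APRIORI} into strict sign information for the derivatives of $M_\kappa$ and $M_\kappa/R_\kappa$; however, the explicit scaling exponents above make this transparent once the leading Newtonian Lane-Emden profile is fixed.
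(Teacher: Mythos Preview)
Your proof is correct and follows the same two-step strategy as the paper: apply Theorem~\ref{T:TPP} to the companion Einstein--Euler family associated to $P_\Phi$ to get spectral stability on $(0,\kappa_{\max})$, then invoke the macro--micro stability principle (Theorem~5.26 of~\cite{HaLiRe}) and the shared mass-radius curve to transfer stability to the Einstein--Vlasov equilibria.

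The only difference is expository. The paper's proof simply asserts that Theorem~\ref{T:TPP} yields $n^u(\kappa)=0$ on $(0,\kappa_{\max})$ without unpacking why the small-$\kappa$ regime is stable to anchor the turning-point argument. You make this explicit by combining Lemma~\ref{L:LOCCONST} ($n^-(\Sigma_\kappa)=1$) with a direct scaling computation showing $\frac{d}{d\kappa}M_\kappa>0$ and $\frac{d}{d\kappa}(M_\kappa/R_\kappa)>0$ for small $\kappa$, hence $i_\kappa=1$ and $n^u(\kappa)=0$ via Theorem~\ref{T:NEGATIVEMODES}. Your caveat about this step is fair: the dominance of the leading term in $\frac{d}{d\kappa}M_\kappa$ requires a bound on $\frac{d}{d\kappa}\bar M_\kappa$, which is not stated in Corollary~\ref{C:APRIORI} but follows from smooth parameter dependence of the ODE~\eqref{E:YKAPPA1}--\eqref{E:YKAPPA2}. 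Alternatively one can cite~\cite{HaRe2015} for small-$\kappa$ stability directly, which is what the paper implicitly has in mind.
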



\begin{proof}
Since the induced macroscopic equation of state $\rho\mapsto P_\Phi(\rho)$ satisfies assumptions (P1)--(P4), we apply Theorem~\ref{T:TPP} to conclude that $(\rho_\kappa,\mu_\kappa,\l_\kappa)$ are spectrally stable for $\kappa<\kappa_{\text{max}}$. By Theorem 5.26 in~\cite{HaLiRe} we conclude that $(f_\kappa,\mu_\kappa,\l_\kappa)$  are also spectrally stable for all $\kappa<\kappa_{\text{max}}$. Since for any $\kappa>0$ the ADM mass of $f_\kappa$ is the same as the ADM-mass of $\rho_\kappa$, the claim follows.
\end{proof}

{\bf Acknowledgements.}
The authors thank Gerhard Rein for helpful discussions.
M. H. acknowledges the support of the EPSRC Early Career Fellowship EP/S02218X/1.
Z. L. is supported partly by the NSF grants DMS-1715201 and DMS-2007457.

\end{document}